\documentclass[USenglish]{article}	
\usepackage[utf8]{inputenc}				
\usepackage[big,online]{dgruyter}	
\usepackage{lmodern} 
\usepackage{microtype}
\usepackage[numbers,square,sort&compress]{natbib}
\usepackage{comment}
\usepackage{xcolor}
\usepackage{enumitem} 

\RequirePackage{amsthm,amsmath,amsfonts,amssymb}
\RequirePackage{graphicx}


\newtheorem{theorem}{Theorem}[section]
\newtheorem{lemma}[theorem]{Lemma}
\theoremstyle{remark}

\newcommand{\nl}{\newline}
\newcommand{\pl}{\parallel}
\newcommand{\openr}{\hbox{${\rm I\kern-.2em R}$}}
\newcommand{\openn}{\hbox{${\rm I\kern-.2em N}$}}
 
\newcommand{\argmin}[1]{\underset{#1}{\operatorname{argmin}}\;} 
\usepackage{mathtools}
\DeclarePairedDelimiterX{\set}[1]{\lbrace}{\rbrace}{#1}
\newcommand{\indicator}[1]{\mathbb{I}_{\set*{#1}}}
\DeclarePairedDelimiterX{\norm}[1]{\lVert}{\rVert}{#1}
\newcommand\numberthis{\addtocounter{equation}{1}\tag{\theequation}}

\newcommand{\newbrace}[1]{\left\{#1\right\}}
\newcommand{\newbracket}[1]{\left[#1\right]}
\newcommand{\newparethensis}[1]{\left(#1\right)}
\newcommand{\ex}{\mathbb{E}}


\theoremstyle{dgthm}

\theoremstyle{dgdef}
\newtheorem{remark}{Remark}

\def\acknowledgementsname{Acknowledgments}
\newenvironment{acks}[1][\acknowledgementsname]{\section*{#1}}{\par}

\makeatletter
  \AtBeginDocument{%
    \setbox\dg@wordmark=\hbox{}%
  }
\makeatother

\usepackage{etoolbox}   
\makeatletter
  \patchcmd{\ps@plain}
    {\rlap{\vrule\@width\textwidth\@height-6\p@\@depth7.5\p@}}
    {}
    {}{}
\makeatother

\begin{document}


\title{Super Ensemble Learning Using the Highly-Adaptive-Lasso}
\runningtitle{Meta-Learning using Highly-Adaptive-Lasso}

\author*[1]{Zeyi Wang}
\author[2]{Wenxin Zhang}
\author[3]{Brian S Caffo}
\author[3]{Martin Lindquist}
\author[2]{Mark van der Laan} 
\runningauthor{Z. Wang et al.}
\affil[1]{\protect\raggedright 
Division of Biostatistics, School of Public Health, University of California, Berkeley, e-mail: wangzeyi@berkeley.edu; Department of Statistics, Oklahoma State University, e-mail: zeyi.wang@okstate.edu}
\affil[2]{\protect\raggedright 
Division of Biostatistics, School of Public Health, University of California, Berkeley, e-mail: wenxin\_zhang@berkeley.edu, laan@berkeley.edu}
\affil[3]{\protect\raggedright 
Division of Biostatistics, Johns Hopkins Bloomberg School of Public Health, e-mail: bcaffoweb@jhu.edu,	mlindquist@jhu.edu}

\abstract{
We introduce the Meta Highly-Adaptive-Lasso Minimum Loss Estimator (M-HAL-MLE), a novel ensemble approach for estimating functional parameters of realistically modeled data distribution from independent and identically distributed observations. Given $J$ initial estimators, candidate ensembles are generated by finite-sectional-variation cadlag functions. Using $V$-fold cross-validation, the M-HAL-MLE selects the optimal cadlag ensemble minimizing the cross-validated empirical risk, with the sectional variation bound as a tuning parameter. The final estimator, M-HAL super-learner, is obtained by averaging ensemble compositions across folds. In contrast, the oracle ensemble and oracle estimator are defined by minimizing the population excess risk relative to the true function. We establish following theoretical properties: 1) the M-HAL super-learner converges to the oracle estimator at rate $n^{-2/3}$ in excess risk, up to log-n factors; 2) by appropriate undersmoothing, target features of the M-HAL super-learner are asymptotically linear for corresponding target features of the oracle estimator; 3) the excess risk between the oracle estimator and true function, along with the difference between their target features, is generally second-order. Simulations validate the theoretical results, demonstrating effectiveness in high-dimensional settings. We further illustrate the method in a real-data application involving mediation analysis of functional MRI from human pain studies.
}

\keywords{
Asymptotically linear estimator, canonical gradient, cross-validation, dimension reduction, efficient influence curve, highly adaptive lasso (HAL),  influence curve, meta-learning, minimum loss estimation, pathwise differentiable target parameter, sectional variation norm, super-efficiency, super-learning, transformation of variables.
}

\maketitle
	
\section{Introduction}

We consider estimation of a functional  parameter of a 
realistically modeled data distribution based on observing independent and identically distributed observations of a $d$-dimensional Euclidean valued random variable.
Suppose that the true function is a $k$-variate real valued function defined 
as the minimizer over its parameter space of the expectation of a specified uniformly bounded loss function. 

{\bf HAL-MLE:}
In our previous work \citep{vanderLaan15,vanderLaan17,Benkeser&vanderLaan16,Bibaut&vanderLaan19} we showed that if the parameter space consists of $k$-variate real valued cadlag functions with a universal bound on the sectional variation norm \citep{Gill&vanderLaan&Wellner95,vanderLaan15}, then  the MLE over this parameter space, selecting the variation norm bound with cross-validation, converges to the true function at a rate $n^{-2/3}(\log n)^d$ w.r.t. the loss-based dissimilarity, also called excess risk. Moreover, computation of this estimator corresponds with minimizing an empirical risk over a linear combination of spline-basis functions under the constraint that the $L_1$-norm of the coefficient vector is bounded by this sectional variation norm bound, making it a high dimensional lasso estimation problem.

A general theory for undersmoothing sieve based estimators is developed in \citet{shen97,shen07}, and a powerful demonstration of such an estimator also presented in \citep{Newey94}.  
We have shown that an undersmoothed HAL-MLE that selects the $L_1$-norm  larger than the cross-validation selector (but still bounded) according to a specified global undersmoothing criterion  is asymptotically efficient \citep{Bickeletal97} for any pathwise differentiable target parameter, under weak regularity conditions \citep{vanderLaan&Benkeser&Cai19}.
Thus, the smooth target features of the HAL-MLE were shown to be asymptotically efficient for the target features of the true function,  if the sectional variation norm satisfies a global undersmoothing criterion, even though this HAL-MLE is not targeted towards that target feature.  

{\bf Super learner with small family of ensembles:}
In other past research, we have proposed super-learning as a general optimal approach to learn a true function \citep{vanderLaan&Dudoit03,vanderVaart&Dudoit&vanderLaan06,vanderLaan&Dudoit&vanderVaart06,vanderLaan&Polley&Hubbard07,Chpt3}. The super-learner selects a library of estimators, defines a collection of ensembles such as all convex combinations, and chooses the ensemble that minimizes the cross-validated empirical risk of the ensemble specific candidate estimator. For simplicity, let's consider the case that we use $V$-fold sample splitting, so that the cross-validated empirical risk is defined as the average over the $V$ sample splits in training and validation sample of the empirical mean over the validation sample of the loss function at the ensemble specific estimator trained on the training sample. In particular, the discrete super-learner (the collection of  ensembles is trivially defined as the set of estimators in the library) simply selects the estimator in the library that minimizes the cross-validated empirical risk. Given the cross-validated selected ensemble, one could either rerun the selected ensemble of estimators based on the whole data set, or one can simply take the average over the $V$ sample splits of the selected ensemble  of the estimators based on the training sample only. The later is immediate available as a by product of the cross-validated empirical risk of the cross-validation selector of the ensemble. In this article we will represent the super-learner as this average across $V$ sample splits.

{\bf Asymptotic equivalence of super-learner (with small family of ensembles) with oracle selected ensemble:}
Under some constraints on the size of the family of ensembles, the excess risk of the super-learner divided by the excess risk of the oracle selected ensemble converges to 1 as sample size increases (see references above): we say that the cross-validation selected ensemble (i.e, super-learner) is asymptotically equivalent with the oracle selected ensemble. The oracle ensemble minimizes the excess risk among all ensemble specific estimators applied to  data sets of the size of the training sample. In particular, if one uses the discrete super-learner, then it is asymptotically equivalent with the oracle selected estimator that minimizes the excess risk, as long as the number of candidate estimators is polynomial in sample size. When the family of ensembles is defined as all convex combinations, the asymptotic equivalence will require the number of estimators in the library to grow slowly with sample size (i.e., $\log n$).
By including the HAL-MLE in its library, it follows that the discrete  super-learner converges at least as fast as $n^{-2/3}(\log n)^d$, while being adaptive to unknown structure of the true function if other candidate estimators are tailored to such structure. However, this gain of adaptivity of this super-learner relative to HAL-MLE comes also at a price in the sense that pathwise differentiable target features of this super-learner (even when it includes the HAL-MLE in its library) are not asymptotically linear estimators of the true features, so that one cannot provide formal statistical inference. 
 
 {\bf Meta-HAL super-learner:}
In this article, we consider an aggressive  super-learner that further extends the family of ensembles to a class of multivariate real valued cadlag functions with a bound on its sectional variation norm. We refer to this super-learner as the Meta-HAL  super-learner since the computation of the cross-validation selector of the ensemble corresponds with applying the HAL-MLE at a meta-level data set in which each observation $O_i$ is coupled with a cross-fitted realization of the library of estimators. 

{\bf Size of ensemble family controlled by sectional variation norm bound:}
Due to the large size of this family of ensembles, asymptotic equivalence of the cross-validation selector with the oracle selector has not been established. This makes our results novel within the current super-learner literature. In fact, if the $J$ estimated functions represent a zero-loss transformation of the coordinate system for true function, then  the oracle estimator equals the true function. 
Potential  overfitting of the M-HAL-MLE in finite samples is effectively  controlled by the choice of sectional variation norm, making it a robust and powerful super-learner. 
In particular, 
we can select this sectional variation norm bound $C$  with a discrete super-learner with a library of $C$-specific M-HAL super-learners across a range of values for $C$, thereby guaranteeing that our cross-validation selector $C_n=C_{n,cv}$ will be asymptotically equivalent with the oracle selector of $C$. 

{\bf Due to large family of ensembles, candidate estimators can be a flexible data-adaptive coordinate-transformation:}
By selecting such a large family of ensembles, the library of estimators does not need to be restricted to good estimators of the functional parameter, but could include  fixed functions, simple parametric model based estimators, and intermediate layer outputs from externally pretrained networks,   not necessarily approximating or even aiming to approximate the true function. That is, one could also view the library of estimators as a proposed data-adaptive transformation of the coordinates $x\in \openr^k$ for the true function. For example, one extreme choice of transformation would be to simply propose $k$ fixed functions $f_1(x),\ldots,f_k(x)$ of $x$ so that the coordinate-transformation is invertible and thereby represents a zero-loss transformation. Adding a super-learner as the $k+1$-th function potentially allows a relatively simple (low sectional variation norm) yet zero-loss transformation. 
A fit from a previous study can also be added, transferring external model knowledge. 
Of course, one still has the option to define the library as $k$ highly adaptive estimators targeting the the true function. 
Therefore, we more generally refer to the library of estimators as a data-adaptive coordinate-transformation for the true function, emphasizing that they need not directly estimate the true function itself. 

{\bf M-HAL super-learner behaves as an HAL-MLE for a transformed and potentially simpler data problem 
}
Our results demonstrate that the M-HAL super-learner of the true function will generally behave as an HAL-MLE  of the $J$-dimensional oracle  cadlag function for a transformed data problem in which the $J$ coordinates for this oracle function are the cross-fitted $J$  estimated functions applied to the original $x$. Due to the transformed coordinates, the true oracle cadlag function can be a simpler function of the $J$ variables   than the true function is of the original $k$-dimensional input variables (potentially in much higher dimensions), which can be formally compared by contrasting the sectional variation norms of the two functions. This allows our method to reduce the complexity of the estimation problem, leading to a more robust and potentially superior estimator. For example, in plug-in estimation of target features of the true function, this strategy reduces the classical Donsker class conditions on functions of the original $k$-dimensional input variables to similar conditions on estimated ensembles of the $J$ transformed coordinates, leading to relaxed conditions and improved performance in complex, high-dimensional data settings

{\bf Target features of M-HAL super-learner:}
 In this article, analogue to our work on the HAL-MLE   \citep{vanderLaan&Benkeser&Cai19},  we will also analyze the undersmoothed M-HAL super-learner, selecting the sectional variation norm bound in the meta-HAL-MLE larger than the value suggested by cross-validation. In this case, we establish that under the meta-level analogue of the global  undersmoothing criterion for the HAL-MLE in \citep{vanderLaan&Benkeser&Cai19},  target features of the M-HAL-SL are either asymptotically efficient, or potentially super-efficient, depending on the coordinate-transformation implied by library of $J$ estimators. Either way, we will establish asymptotic linearity with known influence curve implied by canonical gradient of the pathwise derivative of the target parameter, and thereby allowing for formal statistical inference.

Thus, contrary to the regular discrete or convex ensemble super-learner, this highly aggressive super-learner using an undersmoothed HAL-MLE in the meta-learning step results in asymptotically  linear plug-in estimators of target features. 
Therefore, the M-HAL super-learner is not only at least as powerful as a regular (small family of ensembles) super-learner, but, when undersmoothed, its smooth features  are asymptotically linear, super-efficient or efficient.

\subsection{Organization of article}
In Section \ref{section2} and Section \ref{section3} we formally define the statistical estimation problem, the relevant quantities, and the M-HAL super-learner. We also provide a transformation/reduction of the observed data implied by the library of $J$ estimators, and corresponding statistical estimation problem addressed by the meta-learning step, where the latter treats the $J$-dimensional vector of cross-fitted estimates as a fixed $J$-dimensional coordinate-transformation. The latter reduced data statistical estimation problem will essentially make our study of the M-HAL-MLE cross-validation selector equivalent with our previous study of the HAL-MLE, and will therefore naturally guide our analysis as a meta-level analogue of our work on the HAL-MLE 
In Section \ref{sec:rate} we analyze the excess risk of the M-HAL super-learner. The excess risk will be  decomposed as a sum of the excess risk of the M-HAL super-learner relative to the oracle estimator (i.e., the best possible ensemble/cadlag function of the $J$ candidate estimators among all cadlag functions with sectional variation norm bounded by our bound),
 and the excess risk of the oracle estimator  relative to the true function (which would be zero if the coordinate-transformation is a zero-loss transformation).
In Section \ref{section5} we analyze a target feature of  M-HAL super-learner as estimator of the  target feature of  true function. The difference of this plug-in estimator with the true target estimand is decomposed as the sum of 1) the difference of plug-in M-HAL super-learner with  the plug-in oracle estimator, and 2) the difference of the plug-in of  oracle estimator  and the true target estimand. The latter is analyzed separately in Section \ref{sec:difference}, and is shown to be a second order difference, while the first is analyzed analogue to our previous analysis of the plug-in HAL-MLE \citep{vanderLaan&Benkeser&Cai19}.
In Section \ref{sec:TSM},
we demonstrate our general results of the M-HAL super learner and its plug-in estimation, by applying them in the nonparametric estimation of a treatment specific mean. 
In Section \ref{new_section6} and \ref{sec:data} we present numerical experiment results and a real-world data application of the proposed method to high-dimensional mediation analysis with fMRI data in pain studies. 
We conclude with a discussion in Section \ref{section8}. 

Basic outline of proofs are presented in the main article, while more technical results are presented in a  self-contained way in the Appendix. Appendix \ref{AppendixA1} and Appendix \ref{AppendixA2} presents a notation index that should help the reader, even though notation will be introduced in main article. 
Appendix \ref{sec:undersmoothing}  provides that undersmoothing makes the M-HAL super-learner solve the cross-validated empirical mean of the efficient influence curve equation, representing the only real challenge for establishing asymptotic linearity. In particular, we discuss in detail the two undersmoothing conditions (\ref{suffassumptionatarget}) and (\ref{a3}).
Appendix \ref{AppendixD} generalizes the consistency and asymptotic linearity results to the targeted M-HAL SL. Finally, Appendix \ref{AppendixE} provides deeper understanding of the undersmoothing condition (\ref{suffassumptionatarget}), and that it can be easily  achieved with bounded selectors of the sectional variation norm.

\section{Statistical Model}\label{section2}

Suppose we observe $n$ independent and identically distributed copies $O_1,\ldots,O_n$ with probability measure $P_0$ that is known to be an element of  a statistical model ${\cal M}$. We assume that $O\in \openr^d$ is a $d$-variate bounded random variable. Let $P_n$ be the empirical probability measure that puts mass $1/n$ on each $O_i$. 
We consider a functional parameter $Q:{\cal M}\rightarrow {\cal Q}\equiv \{Q(P): P\in {\cal M}\}$, where the parameter space ${\cal Q}$ represents a set of multivariate $[0,1]$-valued functions; that is, $Q(P): \openr^k\rightarrow [0,1], \forall P \in \mathcal{M}$. In practice, the target function is often defined on a bounded Euclidean set; therefore, without loss of generality we assume that variables can be standardized so that $Q(P): [0, 1]^k\rightarrow [0,1], \forall P \in \mathcal{M}$. 
Let  $L:{\cal Q}\rightarrow L^2(P_0)$ be a mapping from the parameter space ${\cal Q}$ into a set of $d$-variate real valued  functions of $O$ satisfying $Q_0=\arg\min_{Q\in {\cal Q}}P_0L(Q)$. We refer to $L$ as a loss function, where  $L(Q)(o)$ evaluates a loss of candidate $Q$ at observation $o$. 
Let $d_0(Q,Q_0)\equiv P_0L(Q)-P_0L(Q_0)$ be the loss-based dissimilarity, which denotes the excess risk of an estimator $Q$ with respect to the minimum risk of the class ${\cal Q}$.
It is assumed that $M_1=\sup_{Q,o}\mid L(Q)(o)\mid<\infty$ and $M_{20}=\sup_{Q\in {\cal Q}}\frac{P_0(L(Q)-L(Q_0))^2}{d_0(Q,Q_0)}<\infty$, so that the cross-validation selector is well behaved and generally asymptotically equivalent with the oracle selector \citep{vanderLaan&Dudoit03,vanderVaart&Dudoit&vanderLaan06}.

We will consider a pathwise differentiable target parameter $\Psi:{\cal M}\rightarrow\openr$. It is assumed that it is pathwise differentiable at $P$ with canonical gradient $D^*(P)$, and that $\Psi(P)$ only depends on $P$ through $Q(P)$. Let $G:{\cal M}\rightarrow{\cal G}$ be a functional parameter so that $D^*(P)$ only depends on $P$ through $Q(P)$ and $G(P)$. 
We will also denote $D^*(P)$ with $D^*(Q,G)$ and $\Psi(P)$ with $\Psi(Q)$.  Let $R_{20}(Q,G,Q_0,G_0)\equiv \Psi(Q)-\Psi(Q_0)+P_0D^*(Q,G)$ be the exact second order remainder implied by the canonical gradient. Let $L_1(G)$ be a loss function for $G_0=\arg\min_{G\in {\cal G}}P_0L_1(G)$, and $d_{01}(G,G_0)=P_0L_1(G)-P_0L_1(G_0)$. 

Our goal is to construct an estimator of $Q_0$, and to also use it to construct asymptotically linear estimators of $\Psi(Q_0)$, possibly for arbitrary $\Psi$ in a large class of smooth features.

\subsection{Data-adaptive Coordinate-Transformation with $V$-fold Cross-Validation}

We split the $n$ observations $O_1,\ldots,O_n$ in $V$ blocks, and for each choice $v$ of a block, let $P_{n,v}^1$ be the empirical measure of the observations $O_i$ in that block, and let $P_{n,v}$ be the empirical measure of the observations $O_i$ in the other $V-1$ blocks. We refer to $P_{n,v}^1$ and $P_{n,v}$ as the empirical measures of the validation and training sample for the $v$-th sample split, respectively. 

Let $\hat{{\bf Q}}=(\hat{Q}_j: j=1,\ldots,J)$ be a collection of $J$ algorithms, possibly a library of estimators of $Q_0$, but it is only required that $\hat{Q}_j$ maps data into an element of the parameter space ${\cal Q}$. For an empirical measure $P_{n,v}$ of a training sample extracted from $\{O_1,\ldots,O_n\}$, ${\bf Q}_{n,v}=\hat{\bf Q}(P_{n,v})\in {\cal Q}^J$ represents its realization applied to the empirical probability measure $P_{n,v}$. Let ${\cal M}_{np}$ denote the set of discrete empirical measures based on an arbitrary subset of $\{O_1,\ldots,O_n\}$ so  that $\hat{\bf Q}:{\cal M}_{np}\rightarrow {\cal Q}^J$ represents a vector of estimators that can be applied to arbitrary training samples extracted from $O_1,\ldots,O_n$. 
Let ${\bf Q}_n$ be a function of $(v,x)$ defined by ${\bf Q}_n(v,x)={\bf Q}_{n,v}(x)$.
Note that $\pmb Q_{n, v}: [0, 1]^k \to [0, 1]^J$ constructs a data-adaptive coordinate-transformation for $Q_0$, which is a realization of the algorithm $\hat{\bf Q}$ applied to the $v$-th training sample.

For observation $O_i$, let $v_i$ be the index of the block that contains the $i$-th observation $O_i$, $i=1,\ldots,n$. We represent our data set with $(v_1,O_1),\ldots,(v_n,O_n)$. One can  represent this sample as an  i.i.d. sample from the true distribution $P_0^{\bar{V}}$  of  a random variable $(\bar{V},O)$, where the conditional distribution of $O$, given $\bar{V}=v$, equals $P_0$, and $\bar{V}\sim U(1,\ldots,V)$ is uniform on $\{1,\ldots,V\}$. In this manner, for a function $f(v,o)$ of $(v,o)$, we can write $P_0^{\bar{V}} f=\frac{1}{V}\sum_{v=1}^V \int f(v, o)dP_0(o)$.
Let $P_n^{\bar{V}}$ be the empirical measure putting  mass $1/n$ on each $(v_i,O_i)$, $i=1,\ldots,n$.

In the next section, we will construct HAL-MLE with meta-level data, in which each observation $O_i$ is coupled with the cross-fitted algorithm realizations $\pmb Q_{n, v_i}$. 

\subsection{Family of Ensembles}

We define the set of ensembles as a class of cadlag functions with a bound on the sectional variation norm. 
That is, consider a collection ${\cal Q}^r\subset {\cal D}_{C^u}[0,1]^J$ of $J$-variate real valued cadlag functions $Q^r:[0,1]^J\rightarrow [0,1]$, with a uniform bound $C^u$ on its sectional variation norm. 
We have $Q^r(x)=Q^r(0)+\sum_{s\subset\{1,\ldots,J\}}\int \phi_{s,u_s}(x)dQ^r_s(u_s)$, where $\phi_{s,u_s}(x)=I(x_s\geq u_s)$, $Q^r_s(u_s)=Q^r(u_s,0_{-s})$ is the $s$-specific section that sets the coordinates in $s^c$ equal to zero \citep{vanderLaan17}.  The sectional variation norm of $Q^r$ is defined as $\pl Q^r\pl_v^*=\mid Q^r(0)\mid+\sum_{s\subset\{1,\ldots,J\}}\int \mid dQ^r_s(u_s)\mid$. 

For any $Q^r\in {\cal Q}^r$ and ${\bf Q}\in {\cal Q}^J$, a $Q^r$-ensemble of ${\bf Q}$ is given by 
$x \mapsto Q^r\circ {\bf Q}(x)=Q^r({\bf Q}(x))$. 
We assume that any ensemble estimator constructed by $Q^r \in {\cal Q}^r$ respects the parameter space ${\cal Q}$.  
Specifically, for each $v=1,\ldots,V$, 
\begin{equation}\label{respectparspace}
\{Q^r\circ{\bf Q}_{n,v}:Q^r\in {\cal Q}^r\}\subset {\cal Q}.
\end{equation} 

In practice, one can consider additive HAL models with respect to a subset of the most nonparametric ensemble space, ${\cal Q}^r$, for potentially better finite sample performance. For example, a hyperparameter can restrict the size of the $s$ section so that $\{Q^r \in {\cal Q}^r: Q^r(x) = Q^r(0)+\sum_{s\subset\{1,\ldots,J\}, |s| \leq U }\int \phi_{s,u_s}(x)dQ^r_s(u_s)\}\subset {\cal Q}^r$ involves only up to $U$-th order interactions. 
Such hyperparameters, that further restrict the class of ensembles for the additive HAL models, can be decided jointly along with $C^u$ by a cross-fitted discrete super learner similar to Section \ref{sec:selector}.

\subsection{Data Reduction Implied by Cross-fitted Coordinate-Transformation}\label{sec:data_reduction}

Note we can view $L(Q^r\circ {\bf Q}_n): (v, O) \mapsto L(Q^r \circ {\bf Q}_{n, v})(O)$ as a function of $(v, O)$. 
In many settings $X$ is a subvector of $O$, and $L(Q)(O)$ involves evaluating the function $Q$ at $X$, and $Q^r$ is only a function of original coordinates $X \subset O$ through transformations ${\bf Q}_{n,v}(X)$. 
In general, for any given $L$ and ${\bf Q}_n$, there exists a data reduction of $(\bar{V},O)$, 
$$O^r=O^r(\bar{V},O),$$ 
such that $L(Q^r\circ{\bf Q}_n)(v,O)$ depends on $(v,O)$ or ${\bf Q}_{n,v}(X)$ only through $(v,O^r(v,O))$. 
This allows us to define a loss for $Q^r$ with the reduced data, 
\begin{align}
L^r(Q^r)(v,O^r(v,O)) \equiv L(Q^r\circ{\bf Q}_n)(v,O). \label{eq:dimr_loss}
\end{align}
For example, if $L(Q)(X,Y)=(Y-Q(X))^2$, $O=(X,Y)$, then $L(Q^r\circ{\bf Q}_n)(v,O)=(Y-Q^r\circ{\bf Q}_{n,v}(X))^2$ depends only on $(v,O)$ or ${\bf Q}_{n,v}(X)$ through $(v,O^r(v,O))$ with $O^r(v,O)=(X^r_v\equiv {\bf Q}_{n,v}(X),Y)$; so we can define $L^r(Q^r)(v,O^r)=(Y-Q^r(X^r_v))^2$.

Note that $(\bar{V},O^r)$ represents a resulting data reduction of $(\bar{V},O)$ implied by the cross-fitted coordinated transformations ${\bf Q}_n$. Let $d^r$ be the dimension of $O^r$.

Let $P_0^r$ be the distribution of $(\bar{V},O^r)$ implied by $P_0^{\bar{V}}$, treating ${\bf Q}_n$ as a fixed function. Similarly, let ${\cal M}^r=\{P^r: P\in {\cal M}\}$ be the statistical model for the distribution $P_0^r$ of $(\bar{V},O^r)$ implied by the statistical model ${\cal M}$ for $P_0$. 
Let $P_n^r$ be the empirical probability measure of $(v_i,O^r_i)$, where $O^r_i=Q^r(v_i,O_i)$ is the data reduction of $O_i$ implied by ${\bf Q}_{n,v_i}$.
We also use notation ${\cal M}^r_v=\{P^r_v:P\in {\cal M}\}$, where $P^r_v$ denotes the distribution of $O^r_v=O^r(v,O)$ implied by $O\sim P$. 
Let $Q^r: {\cal M}^r \to {\cal Q}^r$ be a parameter of $P^r \in {\cal M}^r$ such that $Q^r(P^r) = \argmin{Q^r \in {\cal Q}^r} P^r L^r(Q^r)$. Note that $Q^r(P^r_0)$ is an excess risk minimizer for the data reduction $(\bar{V},O^r) \sim P^r_0$ when treating the coordinate transformations ${\bf Q}_n$ as fixed, similar to $Q_0$ for the full data $O \sim P_0$; we denote it as the oracle ensemble, $Q^r_{0, n} = Q^r(P^r_0)$.

\section{Meta-Level Learning Using Highly Adaptive Lasso} \label{section3}

\subsection{M-HAL-MLE}

For a given ensemble $Q^r\in {\cal Q}^r$, define the cross-validated empirical risk of corresponding $Q^r$-specific candidate estimator $Q^r\circ \hat{\bf Q}:{\cal M}_{np}\rightarrow {\cal Q}$  of $Q_0$ by
\[
P_n^{\bar{V}} L(Q^r\circ{\bf Q}_n)=\frac{1}{V}\sum_{v=1}^V P_{n,v}^1 L(Q^r\circ{\bf Q}_{n,v}).\]
Define the $C_n$-specific meta-level HAL-MLE (M-HAL-MLE) as the cross-validation selector among all ensembles of $J$ estimators, 
\[
Q_n^r=\argmin{Q^r\in {\cal Q}^r,\pl Q^r\pl_v^*<C_n}P_n^{\bar{V}}L(Q^r\circ{\bf Q}_n)=\argmin{Q^r\in {\cal Q}^r,\pl Q^r\pl_v^*<C_n}\frac{1}{V}\sum_{v=1}^V P_{n,v}^1 L(Q^r\circ{\bf Q}_{n,v}). \]
The oracle ensemble targeted by the M-HAL-MLE is given by 
\[
Q^r_{0,n}=\argmin{Q^r\in {\cal Q}^r}P_0^{\bar{V}} L(Q^r\circ {\bf Q}_n)=\argmin{Q^r\in {\cal Q}^r}\frac{1}{V}\sum_{v=1}^V P_0 L(Q^r\circ{\bf Q}_{n,v}).\]
Note that $Q^r_n$ is the empirical estimator of $Q^r_{0,n}$ defined by replacing $P_0^{\bar{V}}$ by its empirical counterpart $P_n^{\bar{V}}$, so that $Q^r_n$ is a regular HAL-MLE of $Q^r_{0,n}$.

\subsection{M-HAL-SL}

For each coordinate-transformation ${\bf Q}_{n,v}$, the M-HAL-MLE ensemble defines an estimator $Q_{n,v}\equiv Q_n^r\circ{\bf Q}_{n,v}$ of $Q_0$, $v=1,\ldots,V$. We also use the notation $Q_n\equiv Q_n^r \circ {\bf Q}_n$ for the function $Q_n(v,x)\equiv Q_{n,v}(x)$ which codes each of these $V$ estimates of $Q_0$.
The $C_n$-specific meta-level HAL super-learner (M-HAL-SL) of $Q_0$ refers to either $Q_n$ or to its average across splits given by \[
 \bar{Q}_n(x)\equiv P_n^{\bar V} Q_n(\bar V, x) =  \frac{1}{V}\sum_{v=1}^V Q_n(v,x).\]
Sometimes, to emphasize its dependence on the selector $C_n$, we denote these estimators with $Q_n^{C_n}$ and $\bar{Q}_n^{C_n}$, respectively.
Note that, if the parameter space ${\cal Q}$ is convex, $\bar{Q}_n\in {\cal Q}$.

Similarly, the oracle ensemble defines an oracle estimator, $Q_{0,n}\equiv Q^r_{0,n}\circ {\bf Q}_{n}$, so that $Q_{0,n}(v,x) = Q_{0, n, v}(x) \equiv Q^r_{0,n}\circ{\bf Q}_{n,v}(x)$,  $v=1,\ldots,V$. 
We will view this function $Q_{0,n}$ as a parameter of the distribution $P_0^{\bar{V}}$ of $(\bar{V}, O)$.
In addition, we use the notation $\bar{Q}_{0,n}(x)\equiv \frac{1}{V}\sum_{v=1}^V Q_{0,n}(v,x)$.
We note that $Q_n$ and $\bar{Q}_n$ are estimators of $Q_{0,n}$ and $\bar{Q}_{0,n}$, respectively.

The excess risk of the M-HAL-SL $Q_n$ is defined as \[
d_0^{\bar{V}}(Q_n,Q_0)=P_0^{\bar{V}}L(Q_n)-P_0^{\bar{V}}L(Q_0)=\frac{1}{V}\sum_{v=1}^V\{ P_0 L(Q_{n,v})-P_0L(Q_0)\}.\]
The excess risk of the M-HAL-SL $\bar{Q}_n$ is given by $d_0(\bar{Q}_n,Q_0)=P_0L(\bar{Q}_n)-P_0L(Q_0)$, and for convex loss functions $L(Q)$, we have \[
d_0(\bar{Q}_n,Q_0)\leq {d}_0^{\bar{V}}(Q_n,Q_0).\]
Therefore, it suffices to analyze the excess risk $d_0^{\bar{V}}(Q_n,Q_0)$ of the M-HAL-SL $Q_n$, which is decomposed as
\begin{eqnarray*}
d_0^{\bar{V}}(Q_n,Q_0)&=&P_0^{\bar{V}} L(Q_n^r\circ {\bf Q}_n)-P_0^{\bar{V}} L(Q_{0,n}^r\circ{\bf Q}_{n})\\
&&+P_0^{\bar{V}} L(Q_{0,n}^r\circ{\bf Q}_n)-P_0^{\bar{V}} L(Q_0)\\
&\equiv& d_0^{\bar{V}}(Q_n,Q_{0,n})+d_0^{\bar{V}}(Q_{0,n},Q_0).
\end{eqnarray*}
The first term, $d_0^{\bar{V}}(Q_n,Q_{0,n})$, involves comparing M-HAL-MLE $Q_n^r$ with the oracle ensemble  $Q_{0,n}^r$. Viewing it as a function of a given set of cross-fitted coordinate-transformations, we also denote this as $d_0(Q_n^r, Q_{0, n}^r)$, the loss-based dissimilarity of an HAL-MLE $Q_n^r$ of $Q_{0, n}^r$. 
The second term, $d_0^{\bar{V}}(Q_{0,n},Q_0)$, represents the dissimilarity between the oracle selected ensemble of   ${\bf Q}_n$ and the true function $Q_0$. 

\subsection{M-HAL-SL Plug-in Estimation}

We will see that for estimation of smooth features $\Psi(Q_0)$ of $Q_0$, one can either use the smooth feature of the average $\bar{Q}_n$,   $\Psi(\bar{Q}_n)$, or use the average across the  sample splits of $Q_{n,v}$, 
$$\Psi^{\bar V}(Q_n) \equiv \frac{1}{V}\sum_v \Psi(Q_{n,v}), $$
as the difference is second order.
Just as our decomposition above, we will also decompose $\Psi^{\bar{V}}(Q_n)-\Psi(Q_0)$ as the sum of the difference of the target feature of $Q_n$ and the oracle estimator $Q_{0,n}$, $\Psi^{\bar{V}}(Q_n)-\Psi^{\bar{V}}(Q_{0,n})$, and the difference of the target feature of the oracle estimator and true target estimand, $\Psi^{\bar{V}}(Q_{0,n})-\Psi(Q_0)$. Similarly, $\Psi(\bar{Q}_n)-\Psi(Q_0)=\Psi(\bar{Q}_n)-\Psi(\bar{Q}_{0,n})+\Psi(\bar{Q}_{0,n})-\Psi(Q_0)$. Both terms will be analyzed separately, where the ``bias'' (conditional on the training sample, it is truly a bias) of the target feature of the oracle estimator, $\Psi(Q_{0,n})-\Psi(Q_0)$, will be shown to be second order, or even zero when $\pmb Q_{n, v}$ are zero-loss coordinate-transformations.

\subsection{Equivalent Formulation of Statistical Parameters Using Reduced Data}

Given a data reduction $(\bar V, O^r) \sim P^r_0$ of $(\bar V, O) \sim P^{\bar V}_0$ as specified in Section \ref{sec:data_reduction}, M-HAL-MLE of oracle ensemble is regular HAL-MLE with the reduced data. Specifically, with the corresponding reduced data loss function $L^r: \mathcal{Q}^r \to L^2(P_0^r)$ that satisfies (\ref{eq:dimr_loss}), we have that the oracle ensemble 
\begin{align*}
    Q^r_{0, n} 
    = & \argmin{Q^r \in {\cal Q}^r} P_0^r L^r(Q^r)
\end{align*}
can be represented as a functional of the reduced data distribution $P^r_0$, and our M-HAL-MLE of $Q^r_{0, n}$, 
\begin{align*}
    Q_n^r
    = & \argmin{Q^r\in{\cal Q}^r,\pl Q^r\pl_v^*<C_n}P_n^r L^r(Q^r)
    , 
\end{align*}
is given by the HAL-MLE fitted with the reduced data $(v_i, O^r_i) \sim P_n^r$. 
This demonstrates that $Q_n^r$ can be implemented as a standard HAL-MLE based on the reduced data $(v_i,O^r_i)$, $i=1,\ldots,n$, and loss function $L^r(Q^r)$. Denote the loss-based dissimilarity with reduced data as $d_0^r(Q^r,Q_{0,n}^r) = P_0^r L^r(Q^r) - P_0^r L^r(Q_{0,n}^r)$.
In our model ${\cal M}^r$ for reduced data $(\bar{V},O^r)$ implied by the cross-fitted coordinate-transformation ${\bf Q}_n$, $\Psi^{\bar{V}}(Q^r \circ {\bf Q}_n)$ can be viewed as a parameter $\Psi^r:{\cal Q}^r\rightarrow\openr$ defined by 
$$\Psi^r(Q^r)=\frac{1}{V}\sum_{v=1}^V \Psi(Q^r\circ{\bf Q}_{n,v}).$$ 
So now we have both $d_0^{\bar{V}}(Q_n,Q_{0,n})=d_0^r(Q^r_n,Q_{0,n}^r)$ and $\Psi^{\bar{V}}(Q_n)=\Psi^r(Q^r_n)$. 
Therefore, the performance of the averaged plug-in estimator $\Psi^{\bar V}(Q_n)$ is decided by $Q_n^r$ as HAL-MLE of oracle ensemble $Q_{0,n}^r$, treating the coordinate-transformation ${\bf Q}_n$ as fixed.

Let $D^r(P^r)(\bar{V},O^r)$ be the canonical gradient of $\Psi^r$ at $P^r$. 
Let $G^r(P^r) \in {\cal G}^r$ be a nuisance parameter such that $D^r(P^r)=D^r(Q^r,G^r)$. 
We assume that $G^r(\cdot)$ given $\bar V = v$ defines an element $G_v^r$ in $\cal G$, the parameter space for $G$. 
For example, if $G$ is a function of $X$ and $G^r$ is a function of ${\bf Q}_n(\bar V, X)$, then conditional on $\bar V = v$ we may define $G_v^r(x) = G^r({\bf Q}_{n, v}(x)) \in \mathcal{G}$; although in practice the definition is flexible depending on $G^r$ and $G$. 
We assume the following link between $D^r(P^r)$ and $D^*(P)$:  
\begin{equation}\label{linkDrDstar}
D^r(Q^r,G^r)(v,O^r(v,O))=D^*(Q^r\circ{\bf Q}_{n,v},G^r_v)(O).
\end{equation}
Condition (\ref{linkDrDstar}) essentially states that for a fixed $v$, the reduced data structure $O^r_v$ has the same structure as $O$, 
and as a result $P^r_v$ has same model structure as $P$, 
so that the pathwise derivative of $Q^r\rightarrow \Psi(Q^r\circ{\bf Q}_{n,v})$ has the same structure as $Q\rightarrow \Psi(Q)$.  For example, the general formula for the canonical gradient of $EE(Y\mid A=1,W)$ with a nonparametric model remains the same in terms of the (conditional) densities of $Y$, $A$, and $W$, regardless of the dimension or definition of $W$. 
Similarly, the canonical gradient of a treatment specific mean $EY_{\bar{a}}$  for general longitudinal data structure $O=(L(0),A(0),\ldots,L(K),A(K),Y=L(K+1))$ has the same general form in terms of all the conditional, densities regardless of the dimension of definition of $L(k)$.

Recall that the true $Q^r_{0,n} = Q^r(P_0^r)$ and $G^r_{0,n} \equiv G^r(P_0^r)$ are indexed by a  subscript $n$ to emphasize dependence on coordinate-transformation ${\bf Q}_n$. Let $L_1^r(G^r)$ be a loss function for $G_{0,n}^r=\arg\min_{G^r\in {\cal G}^r}P_0^rL_1^r(G^r)$ and let $d_{01}^r(G^r, G^r_{0, n}) = P_0^r L_1^r(G^r) - P_0^r L_1^r(G^r_{0, n})$, where ${\cal G}^r$ is the parameter space for $G^r$. Let $R_{20}^r(Q^r,G^r,Q_{0,n}^r,G_{0,n}^r)=\Psi^r(Q^r)-\Psi^r(Q_{0,n}^r)+P_0^r D^r(Q^r,G^r)$ be the exact second order remainder. By (\ref{linkDrDstar}) it follows that
\[
R_{20}^r(Q^r,G^r,Q_{0,n}^r,G_{0,n}^r)=\frac{1}{V}\sum_{v=1}^V R_{20}(Q^r\circ{\bf Q}_{n,v},G^r_v,Q_{0,n}^r\circ{\bf Q}_{n,v},G^r_{0,n,v}).\]

\subsection{Cross-Validation Selector of the Sectional Variaition Norm}\label{sec:selector}
Note that M-HAL-MLE and M-HAL-SL are fitted as functions of $P_n$ across a $V$-fold sample splitting $(P_{n, v}, P_{n, v}^1: v = 1, \dots, V)$ of $P_n$, and are indexed by a hyperparameter $C$ for the sectional variation norm bound enforced on the ensembles $Q^r\in {\cal Q}^r$. Therefore, we can denote the algorithms by $Q_n^{r, C}: {\cal M}_{np}\rightarrow {\cal Q}^r$ and $\bar Q_n^C: {\cal M}_{np}\rightarrow {\cal Q}$, such that 
$$\bar Q_n^C(P_n) = \frac{1}{V}\sum_{v=1}^V Q^{r, C}_n(P_n) \circ {\bf Q}_n(P_{n, v}).$$ 
We can define the cross-validation selector of $C$ by the performance of M-HAL-SL, 
\[
 C_{n,cv}=\argmin{C} \frac{1}{V}\sum_{v=1}^V P_{n,v}^1 L(\bar Q^C_n(P_{n,v}) ).\] 
Note that this involves double cross-validation similar to the cross-validated risk of a regular super-learner. This cross-validation selector $C_{n,cv}$ is asymptotically equivalent with the oracle selector of $C$ optimizing excess risk. 

If we fix the realized coordinate-transformations ${\bf Q}_{n,v}$, $v=1,\ldots,V$, then the M-HAL-SL can be defined as an algorithm using fixed functions, independent of what data is provided. 
This defines an approximation of the M-HAL-SL as 
$$\bar Q_n^{C, fast}(P_n) = \frac{1}{V}\sum_{v=1}^V Q^{r, C}_n(P_n) \circ {\bf Q}_{n, v}.$$ 
The cross-validation selector for this algorithm is then given by:
   \[
   C_{n,cv}^{fast}=\argmin{C} \frac{1}{V}\sum_{v=1}^V P_{n,v}^1 L({\bar{Q}}_n^{C,fast}(P_{n,v}) ).
   \]
This algorithm still requires $V$ times running the HAL-MLE ${Q}^r_n$ at the meta-level on the training samples $P_{n, v}$,
but it does not require rerunning the estimators $\hat{Q}_j$, $j=1,\ldots,J$. 
This could be used as a fast approximation of the double cross-validation selector with the risk of slightly overfitting the univariate hyperparameter $C$. We believe this criterion will still provide a good ranking and thereby selector $C_{n,cv}^{fast}$, especially as an initial value of an undersmoothing selector for the purpose of plug-in estimations. 

Lastly, one can directly optimize the performance of the ensemble function $Q^r\in \mathcal{Q}^r$, rather than that of the resulting M-HAL-SL. While this is less similar to a classical super-learner, it enables an even faster variation-bound selector. Specifically, when  the realized coordinate-transformations ${\bf Q}_{n,v}$, $v=1,\ldots,V$ are fixed, this selector is equivalent to a standard cross-validated $\ell_1$-norm selector (such as \texttt{glmnet::cv.glmnet}), according to the equivalent formulation with meta-level reduced data $(v_i, O^r_i), i = 1, \dots, n$. We refer to this as the fast M-HAL-MLE-based selector, denoted as $\tilde C_{n,cv}^{fast}$, with the corresponding M-HAL-SL denoted as $\bar Q_n^{\tilde C_{n,cv}^{fast}, fast}(P_n)$. 
Similarly, after the final refitting with the full data and the selected bound,  the M-HAL-SL estimators constructed by the two fast selectors and the honest selector (slower double cross-validated) only differ in the choice of the univariate hyperparameter $C$. In practice, we expect this difference to lead to only modest overfitting and similar overall performance.

\subsection{Summary of Assumptions}

We assume the analogue of $M_{20}<\infty$ for loss $L(Q)$ for the reduced data loss $L^r(Q^r)$. 
For that purpose, define $Q^r_{0,{\bf Q}}=\arg\min_{Q^r\in {\cal Q}^r}P_0 L(Q^r\circ{\bf Q})$, and let
\begin{equation}\label{M2r}
M_2^r\equiv
\sup_{{\bf Q}\in {\cal Q}^J}\sup_{Q^r\in {\cal Q}^r}\frac{P_0^r\{ L(Q^r\circ{\bf Q})-L(Q^r_{0,{\bf Q}}\circ{\bf Q})\}^2}{P_0^r\{ L(Q^r\circ{\bf Q})-L(Q^r_{0,{\bf Q}}\circ {\bf Q})\}}<\infty.\end{equation}
We also need that various classes functions of $O^r_v$ are contained in the class of cadlag functions with bound on sectional variation norm. Therefore, it is convenient to let ${\cal D}_{d^r}[0,\tau^r]$ represent a class of $d^r$-variate real valued cadlag functions on a cube $[0,\tau^r]$ with a universal bound (also over realizations of $\{{\bf Q}_{n, v}: v\}$) on the sectional variation norm.
Let $d_0^r((Q^r,G^r),(Q_{0,n}^r,G_{0,n}^r))\equiv d_0^r(Q^r,Q_{0,n}^r)+d_{01}^r(G^r,G_{0,n}^r))$ be the loss-based dissimilarity for the joint $(Q^r,G^r)$. We make the usual assumption that the exact second order remainders can be bounded in terms of this loss-based dissimilarity. 

The summarize the key assumptions throughout this article as follows
\begin{eqnarray}
D^r(Q^r,G^r)(v,O^r_v=O^r(v,O))&=&D^*(Q^r\circ{\bf Q}_{n,v},G^r_v)(O)\label{r1}\\
M_2^r&<&\infty \nonumber \\
\{L^r(Q^r):Q^r\in {\cal Q}^r\}&\subset& {\cal D}_{d^r}[0,\tau^r]\nonumber\\
\{D^r(Q^r,G_{0,n}^r):Q^r\in {\cal Q}^r\}&\subset&  {\cal D}_{d^r}[0,\tau^r]\nonumber \\
R_{20}(Q,G,Q_0,G)&=&O(d_0(Q,Q_0))\nonumber \\
P_0\{D^*(Q,G)-D^*(Q_0,G)\}^2&=&O(d_0(Q,Q_0))\nonumber \\
R_{20}^r(Q^r,G^r,Q_{0,n}^r,G_{0,n}^r)&=&O(d_0^r((Q^r,G^r),(Q^r_{0,n},G^r_{0,n})) )\nonumber \\
P_0^r\{D^r(Q^r,G^r)-D^r(Q_{0,n}^r,G_{0,n}^r)\}^2&=&O(d_0^r((Q^r,G^r),(Q_{0,n}^r,G_{0,n}^r)) )\nonumber \\
\sup_{\{{\bf Q}_{n,v}: v\}, v} P_0 \{D^*(Q_{0,n,v},G_{0,n,v}^r) - D^*(Q_0, \tilde G_0)\}^2  &\rightarrow_p& 0 \nonumber
\end{eqnarray}
for a limit $\tilde{G}_0\in {\cal G}$  of $G^r_{0,n,v}$, $v=1,\ldots,V$.
The last assumption is a universal consistency condition for the asymptotic normality of $n^{1/2}P_n^r D^r(Q_{n}^r,G_{0,n}^r)$. 
The other assumptions in general hold true once we enforce strong positivity so that $D^*(P_0)$ is a uniformly bounded function on a support of $O$. 
We will refer to this whole set of assumptions as assumption (\ref{r1}).

\section{Convergence Rate of M-HAL-SL} \label{sec:rate}

Consider $d_0^{\bar{V}}(Q_n,Q_0)=d_0^{\bar{V}}(Q_n,Q_{0,n})+d_0^{\bar{V}}(Q_{0,n},Q_0)$.
By assuming that $L(Q)$ is a convex loss function so that $P_0L(\sum_j \alpha_j Q_j)\leq \sum_j \alpha_j P_0L(Q_j)$ for $\alpha$-vectors with $\alpha_j\geq 0$, $\sum_j \alpha_j=1$, 
we have $d_0(\bar{Q}_n,Q_0)\leq d_0^{\bar{V}}(Q_n,Q_0)$, so that our results imply the same rate result for $d_0(\bar{Q}_n,Q_0)=P_0L(\bar{Q}_n)-P_0L(Q_0)$. 

The following lemma establishes the rate of convergence result for $d_0^{\bar{V}}(Q_n,Q_{0,n})$ (proof in Appendix \ref{sec:proofs_rate}). 
\begin{lemma}\label{lemmarateQnr}
Recall assumption (\ref{r1}).
We have
\[
d_0^{\bar{V}}(Q_n,Q_{0,n})=O_P(n^{-2/3}(\log n)^{d^r}).\]
\end{lemma}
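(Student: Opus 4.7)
The plan is to reduce the lemma to the standard HAL-MLE rate result applied to the reduced data problem of Section \ref{sec:data_reduction}, and then handle the cross-fitting dependence by conditioning on the training samples that produce $\mathbf{Q}_n$.

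First, I would exploit the equivalent formulation from Section 3.3. Since $d_0^{\bar{V}}(Q_n, Q_{0,n}) = d_0^r(Q_n^r, Q_{0,n}^r)$ and $Q_n^r = \argmin{Q^r \in \mathcal{Q}^r,\, \|Q^r\|_v^* < C_n} P_n^r L^r(Q^r)$ is a genuine HAL-MLE over the class of $J$-variate cadlag functions with bounded sectional variation norm, the task reduces to bounding the excess risk $d_0^r(Q_n^r, Q_{0,n}^r)$ of an HAL-MLE for the reduced-data target $Q_{0,n}^r$. This is precisely the setting treated in \citet{vanderLaan15,vanderLaan17,Benkeser&vanderLaan16,Bibaut&vanderLaan19}, where a rate of $n^{-2/3}(\log n)^{d^r}$ was established for the loss-based dissimilarity of an HAL-MLE whose parameter space consists of $d^r$-variate cadlag functions with a uniformly bounded sectional variation norm.

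Second, I would deal with the fact that the reduced observations $O_i^r = O^r(v_i, O_i)$ are not a priori i.i.d., since each $O_i^r$ depends on the cross-fitted object $\mathbf{Q}_{n,v_i}$ constructed from the training portion of the sample. The key observation is that, conditional on the $V$ training-sample empirical measures $(P_{n,v})_{v=1}^V$ (equivalently, conditional on the realization $\mathbf{Q}_n$), the original observations $O_i$ with $v_i=v$ are independent of $\mathbf{Q}_{n,v}$ and i.i.d. from $P_0$; therefore the reduced observations in fold $v$ are i.i.d. from $P_{0,v}^r$. Within each fold, the HAL-MLE empirical-risk minimization is then a standard i.i.d. problem for a class of cadlag functions treated as fixed, and the pooled validation-sample empirical measure $P_n^r$ inherits the requisite empirical-process control fold by fold.

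Third, I would invoke the usual empirical process argument for HAL-MLE. The ingredients are all supplied by assumption (\ref{r1}): the loss class $\{L^r(Q^r): Q^r \in \mathcal{Q}^r\} \subset \mathcal{D}_{d^r}[0, \tau^r]$ has bracketing entropy of the order $(\log(1/\epsilon))^{d^r}/\epsilon$ by the classical sectional variation norm entropy bound of \citet{Gill&vanderLaan&Wellner95}; the ratio condition $M_2^r < \infty$ converts this bracketing entropy into a variance bound in terms of excess risk $d_0^r$; and $M_1 < \infty$ gives the needed uniform bound on the loss. A peeling argument applied to the empirical process $(P_n^r - P_0^r)\{L^r(Q^r) - L^r(Q_{0,n}^r)\}$ over shells $\{d_0^r(Q^r, Q_{0,n}^r) \leq 2^{-k}\}$ then yields $d_0^r(Q_n^r, Q_{0,n}^r) = O_P(n^{-2/3}(\log n)^{d^r})$, with the constants uniform over the realization of $\mathbf{Q}_n$ because the sectional variation norm bound and the $M_2^r$ bound are uniform. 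Integrating out $\mathbf{Q}_n$ gives the unconditional statement.

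The main obstacle is the dependence induced by cross-fitting: one must justify that conditioning on $\mathbf{Q}_n$ legitimately reduces the problem to an i.i.d. HAL-MLE analysis and that the empirical-process bounds obtained in this conditional sense carry over uniformly. Once this is handled, the remainder is a direct citation of the HAL-MLE rate theorem of \citet{vanderLaan15,vanderLaan17,Bibaut&vanderLaan19} applied to the reduced data problem with dimension $d^r$.
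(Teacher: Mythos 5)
Your proposal is correct and follows essentially the same route as the paper's proof: reduce to the HAL-MLE for the reduced data $(v_i,O^r_i)$, condition on the training samples so that ${\bf Q}_{n,v}$ is fixed and fold-$v$ observations are i.i.d., then invoke the bracketing-entropy/variance-bound ($M_2^r$) machinery for cadlag classes with uniformly bounded sectional variation norm to get the $n^{-2/3}(\log n)^{d^r}$ rate. The paper's only cosmetic differences are that it writes out the empirical-risk-minimization inequality bounding $d_0^{\bar V}(Q_n,Q_{0,n})$ by the cross-validated empirical process term and cites the iterative/direct HAL-MLE rate proofs rather than spelling out the peeling argument.
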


This yields the following result for $d_0^{\bar{V}}(Q_n,Q_0)$ and thereby  for $d_0(\bar{Q}_n,{Q}_0)$. 
\begin{theorem} \label{theoremrateQn}
Recall assumption (\ref{r1}). We have
\begin{eqnarray*}
d_0^{\bar{V}}(Q_n,Q_0)
&=& d_0^{\bar{V}}(Q_{0,n},Q_0)+d_0^{\bar{V}}(Q_n,Q_{0,n})\\
&=&\min_{Q^r\in {\cal Q}^r}\frac{1}{V}\sum_{v=1}^VP_0\{L(Q^r\circ{\bf Q}_{n,v})-L(Q_0)\}
+O_P(n^{-2/3}(\log n)^{d^r}). 
\end{eqnarray*}
If $\hat{\bf Q}=(\hat{Q}_j: j)$ includes an estimator $\hat{Q}_j$ such that $d_0(\hat{Q}_j(P_{n,v}),Q_0)=O_P(n^{-2/3}(\log n)^d)$, then it follows that
\[
d_0^{\bar{V}}({Q}_n,Q_0)=O_P(n^{-2/3}(\log n)^{\max\{d, d^r\}}).\]
\end{theorem}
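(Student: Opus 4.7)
The plan is to combine the decomposition given in the theorem statement with the rate from Lemma \ref{lemmarateQnr} and then to bound the oracle ``bias'' piece by plugging one particularly convenient element into the ensemble family ${\cal Q}^r$.

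First I would verify the identity $d_0^{\bar V}(Q_n,Q_0) = d_0^{\bar V}(Q_{0,n},Q_0) + d_0^{\bar V}(Q_n,Q_{0,n})$, which is immediate from the definition $d_0^{\bar V}(Q,Q_0) = P_0^{\bar V} L(Q) - P_0^{\bar V} L(Q_0)$ by inserting $\pm P_0^{\bar V} L(Q_{0,n})$, and which in any case is already recorded in Section \ref{section3}. Lemma \ref{lemmarateQnr} controls the estimation term at $d_0^{\bar V}(Q_n,Q_{0,n}) = O_P(n^{-2/3}(\log n)^{d^r})$. For the oracle bias term I would use the defining property $Q_{0,n}^r = \argmin{Q^r\in {\cal Q}^r} P_0^{\bar V} L(Q^r\circ{\bf Q}_n)$, which gives
\[
d_0^{\bar V}(Q_{0,n},Q_0) = \min_{Q^r\in {\cal Q}^r}\frac{1}{V}\sum_{v=1}^V P_0\bigl\{L(Q^r\circ{\bf Q}_{n,v})-L(Q_0)\bigr\}.
\]
Summing the two pieces yields the first displayed equation of the theorem.

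For the second (rate) assertion I would plug one explicit feasible ensemble into this minimum. Take $Q^r_{*}\in {\cal Q}^r$ to be the $j$-th coordinate projection, $Q^r_{*}(u_1,\ldots,u_J)=u_j$. As a cadlag $[0,1]$-valued function on $[0,1]^J$, its sectional variation norm equals $1$ (only the section along coordinate $j$ is nontrivial, with total variation $1$), so $Q^r_{*}\in {\cal Q}^r$ as soon as $C^u\geq 1$. By assumption (\ref{respectparspace}) the ensemble $Q^r_{*}\circ {\bf Q}_{n,v}=\hat Q_j(P_{n,v})$ lies in ${\cal Q}$, and the hypothesized rate on $\hat Q_j$ applied to a training sample of size $(V-1)n/V=O(n)$ gives
\[
\min_{Q^r\in {\cal Q}^r}\frac{1}{V}\sum_{v=1}^V P_0\bigl\{L(Q^r\circ{\bf Q}_{n,v})-L(Q_0)\bigr\} \leq \frac{1}{V}\sum_{v=1}^V d_0\bigl(\hat Q_j(P_{n,v}),Q_0\bigr)=O_P(n^{-2/3}(\log n)^d).
\]
Combining with the estimation rate from Lemma \ref{lemmarateQnr} yields the claimed $O_P(n^{-2/3}(\log n)^{\max\{d,d^r\}})$.

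The only non-routine check is the feasibility $\pl Q^r_{*}\pl_v^*\leq C^u$, which is automatic whenever the tuning parameter is at least $1$; this is the mild but genuine constraint the argument places on the sectional variation norm bound. When $C^u=C_n$ is instead selected by the cross-validation scheme of Section \ref{sec:selector}, one additionally needs the candidate grid for $C$ to include a value $\geq 1$, which is implicit in the setup. Beyond this, the proof is pure bookkeeping that combines the identity decomposition, Lemma \ref{lemmarateQnr}, and the oracle inequality produced by the coordinate-projection ensemble.
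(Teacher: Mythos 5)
Your proof is correct and follows essentially the same route as the paper: the decomposition of $d_0^{\bar V}(Q_n,Q_0)$, Lemma \ref{lemmarateQnr} for the estimation term, and bounding the oracle term by noting that the coordinate projections $x\mapsto x_j$ lie in ${\cal Q}^r$, so the oracle risk is at most that of the assumed fast estimator $\hat Q_j$. Your explicit verification that the projection has sectional variation norm $1$ (so feasibility only requires $C^u\geq 1$) is a harmless refinement of the inclusion the paper simply asserts.
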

The leading term in $d_0^{\bar{V}}({Q}_n,Q_0)$ represents the dissimilarity between the oracle estimator $Q^r_{0,n}\circ{\bf Q}_{n,v}$ and $Q_0$. 
Since ${\cal Q}^r$ includes the functions $f(x)=x_j$, $j=1,\ldots,J$, this leading term can be bounded by \[\min_j \frac{1}{V}\sum_{v=1}^V P_0\{L(\hat{Q}_j(P_{n,v}))-L(Q_0)\}.\] 
However, note that the oracle estimator is generally a much better estimator than one of the candidates in the library of $J$ estimators.
In fact, $d_0^{\bar{V}}(Q_{0,n},Q_0)$ even equals zero for many coordinate-transformations.

\section{Asymptotic Linearity of Target Features of Undersmoothed M-HAL-MLE}\label{section5}

We can estimate $\Psi(Q_0)$ with $\Psi(\bar{Q}_n)$ or $\frac{1}{V}\sum_{v=1}^V\Psi(Q_{n,v})$. 
Under regularity conditions, the Taylor expansion at $\bar{Q}_n$ gives that the difference between these two plug-in estimators will generally be second order
\begin{eqnarray*}
\frac{1}{V}\sum_{v=1}^V \Psi(Q_{n,v})-\Psi(\bar{Q}_n)&=&\frac{1}{V}\sum_{v=1}^V d\Psi(\bar{Q}_n)(Q_{n,v}-\bar{Q}_n)+O_P(d_0^{\bar{V}}(Q_n, Q_0))\\
&=&0+ O_P(d_0^{\bar{V}}(Q_n, Q_0)) = O_P(d_0^{\bar{V}}(Q_n, Q_0)),
\end{eqnarray*} 
where $d\Psi(\bar{Q}_n)(h)=\left . \frac{d}{d\epsilon}\Psi(\bar{Q}_n+\epsilon h)\right |_{\epsilon =0}$ is the directional derivative of $\Psi$ at $\bar{Q}_n$ in direction $h$, and $O_P(\norm{Q_{n, v} - \bar Q_n}_{P_0}^2) = O_P(\norm{\bar Q_n - Q_0}_{P_0}^2 + \norm{Q_{n, v} - Q_0}_{P_0}^2)= O_P(d_0^{\bar{V}}(Q_n, Q_0))$ under mild assumptions (Section \ref{sec:rate}). Note, the first terms equals zero since $\bar{Q}_n=\frac{1}{V}\sum_v Q_{n,v}$. Theorem \ref{theoremrateQn} establishes that $d_0^{\bar{V}}(Q_n, Q_0)=O_P(n^{-2/3}(\log n)^d)$ under reasonable conditions, so that this will indeed be $o_P(n^{-1/2})$.
Therefore, it suffices to analyze the target feature $\Psi^r(Q_n^r)=\frac{1}{V}\sum_{v=1}^V \Psi(Q_{n,v})$  of the undersmoothed M-HAL-MLE $Q_n^r$.

Furthermore, we have 
\begin{eqnarray*}
\Psi^r(Q_n^r)-\Psi(Q_0)=\Psi^r(Q_n^r)-\Psi^r(Q^r_{0,n})+ (\Psi^r(Q^r_{0,n})-\Psi(Q_0)).
\end{eqnarray*}
The second term represents a bias term in our reduced data model  that treats the cross-fitted transformation ${\bf Q}_n$ as fixed. If the coordinate-transformation ${\bf Q}_n$ is zero loss, then we would have that $Q_{0,n,v}=Q_0$ for each $v$, so that $\Psi^r(Q_{0,n}^r)=\Psi(Q_0)$. 
There also exist examples of reductions ${\bf Q}_n$  for which $Q_{0,n,v}\not =Q_0$, but nonetheless $\Psi^r(Q_{0,n}^r)=\Psi(Q_0)$ (Section \ref{sec:TSM}). 
In Section \ref{sec:difference} we will generally establish that this bias term $\Psi^r(Q^r_{0,n})-\Psi(Q_0)$ is  second order, due to either the cross-fitted transformation ${\bf Q}_n$ being zero-loss w.r.t. $\Psi(Q_0)$, or due to ensembles of ${\bf Q}_n$ being $n^{-1/4}$-consistent estimators of $Q_0$.  This condition will not require $C_n$ to undersmooth.
The first estimation term, $\Psi^r(Q_n^r)-\Psi^r(Q^r_{0,n})$, will be analyzed in Section \ref{sec:theorem}. 

\subsection{Asymptotic Linearity Theorem}\label{sec:theorem}

Under an undersmoothing selector $C_n>C_{n,cv}$ chosen large enough  \citep{vanderLaan&Benkeser&Cai19}, we have (see Appendix \ref{sec:undersmoothing})
\begin{equation}\label{effscoreeqnr}
P_n^r D^r(Q_n^r,G_{0,n}^r)=o_P(n^{-1/2}).
\end{equation}
 Once this efficient score equation (\ref{effscoreeqnr}) is solved, then we obtain 
\[
\Psi^r(Q_n^r)-\Psi^r(Q^r_{0,n})=(P_n^r-P_0^r)D^r(Q_n^r,G_{0,n}^r)+R_2^r(Q_n^r,G_{0,n}^r,Q^r_{0,n},G_{0,n}^r).\]
This can be represented as
\begin{eqnarray*}
\Psi^r(Q_n^r)-\Psi^r(Q^r_{0,n})&=&\frac{1}{V}\sum_{v=1}^V (P_{n,v}^1-P_0)D^*(Q_n^r\circ{\bf Q}_{n,v},G_{0,n, v}^r)\\
&& +\frac{1}{V}\sum_{v=1}^V R_{20}(Q_n^r\circ{\bf Q}_{n,v},G_{0,n, v}^r,Q^r_{0,n}\circ{\bf Q}_{n,v},G_{0,n, v}^r).\end{eqnarray*}
By assumption (\ref{r1}),  $d_0^{\bar{V}}(Q_n,Q_{0,n})=O_P(n^{-2/3}(\log n)^{d^r})$ implies that the second order remainder is $O_P(n^{-2/3}(\log n)^{d^r})$. The empirical process term will be controlled in the following asymptotic linearity theorem (proof in Appendix \ref{sec:proofs_AL}).  

Note that conditional on $P_{n, v}$ or for fixed $\pmb Q_{n, v}$, the Donsker class condition over ${\cal D}^* = \{D^*(Q, G): Q\in\mathcal{Q}, G\in\mathcal{G}\}$, typically required for the original data problem,  is avoided. Instead, it suffices to assume a Donsker class condition driven by $Q_n^r$ only, which is satisfied if the sectional variation norms in $\mathcal{Q}^r, \mathcal{G}^r$ are universally bounded with probability tending to $1$. With the reduced data dimensions, this meta-level regularity condition is easier to hold, especially when the original data problem is complex and constructs highly varying initial estimators. Moreover, (\ref{effscoreeqnr}) may be satisfied for not only one specific target, in which case the following theorem applies for arbitrary $\Psi$ in a large class of smooth features. 

\begin{theorem}\label{theoremaslinmhalmle}
Recall assumption (\ref{r1}).
Assume $C_n$ is chosen large enough so that
\[
\frac{1}{V}\sum_{v=1}^V P_{n,v}^1 D^*(Q_n^r\circ{\bf Q}_{n,v},G^r_{0,n,v})=o_P(n^{-1/2}).\]
 Then, under conditions 1-5 of Lemma \ref{lemma:AL1} and \ref{lemma:AL2},
 \begin{eqnarray*}
\Psi^r(Q_n^r)-\Psi^r(Q_{0,n}^r)&=& \frac{1}{V}\sum_{v=1}^V (P_{n,v}^1-P_0)D^*(Q_{0,n,v},G_{0,n,v}^r) + O_P(n^{-2/3}(\log n)^{d^r})\\
&=&P_n D^*(Q_{0},\tilde{G}_{0})+o_P(n^{-1/2}).
\end{eqnarray*}
 \end{theorem}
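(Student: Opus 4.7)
The plan is to start from the canonical-gradient expansion already displayed in Section \ref{sec:theorem}. By definition of $R_{20}^r$ and by the undersmoothing score condition in the statement (which, after invoking the link (\ref{linkDrDstar}), is exactly $P_n^r D^r(Q_n^r,G^r_{0,n})=o_P(n^{-1/2})$), we obtain
\[
\Psi^r(Q_n^r)-\Psi^r(Q^r_{0,n}) = (P_n^r-P_0^r)\,D^r(Q_n^r,G^r_{0,n}) + R_{20}^r(Q_n^r,G^r_{0,n},Q^r_{0,n},G^r_{0,n}) + o_P(n^{-1/2}).
\]
Assumption (\ref{r1}) bounds $R_{20}^r$ by $O_P(d_0^r(Q_n^r,Q^r_{0,n})) = O_P(n^{-2/3}(\log n)^{d^r})$ via Lemma \ref{lemmarateQnr}, supplying the $O_P(n^{-2/3}(\log n)^{d^r})$ summand in the first equality.

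Next I would re-centre the empirical-process piece at the oracle ensemble,
\[
(P_n^r-P_0^r)D^r(Q_n^r,G^r_{0,n}) = (P_n^r-P_0^r)D^r(Q^r_{0,n},G^r_{0,n}) + (P_n^r-P_0^r)\{D^r(Q_n^r,G^r_{0,n})-D^r(Q^r_{0,n},G^r_{0,n})\}.
\]
Conditional on the $V$ training samples, ${\bf Q}_n$ is fixed; the class $\{D^r(Q^r,G^r_{0,n}):Q^r\in {\cal Q}^r,\ \pl Q^r\pl_v^*\le C_n\}$ lies in ${\cal D}_{d^r}[0,\tau^r]$ by assumption (\ref{r1}) and is therefore Donsker with uniformly controlled entropy integral. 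Combined with the $L^2$ bound $P_0^r\{D^r(Q_n^r,G^r_{0,n})-D^r(Q^r_{0,n},G^r_{0,n})\}^2 = O_P(d_0^r(Q_n^r,Q^r_{0,n})) = o_P(1)$ from assumption (\ref{r1}), standard empirical-process equicontinuity forces the second summand to be $o_P(n^{-1/2})$. Applying (\ref{linkDrDstar}) to the remaining term rewrites $(P_n^r-P_0^r)D^r(Q^r_{0,n},G^r_{0,n})$ as $\frac{1}{V}\sum_v (P_{n,v}^1-P_0)D^*(Q_{0,n,v},G^r_{0,n,v})$, which yields the first equality of the conclusion.

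For the second equality I would swap the $v$-dependent $(Q_{0,n,v},G^r_{0,n,v})$ for the limit $(Q_0,\tilde G_0)$. The difference is $\frac{1}{V}\sum_v (P_{n,v}^1-P_0)\{D^*(Q_{0,n,v},G^r_{0,n,v})-D^*(Q_0,\tilde G_0)\}$; conditional on each training sample, each summand is a mean-zero empirical process over $P_{n,v}^1$ of a fixed function whose $L^2(P_0)$-norm tends to $0$ in probability by the last line of assumption (\ref{r1}), and the class $D^*(Q_{0,n,v},G^r_{0,n,v})-D^*(Q_0,\tilde G_0)$ sits inside ${\cal D}_{d^r}[0,\tau^r]$, so the same equicontinuity bound yields $o_P(n^{-1/2})$ term-by-term. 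What survives is $\frac{1}{V}\sum_v (P_{n,v}^1-P_0)D^*(Q_0,\tilde G_0) = P_n D^*(Q_0,\tilde G_0) - P_0 D^*(Q_0,\tilde G_0) = P_n D^*(Q_0,\tilde G_0)$, using $P_0 D^*(Q_0,\tilde G_0)=0$, which holds for the doubly-robust canonical gradients considered here.

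The main obstacle is the empirical-process equicontinuity across data-adaptive ${\bf Q}_{n,v}$: the central reason the argument closes is that we carry it out at the reduced-data, meta level after conditioning on the $V$ training samples, so the relevant function classes inherit the universal sectional-variation-norm bound built into assumption (\ref{r1}) and are Donsker uniformly in realizations of ${\bf Q}_n$. With that in hand, the proof mirrors the plug-in HAL-MLE analysis of \citep{vanderLaan&Benkeser&Cai19} transplanted to the meta-learning setting.
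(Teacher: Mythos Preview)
Your proposal is correct and follows essentially the same route as the paper: canonical-gradient expansion, rate bound on $R_{20}^r$ via Lemma \ref{lemmarateQnr}, meta-level Donsker equicontinuity to replace $Q_n^r$ by $Q_{0,n}^r$, then a swap to the limit $(Q_0,\tilde G_0)$. The one refinement the paper makes is in your final swap: rather than invoking a Donsker class for $D^*(Q_{0,n,v},G^r_{0,n,v})-D^*(Q_0,\tilde G_0)$ (which would implicitly require regularity of ${\bf Q}_{n,v}$ at the original-data level, not guaranteed by (\ref{r1})), the paper applies Bernstein's inequality conditional on each training sample together with the uniform $L^2$ convergence in the last line of (\ref{r1}), thereby deliberately avoiding any Donsker assumption on $\{D^*(Q,G):Q\in{\cal Q},G\in{\cal G}\}$.
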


Combined with Section \ref{sec:difference}, we conclude that the target feature of M-HAL-MLE,  $\Psi^r(Q^r_n)$, is root-$n$-consistent for the true target feature $\Psi(Q_0)$, it has known influence curve  conditional on training samples (so that variance estimation follows), and it is asymptotically normally distributed, without any Donsker class assumption on  ${\cal D}^*$.
In addition, the target feature of M-HAL-SL  is  an asymptotically linear estimator of $\Psi(Q_0)$ with influence curve $D^*(Q_0,\tilde{G}_0)$. 
For zero-loss transformations ${\bf Q}_n$, and certain types of reductions ${\bf Q}_n$ under which $G_{0, n, v}^r$ converges to true $G_0$, then we will have that $D^*(Q_0,\tilde{G}_0)=D^*(Q_0,G_0)$, in which case $\Psi^r(Q_n^r)$ behaves as an asymptotically efficient estimator of $\Psi(Q_0)$. If $\tilde{G}_0\not =G_0$, then $\Psi^r(Q_n^r)$ will  typically  end up being super-efficient.

\subsection{Difference Between Target Feature of Oracle Estimator and Target Estimand}\label{sec:difference}

To establish the asymptotic linearity for the fixed parameter $\Psi(Q_0)$, it remains to establish that $\frac{1}{V}\sum_v \Psi(Q^r_{0,n}\circ{\bf Q}_{n,v})-\Psi(Q_0)
=o_P(n^{-1/2})$. The following two theorems address the scenarios with or without the nuisance parameter (see proofs in Appendix \ref{sec:proofs_diff}).  

The first result applies to the case in which $D^*(P) = D^*(Q)$ so that there is no nuisance parameter $G$. 
\begin{theorem}\label{theoremoracleensemble1}
Assume (\ref{r1}).
Suppose that $D^*(P)=D^*(Q(P))$. We have
\[
\Psi^r(Q_{0,n})-\Psi(Q_0)=\frac{1}{V}\sum_{v=1}^V R_{20}(Q_{0,n,v},Q_0).\]
By (\ref{r1}), the latter is bounded by $O(d_0(Q_{0,n},Q_0))$. Thus, $\Psi^r(Q_{0,n})-\Psi(Q_0)=O(d_0(Q_{0,n},Q_0))$. 
\end{theorem}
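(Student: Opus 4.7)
The plan is a short algebraic argument exploiting two structural facts: the link (\ref{linkDrDstar}) between the original and meta-level canonical gradients, and the universal mean-zero property of a canonical gradient at its own distribution. No empirical-process machinery is needed, since both sides of the claimed identity are deterministic quantities evaluated at fixed ${\bf Q}_n$.

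First I would unpack each summand using the definition of the exact second-order remainder. In the no-nuisance case $D^*(P) = D^*(Q(P))$, the identity $R_{20}(Q,Q_0) = \Psi(Q) - \Psi(Q_0) + P_0 D^*(Q)$ rearranges to
\[
\Psi(Q_{0,n,v}) - \Psi(Q_0) = R_{20}(Q_{0,n,v},Q_0) - P_0 D^*(Q_{0,n,v}),
\]
and averaging over $v=1,\ldots,V$ gives
\[
\Psi^r(Q_{0,n}) - \Psi(Q_0) = \frac{1}{V}\sum_{v=1}^V R_{20}(Q_{0,n,v},Q_0) - \frac{1}{V}\sum_{v=1}^V P_0 D^*(Q_{0,n,v}).
\]
The task therefore reduces to showing that the second sum vanishes. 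Using the link identity (\ref{linkDrDstar}) in its no-nuisance form $D^r(Q^r)(v,O^r(v,O)) = D^*(Q^r\circ {\bf Q}_{n,v})(O)$, together with the fact that $P_0^r$ averages $\bar{V}$ uniformly over $\{1,\ldots,V\}$ and, given $\bar{V} = v$, averages $O$ against $P_0$, I obtain
\[
P_0^r D^r(Q_{0,n}^r) = \frac{1}{V}\sum_{v=1}^V P_0 D^*(Q_{0,n,v}).
\]
Since $Q_{0,n}^r = Q^r(P_0^r)$ and there is no nuisance $G^r$, the object on the left is the canonical gradient of $\Psi^r$ evaluated at the true reduced-data law $P_0^r$, so by the universal mean-zero property of any canonical gradient at its own distribution, $P_0^r D^r(Q_{0,n}^r) = 0$. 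This yields the identity $\Psi^r(Q_{0,n}) - \Psi(Q_0) = \frac{1}{V}\sum_v R_{20}(Q_{0,n,v}, Q_0)$ stated in the theorem.

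The order bound is then immediate: assumption (\ref{r1}) supplies $|R_{20}(Q_{0,n,v},Q_0)| = O(d_0(Q_{0,n,v}, Q_0))$ for each $v$, and averaging yields $O\bigl(\frac{1}{V}\sum_v d_0(Q_{0,n,v}, Q_0)\bigr) = O(d_0^{\bar{V}}(Q_{0,n}, Q_0))$, which matches the $O(d_0(Q_{0,n}, Q_0))$ in the statement under the notational identification of $d_0$ with $d_0^{\bar{V}}$ used throughout the paper. The only genuinely non-trivial aspect is confirming that $D^r$ constructed via (\ref{linkDrDstar}) really is the canonical gradient of $\Psi^r$ in ${\cal M}^r$ (as opposed to merely satisfying the stated functional form), so that the vanishing first-moment property actually applies; this relies on ${\cal M}^r$ being obtained by pushing ${\cal M}$ through the fixed map ${\bf Q}_n$, which makes the tangent-space structure in ${\cal M}^r$ at $P_0^r$ carry the canonical-gradient identification and hence the mean-zero property.
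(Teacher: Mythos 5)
Your proposal is correct and follows essentially the same route as the paper's proof: rearrange the definition of $R_{20}$ at each $Q_{0,n,v}$, identify $\frac{1}{V}\sum_v P_0 D^*(Q_{0,n,v})$ with $P_0^r D^r(Q_{0,n}^r)$ via the link (\ref{linkDrDstar}), and invoke the mean-zero property of the canonical gradient $D^r$ at $P_0^r$ to kill that term, with the order bound then supplied by (\ref{r1}). Your closing caveat about verifying that $D^r$ is genuinely the canonical gradient of $\Psi^r$ in ${\cal M}^r$ is exactly the point the paper handles by assumption in (\ref{r1})/(\ref{linkDrDstar}), so nothing further is needed.
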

Even if there exists a nuisance parameter, one could redefine $Q$ as a joint parameter $Q = (Q^s, G)$ including both the sufficient $Q^s$ and the nuisance parameter $G$, such that $\Psi(Q) = \Psi(Q^s, G) = \Psi(Q^s)$ and $D^*(P) = D^*(Q)$. This strategy simplifies the conditions required for $G_{0, n}^r$ but involves M-HAL-SL of both $Q_0^s$ and $G_0$. 

The following general theorem handles the nuisance parameter using an approximation of $G_{0, n}^r$. In practice, this approximation can be chosen such that the residual $r_n$ is also bounded by $(d_0^{\bar V}(Q_{0, n}, Q_0))^{1/2}$ (Section \ref{sec:TSM}). 

\begin{theorem}\label{theoremoracleensemble} \ \newline
{\bf Definitions:}
Let $G_{0,n,v}^{*}\in {\cal G}$ be a functional that approximates $G_{0,n,v}\equiv G^r_{0,n,v}$ and for which 
\[
\frac{1}{V}\sum_v P_0D^*(Q_{0,n}^r\circ{\bf Q}_{n,v},G_0)=\frac{1}{V}\sum_v
P_0D^*(Q_{0,n}^r\circ{\bf Q}_{n,v},G_{0,n,v}^{*}),\]
or equivalently,
\[R_{20}(Q_{0,n}^r\circ{\bf Q}_{n,v},G_{0,n,v}^{*},Q_0,G_0)=R_{20}(Q^r_{0,n}\circ{\bf Q}_{n,v},G_0,Q_0,G_0).\]
Let \[
r_n\equiv \frac{1}{V}\sum_v \{R_{20}(Q_{0,n}^r\circ{\bf Q}_{n,v},G_{0,n,v},Q_0,G_0)-R_{20}(Q_{0,n}^r\circ{\bf Q}_{n,v},G_{0,n,v}^{*},Q_0,G_0)\}.\] We note that this represents a second order term that generally can be bounded in terms of squares or products of $d_0^{\bar{V}}(Q_{0,n},Q_0)^{1/2}$ and a norm $\pl G_{0,n,v}^{*}-G_{0,n,v}\pl$ (such as $L^2(P_0)$-norm).
\nl
{\bf Conclusion:}
We have
\[\left\{ \frac{1}{V}\sum_v \Psi(Q_{0,n}^r\circ{\bf Q}_{n,v})-\Psi(Q_0)\right\}=r_n+O(d_0^{\bar{V}}(Q_{0,n},Q_0)).\]
\end{theorem}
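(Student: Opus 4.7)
My approach leverages the definitional identity
$$\Psi(Q) - \Psi(Q_0) = R_{20}(Q, G, Q_0, G_0) - P_0 D^*(Q, G), \qquad G \in {\cal G},$$
which holds by the very definition of the exact second-order remainder. I would apply it with $Q = Q_{0,n,v} \equiv Q^r_{0,n}\circ{\bf Q}_{n,v}$ and $G = G_0$, average over $v$, and then repackage the averaged $P_0 D^*$ term using the two ingredients supplied by the hypotheses: (i) the defining property that the averaged $P_0 D^*$ is unchanged under replacing $G_0$ by $G_{0,n,v}^{*}$, and (ii) the fact that the oracle nuisance makes the averaged efficient-score equation vanish, $\frac{1}{V}\sum_v P_0 D^*(Q_{0,n,v}, G_{0,n,v}) = 0$.

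Averaging the identity with $G = G_0$ gives
$$\frac{1}{V}\sum_v \{\Psi(Q_{0,n,v}) - \Psi(Q_0)\} = \frac{1}{V}\sum_v R_{20}(Q_{0,n,v}, G_0, Q_0, G_0) - \frac{1}{V}\sum_v P_0 D^*(Q_{0,n,v}, G_0).$$
The first sum is $O(d_0^{\bar{V}}(Q_{0,n}, Q_0))$ by the second-order bound in assumption (\ref{r1}) applied at the common nuisance $G_0$. For the second sum, (i) lets me swap $G_0$ for $G_{0,n,v}^{*}$ inside the average, and then subtracting the vanishing quantity from (ii) rewrites it as $\frac{1}{V}\sum_v \{P_0 D^*(Q_{0,n,v}, G_{0,n,v}^{*}) - P_0 D^*(Q_{0,n,v}, G_{0,n,v})\}$. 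Using the rearranged $R_{20}$ identity once more to re-express each $P_0 D^*$, the $\Psi(Q_{0,n,v}) - \Psi(Q_0)$ pieces cancel between the two terms and what remains is exactly $-r_n$. Combining the two pieces yields $r_n + O(d_0^{\bar{V}}(Q_{0,n}, Q_0))$.

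The only step I expect to require care is justifying $\frac{1}{V}\sum_v P_0 D^*(Q_{0,n,v}, G_{0,n,v}) = 0$, since this is the only point where the oracle nuisance is being treated as a truth. This follows from the link (\ref{linkDrDstar}): the averaged sum equals $P_0^r D^r(Q^r_{0,n}, G^r_{0,n})$, which vanishes because $D^r$ is the canonical gradient of $\Psi^r$ at the reduced-data truth $P_0^r$ and $(Q^r_{0,n}, G^r_{0,n})$ are by definition the parameter values at $P_0^r$, so the canonical gradient integrates to zero. Once this observation is in hand the remainder is just algebra on the $R_{20}$ identity, and I do not anticipate any further obstacles.
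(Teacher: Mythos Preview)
Your proposal is correct and follows essentially the same route as the paper: apply the $R_{20}$ identity at $G=G_0$ and average over $v$, bound the averaged $R_{20}(\cdot,G_0,\cdot,G_0)$ term by $O(d_0^{\bar V}(Q_{0,n},Q_0))$ via assumption~(\ref{r1}), use the defining property of $G_{0,n,v}^{*}$ to replace $G_0$ in the averaged $P_0D^*$ term, subtract the vanishing quantity $\frac{1}{V}\sum_v P_0D^*(Q_{0,n,v},G_{0,n,v})=P_0^r D^r(Q_{0,n}^r,G_{0,n}^r)=0$ (via (\ref{linkDrDstar}) and the fact that the canonical gradient is mean zero at the reduced-data truth), and finally convert the resulting difference of $P_0D^*$ terms back into $r_n$ using the $R_{20}$ identity once more. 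Your identification of step~(ii) as the only nontrivial point, and your justification of it, match the paper's argument exactly.
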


\section{Treatment Specific Mean Example} \label{sec:TSM}

In this section, we will go through the definitions and conditions of the theorems in the context of a concrete example of treatment specific means. 

Let $O=(W,A,Y)\sim P_0\in \mathcal{M}$ be a vector random variable in which $W$ are baseline covariates, $A\in \{0,1\}$ is a binary treatment, and $Y\in [0,1]$ a bounded continuous outcome. 
Suppose that we observe $n$ i.i.d. copies $O_1,\ldots,O_n$ of $O$. 
Let $d$ be the dimension of $O$, and assume $O$ is a bounded random variable. 

For all $P \in \mathcal{M}$, let $Q(P)=E_P(Y\mid A=1,W)$ be the functional parameter of interest, and let $G(P)=E_P(A\mid W)$. 
Let the statistical model be given by ${\cal M}=\{P: G(P)>\delta>0\mbox{ for some $\delta>0$}\}$, thereby enforcing a positivity assumption.
Then $Q:{\cal M}\rightarrow {\cal Q}=\{Q(P): P\in {\cal Q}\}$, where ${\cal Q}$ is its parameter space. Note that each realization is a $k=d-2$-dimensional real valued measurable function of $W$. 
For $Q \in {\cal Q}$, we can choose the squared error loss function $L(Q)(O)=A(Y-Q(W))^2$, so that $Q_0= Q(P_0) = \arg\min_{Q}P_0L(Q)$.
Note that the loss-based dissimilarity is given by $d_0(Q,Q_0)=P_0L(Q)-P_0L(Q_0)=
P_0 G_0(Q-Q_0)^2$, and is thus a square of a  weighted $L^2$-norm. Since $G$ is bounded away from zero, this loss-based dissimilarity is equivalent with $\pl Q-Q_0\pl_{P_0}^2$, where $\pl Q-Q_0\pl_{P_0}=\sqrt{P_0 (Q-Q_0)^2}$.

We will define $\Psi(P)=P_0 Q(P)$ as the target parameter, so that the treatment specific mean is given by $P_0Q_0= E_0E_0(Y\mid A=1,W)$ at $P = P_0$. We also denote $\Psi(P)$ with $\Psi(Q)$. 
The canonical gradient of $\Psi(P)$ at $P$ is given by $D^*(G,Q)=A/G(W)(Y-Q(W))$ and the exact second order remainder $R_{20}(Q,G,Q_0,G_0)=\Psi(Q)-\Psi(Q_0)+P_0D^*(G,Q)$ is given by 
$R_{20}(Q,G,Q_0,G_0)=P_0 (G-G_0)/G (Q-Q_0)$ (e.g., \citep{vanderLaan&Rose11}).

Let ${\bf Q}_{n,v}=\hat{\bf Q}(P_{n,v})$ be a collection of $J$ estimators of $Q_0$ based on training sample $P_{n,v}$, $v=1,\ldots,V$, which can also be viewed more generally as a $J$-dimensional data-adaptive transformation ${\bf Q}_{n,v}(W)$ of the $k$-dimensional $W$. 
Recall ${\bf Q}_n(v,W)={\bf Q}_{n,v}(W)$. 
Let ${\cal Q}^r$ be a class of $J$-variate real valued cadlag functions with a universal bound $C^u$ on its sectional variation norm.
For a  given cadlag function (also called ensemble) $Q^r\in {\cal Q}^r$, we can define the composition $Q^r\circ{\bf Q}_n$ by $Q^r\circ{\bf Q}_n(v,W) = Q^r\circ{\bf Q}_{n, v} (W)=Q^r({\bf Q}_{n,v}(W))$.

\subsection{Reduced data estimation problem treating ${\bf Q}_n$ as fixed}

Treating ${\bf Q}_n$ as fixed, we can reduce the observed data $(\bar{V},O)$ to $(\bar{V},O^r=(W^r,A,Y))$, where $W^r\equiv {\bf Q}_{n,\bar{V}}(W)= \sum_{v = 1}^V \indicator{\bar V = v} {\bf Q}_{n,v}(W)$. 
Define $W^r_v\equiv{\bf Q}_{n,v}(W)$. 
Let $d^r=J+2$ be the dimension of the reduced data $O^r$. 
Recall $(\bar{V},O^r)$ follows the joint distribution  $P_0^r \in \mathcal{M}^r$, where $\bar{V}$ is uniform $\{1,\ldots,V\}$, and $O^r$ given $\bar{V}=v$ follows the distribution of $(W^r_v,A,Y) \sim P_{0, v}^r$ which is implied by $({\bf Q}_{n, v}(W),A,Y)$ under $P_0$. 
For all $P^r \in \mathcal{M}^r$, if $(\bar V, O^r)\sim P^r$, then there exists $P \in \mathcal{M}$ such that $(W^r_v,A,Y) \sim P_{v}^r$ follows the distribution implied by $({\bf Q}_{n, v}(W),A,Y)$ under $O \sim P$.

Define the reduced data loss as $L^r(Q^r)(\bar{V},O^r)=A(Y-Q^r(W^r))^2$. 
This satisfies condition (\ref{eq:dimr_loss}): $L(Q^r\circ{\bf Q}_{n,v})(O)= A(Y - Q^r({\bf Q}_{n, v}(W)))^2 =L^r(Q^r)(v,O^r(v,O))$, which depends on $O$ only through $({\bf Q}_{n, v}(W), A, Y) = O^r(v,O)$.

 Let ${\cal M}^r=
\{P^r(P): P\in {\cal M}, G^r(P^r) > \delta > 0, W^r \mapsto E_{P^r(P)}(Y | A = 1, W^r) \in {\cal Q}^r\}$ be the model for the distribution $P_0^r$ of $(\bar{V},O^r)$ implied by ${\cal M}$. 
Define the functional parameter $Q^r:{\cal M}^r\rightarrow {\cal Q}^r$ by  $Q^r(P^r)\equiv \arg\min_{Q^r\in {\cal Q}^r}P^r L^r(Q^r)$ which equals $E_{P^r}(Y | A = 1, W^r)$ due to the assumption on ${\cal M}^r$. 
$Q^r_{0,n} = Q^r(P^r_0)$ represents the optimal ensemble.
Let $d^r_0(Q^r,Q^r_{0,n})=P_0^rL^r(Q^r)-P_0^rL^r(Q^r_{0,n})$ be the loss-based dissimilarity. 
Let $G_{0,n}^r(W^r)=E_0(A\mid W^r)$. 
Note also that, since $G_0>\delta>0$, we also have $G_{0,n}^r>\delta>0$.
The loss-based dissimilarity is given by
\begin{eqnarray*}
d^r(Q^r,Q^r_{0,n})&=&P_0^r G_{0,n}^r (Q^r-Q^r_{0,n})^2=E_0 G_{0,n}^r(W^r)(Q^r-Q^r_{0,n})^2(W^r)\\
&=&\frac{1}{V}\sum_{v=1}^V E_0 G_{0,n}^r(W^r_v)(Q^r-Q^r_{0,n})^2(W^r_v).
\end{eqnarray*}

We define  $\Psi^r:{\cal M}^r\rightarrow\openr$  by $\Psi^r(P^r)=\Psi^r(Q^r(P^r))= E_{P_0^r}E_{P^r}(Y\mid A=1,W^r)$.  It can be verified that $\Psi^r(P^r) = \frac{1}{V}\sum_{v = 1}^V E_{P_{0}^r}( Q^r(W^r)|\bar V = v) 
= \frac{1}{V}\sum_{v = 1}^V E_{P_{0, v}^r}( Q^r(W^r_v))
= \frac{1}{V}\sum_{v = 1}^V E_{P_{0}}Q^r({\bf Q}_{n, v}(W))
= \frac{1}{V}\sum_{v = 1}^V \Psi(Q^r \circ {\bf Q}_{n, v}) $.
Note that, as a special case, $\Psi^r(Q_{0,n}^r)=\Psi(Q_0)$ if, for each $v$, $W^r_v$ is such that $A$, given $W$, only depends on $W^r_v={\bf Q}_{n,v}(W)$. 
  The canonical gradient of $\Psi^r$ at $P^r$ is given by:
 \[
 D^r(P^r)(\bar{V},O^r)=\frac{A}{E(A\mid W^r)}(Y-E_{P^r}(Y\mid A=1,W^r)),\]
 or
 \[
 D^r(Q^r,G^r)(\bar{V},O^r)=\frac{A}{G^r(W^r)}(Y-Q^r(W^r)).\]
 Condition (\ref{linkDrDstar}) holds since $D^r(Q^r,G^r)(v,O^r_v(o))=D^*(Q^r({\bf Q}_{n,v}),G^r_v)(o)$, where $G^r_v(W) \equiv G^r({\bf Q}_{n,v}(W))$.
  We have \[
 P_0^r D^r(Q^r,G^r)=\frac{1}{V}\sum_{v=1}^V P_0 D^*(Q^r({\bf Q}_{n,v}),G^r_v);\]
 and $R_2^r(P^r,P_0^r)=P_0^r (G^r-G_{0,n}^r)/G^r (Q^r-Q^r_{0,n})$. It follows that
 \[R_{20}^r(Q^r,G^r,Q^r_{0,n},G_{0,n}^r)=\frac{1}{V}\sum_{v=1}^V P_0 (G^r-G_{0,n}^r)/G^r ({\bf Q}_{n,v})
  (Q^r({\bf Q}_{n,v})-Q^r_{0,n}({\bf Q}_{n,v}).\]
 Thus,
  \[
  R_{20}^r(Q^r,G^r,Q^r_{0,n},G_{0,n}^r)=\frac{1}{V}\sum_{v=1}^V R_{20}(Q^r({\bf Q}_{n,v}),G^r_v,Q^r_{0,n}({\bf Q}_{n,v}),G^r_{0,n,v}) .\]
  We also have the equivalences $d_0^r(Q_n^r,Q_{0,n}^r)=d_0^{\bar{V}}(Q_n,Q_{0,n}) 
  $.

\subsection{Convergence of M-HAL-MLE}

We have that the M-HAL-MLE of the oracle ensemble $Q_{0,n}^r$ is given by
$Q_n^r=\arg\min_{Q^r\in {\cal Q}^r, \pl Q^r\pl_v^*<C_n}P_n^r L^r(Q^r)$.
This is just a regular HAL-MLE of $E(Y\mid A=1,W^r)$ based on the reduced data set $O^r_i=(W^r_i,A_i,Y_i)$, $i=1,\ldots,n$, where $W^r_i={\bf Q}_{n,v_i}(W_i)$. It corresponds with a linear least squares regression under an $L_1$-constraint $\pl \beta^r\pl_1<C_n$, and it results in a fit $Q_n^r=\sum_{s,j}\beta_n^r(s,j)\phi_{s,j}$ for a rich collection of spline basis functions. 
Given $Q_n^r$, we can compute the M-HAL-SL $Q_n$ (collection of $V$ estimators) by $Q_n(v,\cdot)=Q_n^r\circ{\bf Q}_{n,v}$, and corresponding average $\bar{Q}_n(\cdot)=\frac{1}{V}\sum_{v=1}^V Q_n(v,\cdot)$.
The target of $Q_n$ is the oracle estimator
defined by $Q_{0,n}(v,\cdot)=Q_{0,n}^r\circ{\bf Q}_{n,v}$, and the target of $\bar{Q}_n$ is accordingly given by $\bar{Q}_{0,n}=\frac{1}{V}\sum_{v=1}^V Q_{0,n}^r\circ{\bf Q}_{n,v}$.
Note that $Q_{0,n}^r\circ{\bf Q}_{n,v} (w)= E_0(Y\mid A=1,W^r_v={\bf Q}_{n,v}(w))$.

Due to $O$ being bounded, and ${\cal Q}$ being bounded functions, we have that $M_1<\infty$ and $M_2^r<\infty$.
By assumption, ${\cal Q}^r$ consists of $J$-dimensional real valued cadlag functions on $[0,1]^J$ with sectional variation norm bounded by a universal $C^u$. Let $d^r=J+2$ be the dimension of $O^r=(W^r,A,Y)$. Therefore, it follows that $\{L(Q^r\circ{\bf Q}_{n,v}):Q^r\in {\cal Q}^r\}$ represents a class of $d^r$-valued cadlag functions with a universal bound on its sectional variation norm. This verifies all conditions of Lemma \ref{lemmarateQnr} so that $d^{\bar{V}}_0(Q_n, Q_{0, n}) = O_P(n^{-2/3}(\log n)^{d^r})$. In addition, we have $L(Q^r_{0,n} \circ {\bf Q}_{n, v}) = L(Q_0)$ so long as $A$ depends on $W$ only through $W^r_v = {\bf Q}_{n, v}(W)$, in which case Theorem \ref{theoremrateQn} implies $d^{\bar{V}}_0(Q_n, Q_0) = O_P(n^{-2/3}(\log n)^{d^r})$. In the case that one of the algorithms $\hat Q_1$ satisfies $d_0({\bf Q}_{n, v, 1}, Q_0) = O_P(n^{-2/3}(\log n)^{d}))$, it follows that $d^{\bar{V}}_0(Q_n, Q_0) = O_P(n^{-2/3}(\log n)^{d})$ by Theorem \ref{theoremrateQn}. 

\subsection{Plug-in Estimation with Undersmoothed M-HAL-SL}

To apply Theorem \ref{theoremaslinmhalmle}, we verify that with large enough $C_n > C_{n, cv}$ the undersmoothed M-HAL-SL solves efficient score equations for the target feature such that $P^r_n D^r(Q^r_n, G^r_{0, n}) = o_P(n^{-1/2})$ (Appendix \ref{sec:TSM_undersmoothing}). Thus, 
$$\Psi^r(Q_{n}^r) - \Psi^r(Q_{0,n}^r) = P_n D^*(Q_0, \tilde G_0) + o_P(n^{-1/2}), $$ 
where $\tilde G_0$ is the limit of $G^r_{0, n, v}$ as defined in (\ref{r1}). 
To achieve the asymptotic linearity of $\Psi^r(Q_{n}^r)$ for $\Psi(Q_0)$, it is left to be verified the conditions required for $\Psi^r(Q_{0,n}^r)-\Psi(Q_0) = o_P(n^{-1/2})$ as in Section \ref{sec:difference}. 

There are  important cases in which  $\Psi^r(Q^r_{0,n})=\Psi(Q_0)$ exactly. For example, suppose  that $A$, given $W$, equals $A$, given $W_1$ for a lower dimensional vector $W_1$. In that case, we could define $({\bf Q}_{n,v}:v)$ as a collection of estimators of $E_0(Y\mid A=1,W)$, but augmented with the fixed function $W_1$ of $W$. 
Then, 
$\Psi(Q_0)=E_0E_0(Y\mid A=1,W)=E_0E_0(Y\mid A=1,W_1)$, and, similarly, 
$\Psi(Q^r_{0,n})=E_0 E_0(Y\mid A=1,W^r)=E_0E_0(Y\mid A=1,W_1)$. So in this case, we have $\Psi^r(Q^r_{0,n})-\Psi(Q_0)=0$. 

This generalizes to any causal estimation problem in which the intervention mechanism is known to only be affected by a low dimensional summary of all measured (baseline and time-dependent) covariates, and these are included in ${\bf Q}_n$ as fixed functions. (With longitudinal data and iterative conditional expectations, ${\bf Q}_n$ can have a nested structure, sequentially defining a set of coordinate transformations for each intervention time-point.) In particular, this means that this type of  M-HAL-SL  yields efficient estimators of causal effects of single time point and multiple time point interventions based on (sequentially) randomized trials and well understood observational studies, in which one knows, at each intervention time-point, a low dimensional summary measure of the past that predicts the intervention. Here, the intervention can have both a treatment and a censoring component.

In general, the difference of the random and fixed parameter is given by:
\begin{eqnarray*}
\Psi^r(Q_{0,n}^r)-\Psi(Q_0)&=&
E_0E_0(Y\mid W^r,A=1)-E_0E_0(Y\mid W,A=1), \end{eqnarray*}
which, under similar conditions as Theorem \ref{theoremoracleensemble}, can be shown to behave as $d_0(Q_n, Q_0)$ and is thus second order (Appendix \ref{sec:TSM_diff}).

\subsection{Double Robustness}

\begin{lemma}\label{lemma:DR_TSM}
    If ${\bf Q}_{n, v}(W)$ includes the correct propensity score model $G_0(W)$ (or if $G_0$ depends on $W$ only through ${\bf Q}_{n, v}(W)$) for all $v  = 1, \dots, V$, or if ${\bf Q}_{n, v}(W)$ includes the correct outcome model $Q_0(W)$ for all $v  = 1, \dots, V$, then we have $\Psi^r(Q_{0, n}^r) - \Psi(Q_0) = 0$. Moreover, if ${\bf Q}_{n, v}(W)$ includes a consistent estimator for either $Q_0$ or $G_0$, then we have that $\Psi^r(Q_{0, n}^r)$ is  consistent for $\Psi(Q_0)$. 
\end{lemma}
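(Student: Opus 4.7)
The plan is to start from the explicit bias identity derived earlier in this subsection,
\[
\Psi^r(Q_{0,n}^r)-\Psi(Q_0) = \frac{1}{V}\sum_{v=1}^V\left\{E_0 E_0(Y\mid A=1, W^r_v)-E_0 E_0(Y\mid A=1,W)\right\},
\]
and to reduce the inner conditional expectation via the tower property applied to $\sigma(W^r_v)\subset \sigma(W)$. Since $E_0[YA\mid W]=Q_0(W)G_0(W)$ and $E_0[A\mid W]=G_0(W)$, iterating the conditional expectation on $\sigma(W^r_v)$ yields
\[
E_0(Y\mid A=1, W^r_v) = \frac{E_0[Q_0(W)G_0(W)\mid W^r_v]}{E_0[G_0(W)\mid W^r_v]},
\]
which is well-defined with denominator bounded below by $\delta$ by positivity. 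The proof then reduces to showing that under each sufficient condition this ratio integrates under $P_0$ to $E_0 Q_0(W) = \Psi(Q_0)$.

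Both exact-identity conclusions follow by pulling a $\sigma(W^r_v)$-measurable factor out of the conditional expectation. For the propensity-score case, $G_0(W)$ is by hypothesis a measurable function of $W^r_v$, so it factors out of both numerator and denominator and the ratio collapses to $E_0(Q_0(W)\mid W^r_v)$; the outer $P_0$-expectation then yields $E_0 Q_0(W) = \Psi(Q_0)$. For the outcome-regression case, $Q_0(W)$ factors out of the numerator instead, the ratio equals $Q_0(W)$ pointwise, and its outer expectation again gives $\Psi(Q_0)$. Each identity holds for every $v$, so the average over $v$ vanishes and $\Psi^r(Q_{0,n}^r)=\Psi(Q_0)$.

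For the consistency claim I would push the same factoring argument through a continuity estimate. Suppose a coordinate of ${\bf Q}_{n,v}$ is an estimator $\hat Q_j$ with $\hat Q_j \to Q_0$ in $L^2(P_0)$ in probability. Since $\hat Q_j$ is $\sigma(W^r_v)$-measurable one can write
\[
\frac{E_0[Q_0 G_0 \mid W^r_v]}{E_0[G_0\mid W^r_v]} - \hat Q_j = \frac{E_0[(Q_0-\hat Q_j) G_0\mid W^r_v]}{E_0[G_0\mid W^r_v]},
\]
and the positivity bound $E_0[G_0\mid W^r_v]\geq\delta$, the trivial bound $G_0\leq 1$, and the $L^2$-contractivity of conditional expectation combine to give
\[
\left\|\frac{E_0[Q_0 G_0\mid W^r_v]}{E_0[G_0\mid W^r_v]} - Q_0\right\|_{L^2(P_0)}\leq (1+\delta^{-1})\|\hat Q_j-Q_0\|_{L^2(P_0)},
\]
which tends to zero in probability; taking outer $P_0$-expectation then yields $E_0 E_0(Y\mid A=1, W^r_v)\to\Psi(Q_0)$. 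The symmetric case $\hat G_j\to G_0$ factors $\hat G_j$ from numerator and denominator, the remainder vanishing by the same contraction argument using $|Q_0|\leq 1$. The main bookkeeping challenge will be uniformity across $v$ and across random realizations of ${\bf Q}_{n,v}$, handled by the positivity bound $G_0>\delta$ and the boundedness of $Y\in[0,1]$.
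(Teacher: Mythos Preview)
Your proof is correct but follows a genuinely different route from the paper's. The paper works through the efficient influence function: it observes that $P_0 D^*(Q_{0,n,v},G_{0,n,v})=0$ with $G_{0,n,v}(W)=E_0(A\mid {\bf Q}_{n,v}(W))$, then uses the double-robustness structure of $R_{20}$ to obtain
\[
|\Psi(Q_{0,n,v})-\Psi(Q_0)| \leq \|G_{0,n,v}-G_0\|_{P_0}\,\|Q_{0,n,v}-Q_0\|_{P_0},
\]
and finally invokes the projection interpretation of $G_{0,n,v}$ and $Q_{0,n,v}$ to dominate each factor by $\|G_1-G_0\|_{P_0}$ and $\|Q_1-Q_0\|_{P_0}$ for any $G_1,Q_1$ among the coordinates of ${\bf Q}_{n,v}$. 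All four conclusions then drop out of this single product bound. Your approach instead manipulates the explicit ratio $E_0[Q_0G_0\mid W^r_v]/E_0[G_0\mid W^r_v]$ directly via tower-property factoring and $L^2$-contractivity of conditional expectation, never touching $D^*$ or $R_{20}$. This is more elementary and self-contained, but it treats the four cases separately and your consistency bounds are first-order in $\|\hat Q_j-Q_0\|$ or $\|\hat G_j-G_0\|$ rather than product-form, which is enough for the stated consistency but loses the sharper double-robustness inequality. One small point: your sketch for the $\hat G_j\to G_0$ case (``factor $\hat G_j$ from numerator and denominator'') is a bit loose since $\hat G_j$ need not be bounded away from zero; the clean way is to subtract $E_0[Q_0\mid W^r_v]$ from the ratio and recognize the numerator as $E_0[(Q_0-E_0[Q_0\mid W^r_v])(G_0-\hat G_j)\mid W^r_v]$, then bound using $|Q_0|\le 1$ and the positivity floor on the denominator.
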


\begin{proof}
Define $G_{0, n, v}(W) = E_0(A | {\bf Q}_{n, v}(W))$. 
Note that $P_0 D^*(Q_{0, n}^r \circ {\bf Q}_{n, v}, G_{0, n, v}) = 0$ by iterated conditional expectation. Utilizing the double robustness structure of $R_{20}$ and the positivity assumption, we have 
\begin{align*}
    |\Psi(Q_{0, n}^r \circ {\bf Q}_{n, v}) - \Psi(Q_0) |
    & = |P_0 D^*(Q_{0, n}^r \circ {\bf Q}_{n, v}, G_{0, n, v}) -P_0 D^*(Q_{0, n}^r \circ {\bf Q}_{n, v}, G_{0})| \\
    & \leq |P_0 (G_{0, n, v} - G_0)(Q_{0} - Q_{0, n}^r \circ {\bf Q}_{n, v})| \\
    & \leq \norm{G_{0, n, v} - G_0}_{P_0}\norm{Q_{0, n}^r \circ {\bf Q}_{n, v} - Q_0}_{P_0}. 
\end{align*}
If ${\bf Q}_{n, v}(W)$ includes estimators $Q_1(W)$ and $G_1(W)$, then $Q_1(W)$ and $G_1(W)$ are measurable functions of ${\bf Q}_{n, v}(W)$. 
Note that by tower properties, $E_0\{G_0(W) | {\bf Q}_{n, v}(W)\} = E_0(A | {\bf Q}_{n, v}(W)) = G_{0, n, v}(W)$, and $E_0\{Q_0(W) | A = 1, {\bf Q}_{n, v}(W)\} = E_0(Y | A = 1, {\bf Q}_{n, v}(W)) = Q_{0, n}^r\circ {\bf Q}_{n, v}(W)$. 
By projection properties,   
\begin{align*}
    |\Psi(Q_{0, n}^r \circ {\bf Q}_{n, v}) - \Psi(Q_0) | \leq \norm{G_{1} - G_0}_{P_0}\norm{Q_1 - Q_0}_{P_0}.
\end{align*}
This proves the claims when $Q_1 = Q_0$, or $G_1 = G_0$, or either $Q_1$ or $G_1$ is a consistent estimator ($A$ and $Y$ are both bounded). 
Lastly, if $G_0$ depends on $W$ only through ${\bf Q}_{n, v}(W)$, then $G_0(W) = G_{0, n, v}(W)$ directly. 
\end{proof}

Note that a similar result can be achieved following Lemma \ref{lemmaatesecondorder}.

When Lemma \ref{lemma:DR_TSM} holds, 
the target feature of a properly undersmoothed M-HAL-SL $\Psi^r(Q_n^r)$, that satisfies the undersmoothing condition of Theorem \ref{theoremaslinmhalmle}, is asymptotically linear for $\Psi(Q_0)$. Therefore, the typical double robustness property, that full-data TMLE is asymptotic linear 
with a correct 
model for either $Q$ or $G$,  can be preserved in meta-learning estimators for treatment specific means. 

\section{Numerical Experiments}\label{new_section6}

\subsection{Prediction Performance of M-HAL-SL}\label{sec:pred}

In this section, we evaluate the prediction performance of M-HAL-SL relative to regular super-learners. 
The tuning parameter lambda in M-HAL-SL is selected by cross-validation with \texttt{glmnet}, using honest or fast selectors.
Other super-learners for comparisons include: non-negative least square superlearner without normalization (NNLS-SL), convex super-learner which restricts the weights to be positive with sum equal to one (Convex-SL), and the simple ensemble that takes the average of base learners' predictions (Average). Each super-learner uses the following five base learner algorithms: intercept-only model (\texttt{Lrnr\_mean}), simple linear regression (\texttt{Lrnr\_glm}), Xgboost (\texttt{Lrnr\_xgboost}), support vector machine (\texttt{Lrnr\_svm}) and random forest (\texttt{Lrnr\_rf}).

We first generate data with simple distributions of five covariates $X_1,...X_5$ and a continuous outcome $Y$. The distribution of variables are as follows:
\begin{align*}
X_1 \sim U(-4,4), X_2 \sim U(-4,4), X_3 \sim Bernoulli(0.5), \\
X_4 \sim N(0,1), X_5 \sim Gamma(2,1). \numberthis \label{eq:simple}
\end{align*}
The outcome Y is generated by the ``jump'' regression function \citep{Benkeser&vanderLaan16}:\\
\begin{align*}
\psi_{0}(x)= & -I\left(x_{1}<-3\right) x_{3}+0.5 I\left(x_{1}>-2\right)-I\left(x_{1}>0\right) + 2 I\left(x_{1}>2\right) x_{3} \\
& -3 I\left(x_{1}>3\right)+1.5 I\left(x_{2}>-1\right)-5 I\left(x_{2}>1\right) x_{3}+2 I\left(x_{2}>3\right) \\
& + 2 I\left(x_{4}<0\right)- I\left(x_{5}>5\right)-I\left(x_{4}<0\right) I\left(x_{1}<0\right)+2 x_{3}, \\
Y = & \psi_{0}(X) + \epsilon, \epsilon \sim N(0,1). 
\end{align*}
In each iteration, a dataset is generated with sample sizes $n = 200, 500, 1000,$ and $2000$, respectively. The measure of performance is the mean of squared error (MSE) on an external test dataset of 5000 samples generated from the same distribution. We also provide their relative MSEs using Convex-SL as the baseline. 
In this scenario, M-HAL-SL performs as good as other super-learners (Table \ref{tab:simResult_d5_diversed}).

In the second scenario, we generate data with more complicated distributions of 20 covariates ($X_1,...X_{20}$). These covariates are divided in four equal sized groups, and the conditional mean of $Y$ is an additive model of four functions of the corresponding clusters of covariates. Each function involves two-way intersections between the five covariates. 
The data-generating distribution is as follows: \\
\[
\begin{array}{l}
\mbox{$X_1 \sim U(-4,4)$, $X_2 \sim U(-4,4)$, $X_3 \sim Bernoulli(0.5)$, $X_4 \sim N(0,1)$,} \\
\mbox{$X_5 \sim Gamma(2,1)$, $X_6 \sim Pois(2)$, $X_7 \sim Exp(3)$, $X_8 \sim Beta(1,1)$,} \\
\mbox{ $X_9 \sim {\chi}^{2}(2)$, $X_{10} \sim Geom(0.6)$, 
$X_{11} \sim U(-4,4)$, $X_{12} \sim U(-4,4)$, }\\
\mbox{ $X_{13} \sim Bernoulli(0.5)$, $X_{14} \sim N(0,1)$, $X_{15} \sim Gamma(2,1)$,}\\
\mbox{ $X_{16} \sim Pois(1)$, $X_{17} \sim Exp(1)$, $X_{18} \sim Beta(2,1)$, }\\
\mbox{ $X_{19} \sim {\chi}^{2}(1)$, $X_{20} \sim Geom(0.8)$.}
\end{array}
\]
\[
\begin{array}{l}
G_{1}(x)=-I\left(x_{1}<-3\right) x_{3}+0.5 I\left(x_{1}>-2\right)-I\left(x_{1}>0\right)+ 
2 I\left(x_{1}>2\right) x_{3}\\-3 I\left(x_{1}>3\right)+1.5 I\left(x_{2}>-1\right)- 
5 I\left(x_{2}>1\right) x_{3}+2 I\left(x_{2}>3\right)\\+2 I\left(x_{4}<0\right) x_{2}- 
I\left(x_{5}>5\right) x_{1}-I\left(x_{4}<0\right) I\left(x_{1}<0\right)+2 x_{3}
\\
\\
G_{2}(x)=-I\left(x_{10}>3\right) x_{8}+0.5 I\left(x_{10}>2\right)-I\left(x_{10}>0\right)+ 
2 I\left(x_{10}>2\right) x_{8}\\-3 I\left(x_{10}>3\right)+
1.5 I\left(x_{9}>5\right) - 
5 I\left(x_{9}>1\right) x_{8}+
2 I\left(x_{9}>3\right)\\+
I\left(x_{7}>4\right) x_{9}- 
I\left(x_{6}>5\right) x_{10}-
I\left(x_{7}>1\right) I\left(x_{10}<2\right)+
2 x_{8}
\\
\\
G_{3}(x)=-I\left(x_{11}<-3\right) x_{13}+
0.5 I\left(x_{11}>-2\right)-
I\left(x_{11}>0\right)+ 
2 I\left(x_{11}>2\right) x_{13}\\
-3 I\left(x_{11}>3\right)+
1.5 I\left(x_{12}>-1\right)- 
5 I\left(x_{12}>1\right) x_{13}+
2 I\left(x_{12}>3\right)\\+
I\left(x_{14}<0\right) x_{12}- 
I\left(x_{15}>5\right) x_{11}-
I\left(x_{14}<0\right) I\left(x_{11}<0\right)+
2 x_{13}
\\
\\
G_{4}(x)=-I\left(x_{19}>3\right) x_{17}+
0.5 I\left(x_{19}>2\right)-
I\left(x_{19}>0\right)+ 
2 I\left(x_{19}>2\right) x_{18}\\
-3 I\left(x_{19}>3\right)+
1.5 I\left(x_{16}>5\right) - 
5 I\left(x_{16}>1\right) x_{18}+
2 I\left(x_{16}>3\right)\\+
I\left(x_{16}>4\right) x_{19}- 
I\left(x_{16}>5\right) x_{20}-
I\left(x_{17}>1\right) I\left(x_{20}<2\right)+
2 x_{18}
\\
\\
\psi_{0}(x) = G_1(x) + G_2(x) + G_3(x) + G_4(x)
\\
\\
Y = \psi_{0}(X) + \epsilon,\ 
\epsilon \sim N(0,1). \numberthis \label{eq:complex}
\end{array}
\]
In the second scenario with more complicated distributions, M-HAL-SL performs slightly better than other super-learners (Table \ref{tab:simResult_d20_diversed_wo_g}).

Note that the asymptotic performance of M-HAL-SL relies on the convergence rate of the HAL algorithm, which allows M-HAL-SL to approximate more complex functions of the base learners. In comparison, the learner library of other super-learners are limited to simple linear combinations, and their asymptotic performances rely on oracle inequality and therefore the best estimator in the learner library. Indeed, at the largest sample size ($n = 2000$), both Convex-SL and NNLS-SL perform similarly as the best candidate algorithm Lrnr\_xgboost under either simple or more complicated distributions, but M-HAL-SL gains additional precision when the data generating distribution is more complex and a more flexible combination of base learners may jointly assist prediction. 
Our results show that M-HAL-SL is a valid alternative super-learner, and its finite-sample performance can be similar or slightly better than Convex-SL and NNLS-SL when base learners operate on all variables, depending on sample sizes and complexity of data generating distributions.

\begin{table}
    \centering

\begin{tabular}{lcccccccc}
\toprule
\multicolumn{1}{c}{ } & \multicolumn{2}{c}{n = 200} & \multicolumn{2}{c}{n = 500} & \multicolumn{2}{c}{n = 1000} & \multicolumn{2}{c}{n = 2000} \\
\cmidrule(l{3pt}r{3pt}){2-3} \cmidrule(l{3pt}r{3pt}){4-5} \cmidrule(l{3pt}r{3pt}){6-7} \cmidrule(l{3pt}r{3pt}){8-9}
metalearner & mse & relative\_mse & mse & relative\_mse & mse & relative\_mse & mse & relative\_mse\\
\midrule
Meta-HAL-d2 (Honest CV) & 2.25 & 0.96 & 1.60 & 0.96 & 1.34 & 0.98 & 1.21 & 0.99\\
Meta-HAL-d2 (Fast SL CV) & 2.23 & 0.95 & 1.59 & 0.96 & 1.34 & 0.98 & 1.21 & 0.99\\
Meta-HAL-d2 (Fast CV) & 2.25 & 0.96 & 1.61 & 0.97 & 1.34 & 0.98 & 1.21 & 1.00\\
Meta-HAL-d1 (Honest CV) & 2.22 & 0.94 & 1.58 & 0.95 & 1.33 & 0.98 & 1.20 & 0.99\\
Meta-HAL-d1 (Fast SL CV) & 2.23 & 0.95 & 1.59 & 0.96 & 1.33 & 0.98 & 1.20 & 0.99\\
Meta-HAL-d1 (Fast CV) & 2.23 & 0.95 & 1.58 & 0.95 & 1.33 & 0.98 & 1.20 & 0.99\\
Convex & 2.36 & 1.00 & 1.66 & 1.00 & 1.36 & 1.00 & 1.21 & 1.00\\
NNLS & 2.32 & 0.98 & 1.63 & 0.98 & 1.35 & 0.99 & 1.21 & 1.00\\
Discrete SL & 2.50 & 1.06 & 1.73 & 1.04 & 1.41 & 1.03 & 1.24 & 1.02\\
Average & 3.00 & 1.28 & 2.53 & 1.52 & 2.28 & 1.67 & 2.12 & 1.75\\
Lrnr\_mean & 5.05 & 2.16 & 5.03 & 3.04 & 5.03 & 3.69 & 5.02 & 4.14\\
Lrnr\_glm & 4.69 & 2.00 & 4.60 & 2.78 & 4.58 & 3.36 & 4.56 & 3.76\\
Lrnr\_xgboost & 2.49 & 1.06 & 1.74 & 1.04 & 1.41 & 1.03 & 1.24 & 1.02\\
Lrnr\_svm & 3.27 & 1.40 & 2.94 & 1.77 & 2.72 & 1.99 & 2.52 & 2.08\\
Lrnr\_rf & 2.51 & 1.07 & 1.78 & 1.07 & 1.44 & 1.06 & 1.27 & 1.04\\
\bottomrule
\end{tabular}

    \caption{Prediction performance of super-learners with simple distributions (\ref{eq:simple}) in Section \ref{sec:pred}. }
    \label{tab:simResult_d5_diversed}
\end{table}

\begin{table}
    \centering

\begin{tabular}{lcccccccc}
\toprule
\multicolumn{1}{c}{ } & \multicolumn{2}{c}{n = 200} & \multicolumn{2}{c}{n = 500} & \multicolumn{2}{c}{n = 1000} & \multicolumn{2}{c}{n = 2000} \\
\cmidrule(l{3pt}r{3pt}){2-3} \cmidrule(l{3pt}r{3pt}){4-5} \cmidrule(l{3pt}r{3pt}){6-7} \cmidrule(l{3pt}r{3pt}){8-9}
metalearner & mse & relative\_mse & mse & relative\_mse & mse & relative\_mse & mse & relative\_mse\\
\midrule
Meta-HAL-d2 (Honest CV) & 13.59 & 0.97 & 9.67 & 0.87 & 7.17 & 0.84 & 5.28 & 0.84\\
Meta-HAL-d2 (Fast SL CV) & 13.67 & 0.97 & 10.00 & 0.90 & 7.20 & 0.84 & 5.29 & 0.84\\
Meta-HAL-d2 (Fast CV) & 13.65 & 0.97 & 9.69 & 0.87 & 7.18 & 0.84 & 5.29 & 0.84\\
Meta-HAL-d1 (Honest CV) & 13.39 & 0.95 & 9.54 & 0.86 & 7.10 & 0.83 & 5.27 & 0.84\\
Meta-HAL-d1 (Fast SL CV) & 13.37 & 0.95 & 10.00 & 0.90 & 7.52 & 0.88 & 5.47 & 0.87\\
Meta-HAL-d1 (Fast CV) & 13.60 & 0.97 & 9.57 & 0.86 & 7.10 & 0.83 & 5.27 & 0.84\\
Convex & 14.06 & 1.00 & 11.15 & 1.00 & 8.55 & 1.00 & 6.29 & 1.00\\
NNLS & 13.33 & 0.95 & 9.90 & 0.89 & 7.46 & 0.87 & 5.51 & 0.88\\
Discrete SL & 14.80 & 1.05 & 11.05 & 0.99 & 8.24 & 0.96 & 6.10 & 0.97\\
Average & 14.62 & 1.04 & 12.69 & 1.14 & 11.29 & 1.32 & 10.14 & 1.61\\
Lrnr\_mean & 19.90 & 1.42 & 19.83 & 1.78 & 19.81 & 2.32 & 19.80 & 3.15\\
Lrnr\_glm & 16.98 & 1.21 & 15.69 & 1.41 & 15.35 & 1.80 & 15.16 & 2.42\\
Lrnr\_xgboost & 14.95 & 1.06 & 11.04 & 0.99 & 8.24 & 0.96 & 6.10 & 0.97\\
Lrnr\_svm & 15.21 & 1.08 & 12.97 & 1.17 & 11.50 & 1.35 & 10.32 & 1.64\\
Lrnr\_rf & 14.61 & 1.04 & 12.56 & 1.13 & 11.14 & 1.30 & 9.95 & 1.58\\
\bottomrule
\end{tabular}

    \caption{
    Prediction performance of super-learners with more complicated distributions (\ref{eq:complex}) in Section \ref{sec:pred}. 
    }
    \label{tab:simResult_d20_diversed_wo_g}
\end{table}

\subsection{Performance of M-HAL-SL as Ensemble Method}\label{sec:ensemble}

When each of the base learners utilizes only part of the variables and captures partial information, M-HAL-SL can illustrate advantages as a more flexible ensemble method. In those scenarios, the unknown and possibly non-linear relationship between the dependent variables and base learners can still be approximated by M-HAL-SL so long as the variation norm is bounded on the meta-level (which usually has much lower dimensions depending on the number of base learner algorithms), whereas Convex-SL and NNLS-SL may be capped by the performance of each individual learner. 

We consider the following two scenarios:
\begin{enumerate}
    \item data is generated with the simple distribution, and base learner algorithms only include univariate linear regression estimators on single covariate $X_i$, 
    \item data is generated with the more complicated distribution, and base learner algorithms are univariate linear regression estimators on single covariate $X_i$. 
\end{enumerate}

In both scenarios, the simulation results show expected performance. As base learners only learn from a subset of covariates, the prediction accuracy of M-HAL-SL significantly increases with larger sample sizes relative to other super-learners and base learners (Table \ref{tab:simResult_d5_ols} and Table \ref{tab:simResult_d20_ols_wo_g}). These results demonstrate the potential application of M-HAL-SL as a better model fusion method that can achieve higher precision than each individual model and regular super-learners, especially when each base learner contains only partial information. 
For example, this is applicable to the analysis of multi-modal data, where each modality requires separate training of complex models with tailored structures.

\begin{table}
    \centering

\begin{tabular}{lcccccccc}
\toprule
\multicolumn{1}{c}{ } & \multicolumn{2}{c}{n = 200} & \multicolumn{2}{c}{n = 500} & \multicolumn{2}{c}{n = 1000} & \multicolumn{2}{c}{n = 2000} \\
\cmidrule(l{3pt}r{3pt}){2-3} \cmidrule(l{3pt}r{3pt}){4-5} \cmidrule(l{3pt}r{3pt}){6-7} \cmidrule(l{3pt}r{3pt}){8-9}
metalearner & mse & relative\_mse & mse & relative\_mse & mse & relative\_mse & mse & relative\_mse\\
\midrule
Meta-HAL-d2 (Honest CV) & 3.72 & 0.79 & 3.03 & 0.65 & 2.62 & 0.56 & 2.39 & 0.51\\
Meta-HAL-d2 (Fast SL CV) & 3.74 & 0.80 & 3.12 & 0.67 & 2.74 & 0.59 & 2.46 & 0.53\\
Meta-HAL-d2 (Fast CV) & 3.76 & 0.80 & 3.15 & 0.68 & 2.70 & 0.58 & 2.47 & 0.53\\
Meta-HAL-d1 (Honest CV) & 4.45 & 0.95 & 4.03 & 0.86 & 3.79 & 0.81 & 3.61 & 0.78\\
Meta-HAL-d1 (Fast SL CV) & 4.58 & 0.97 & 4.23 & 0.90 & 3.92 & 0.84 & 3.65 & 0.78\\
Meta-HAL-d1 (Fast CV) & 4.43 & 0.94 & 4.08 & 0.87 & 3.84 & 0.82 & 3.66 & 0.79\\
Convex & 4.71 & 1.00 & 4.67 & 1.00 & 4.67 & 1.00 & 4.66 & 1.00\\
NNLS & 4.71 & 1.00 & 4.67 & 1.00 & 4.66 & 1.00 & 4.66 & 1.00\\
Discrete SL & 4.74 & 1.01 & 4.68 & 1.00 & 4.67 & 1.00 & 4.67 & 1.00\\
Average & 4.88 & 1.04 & 4.86 & 1.04 & 4.86 & 1.04 & 4.85 & 1.04\\
x1 & 5.04 & 1.07 & 5.00 & 1.07 & 4.99 & 1.07 & 4.99 & 1.07\\
x2 & 5.07 & 1.08 & 5.04 & 1.08 & 5.03 & 1.08 & 5.03 & 1.08\\
x3 & 5.01 & 1.06 & 4.98 & 1.06 & 4.97 & 1.06 & 4.96 & 1.06\\
x4 & 4.71 & 1.00 & 4.68 & 1.00 & 4.67 & 1.00 & 4.67 & 1.00\\
x5 & 5.06 & 1.08 & 5.03 & 1.08 & 5.02 & 1.08 & 5.01 & 1.08\\
\bottomrule
\end{tabular}

    \caption{
    Ensemble performance of super-learners when each base learner is a univariate linear regression on each single covariate $X_i$ (Section \ref{sec:ensemble}) with simple distributions (\ref{eq:simple}). 
    }
    \label{tab:simResult_d5_ols}
\end{table}

\begin{table}
    \centering

\begin{tabular}{lcccccccc}
\toprule
\multicolumn{1}{c}{ } & \multicolumn{2}{c}{n = 200} & \multicolumn{2}{c}{n = 500} & \multicolumn{2}{c}{n = 1000} & \multicolumn{2}{c}{n = 2000} \\
\cmidrule(l{3pt}r{3pt}){2-3} \cmidrule(l{3pt}r{3pt}){4-5} \cmidrule(l{3pt}r{3pt}){6-7} \cmidrule(l{3pt}r{3pt}){8-9}
metalearner & mse & relative\_mse & mse & relative\_mse & mse & relative\_mse & mse & relative\_mse\\
\midrule
Meta-HAL-d2 (Honest CV) & 14.64 & 0.80 & 10.14 & 0.56 & 7.35 & 0.40 & 4.88 & 0.26\\
Meta-HAL-d2 (Fast SL CV) & 14.64 & 0.80 & 10.14 & 0.56 & 7.35 & 0.40 & 4.88 & 0.26\\
Meta-HAL-d2 (Fast CV) & 14.90 & 0.82 & 10.35 & 0.57 & 7.60 & 0.41 & 5.01 & 0.27\\
Meta-HAL-d1 (Honest CV) & 15.44 & 0.85 & 12.80 & 0.70 & 11.36 & 0.62 & 10.31 & 0.56\\
Meta-HAL-d1 (Fast SL CV) & 15.52 & 0.85 & 12.99 & 0.71 & 11.53 & 0.63 & 10.41 & 0.56\\
Meta-HAL-d1 (Fast CV) & 15.58 & 0.85 & 12.93 & 0.71 & 11.54 & 0.63 & 10.46 & 0.56\\
Convex & 18.25 & 1.00 & 18.26 & 1.00 & 18.36 & 1.00 & 18.52 & 1.00\\
NNLS & 17.20 & 0.94 & 16.36 & 0.90 & 16.03 & 0.87 & 15.83 & 0.86\\
Discrete SL & 18.84 & 1.03 & 18.63 & 1.02 & 18.58 & 1.01 & 18.58 & 1.00\\
Average & 19.41 & 1.06 & 19.36 & 1.06 & 19.33 & 1.05 & 19.33 & 1.04\\
x1 & 19.91 & 1.09 & 19.76 & 1.08 & 19.73 & 1.07 & 19.71 & 1.06\\
x2 & 18.71 & 1.03 & 18.59 & 1.02 & 18.55 & 1.01 & 18.54 & 1.00\\
x3 & 19.92 & 1.09 & 19.79 & 1.08 & 19.76 & 1.08 & 19.74 & 1.07\\
x4 & 19.92 & 1.09 & 19.81 & 1.08 & 19.78 & 1.08 & 19.76 & 1.07\\
x5 & 19.99 & 1.10 & 19.86 & 1.09 & 19.82 & 1.08 & 19.80 & 1.07\\
x6 & 19.98 & 1.10 & 19.86 & 1.09 & 19.82 & 1.08 & 19.81 & 1.07\\
x7 & 19.95 & 1.09 & 19.83 & 1.09 & 19.80 & 1.08 & 19.78 & 1.07\\
x8 & 19.93 & 1.09 & 19.79 & 1.08 & 19.75 & 1.08 & 19.73 & 1.06\\
x9 & 19.96 & 1.09 & 19.81 & 1.08 & 19.77 & 1.08 & 19.76 & 1.07\\
x10 & 19.79 & 1.08 & 19.65 & 1.08 & 19.60 & 1.07 & 19.59 & 1.06\\
x11 & 19.88 & 1.09 & 19.75 & 1.08 & 19.73 & 1.07 & 19.71 & 1.06\\
x12 & 18.69 & 1.02 & 18.57 & 1.02 & 18.54 & 1.01 & 18.52 & 1.00\\
x13 & 19.93 & 1.09 & 19.80 & 1.08 & 19.76 & 1.08 & 19.74 & 1.07\\
x14 & 19.96 & 1.09 & 19.83 & 1.09 & 19.77 & 1.08 & 19.76 & 1.07\\
x15 & 19.98 & 1.10 & 19.86 & 1.09 & 19.82 & 1.08 & 19.80 & 1.07\\
x16 & 18.87 & 1.03 & 18.71 & 1.02 & 18.67 & 1.02 & 18.64 & 1.01\\
x17 & 19.76 & 1.08 & 19.66 & 1.08 & 19.62 & 1.07 & 19.61 & 1.06\\
x18 & 19.94 & 1.09 & 19.80 & 1.08 & 19.76 & 1.08 & 19.74 & 1.07\\
x19 & 19.90 & 1.09 & 19.75 & 1.08 & 19.72 & 1.07 & 19.70 & 1.06\\
x20 & 19.99 & 1.10 & 19.85 & 1.09 & 19.82 & 1.08 & 19.80 & 1.07\\
\bottomrule
\end{tabular}

    \caption{
    Ensemble performance of super-learners when each base learner is a univariate linear regression on each single covariate $X_i$ (Section \ref{sec:ensemble}) with more complicated distributions (\ref{eq:complex}). 
    }
    \label{tab:simResult_d20_ols_wo_g}
\end{table}

\subsection{Undersmoothed M-HAL-MLE for Plug-in Estimation of Target Feature}\label{sec:plugin}


This section evaluates the performance of (undersmoothed) M-HAL-SL plug-in estimation, following the treatment specific mean example (Section \ref{sec:TSM}). We first verify asymptotic linearity properties with low-dimensional ($p << n$) covariates. In addition, we simulate high-dimensional covariates with small $n$ large $p$ (common in electronic health records, genomics, and brain imaging data) and intentionally make the initial estimators overfitted with increased variation in order to test that the required bounded variation condition is indeed relaxed to the meta-level data (in much lower dimensions) rather than the original input data. 
Such higher dimensional and highly varying models emulate the specific challenges when machine learning and deep learning algorithms are applied. 

We simulate $4$ clinical covariates along with $4$ or $4n$ additional covariates for lower or higher dimensional settings at sample size $n = 200$. Only $4$ additional covariates remain active in the propensity score model and the outcome model; 
choosing a larger $\ell_1$ norm bound than the cross-validated choice for the Lasso algorithm emulates an overfitted estimator with inflated sectional variation. The clinical covariates are always part of the true data generating process not subject to regularization. 
Specifically, we have the following additive model, 
\begin{align*}
    W = & (W_c, W_h) \\
    W_h = & (W_a, W_n) \\
    G_0(W) = & \beta_{0A} + \beta_A \cdot \pmb 1^{\top}(W_c, W_a)\\
    Y = & \beta_Y \cdot \pmb 1^{\top}(W_c, W_a) + \psi_0 A +  \epsilon\\
    \epsilon \sim & N(0, 1). 
\end{align*}
$W_c$ is the vector of clinical covariates with $|W_c| = 4$. 
$W_h = (W_a, W_n)$ is the vector of additional covariates where only $W_a$ is active with $|W_a| = 4$; we set the noise vector $W_n = \varnothing$ for low-dimensional cases and $|W_h| = 4n$ for high-dimensional (sparse) cases. 
We set $\beta_{0A} = - |W_a| \beta_A / 2$ to avoid violating positivity assumptions. $\beta_A = 0.2$.  $\beta_Y = 0.6$. $\psi_0 = 1$. Note that $\psi_0 = E_{P_0}(Y | A = 1, W) - E_{P_0}(Y | A = 0, W)$ is the target parameter, which is the difference between the treatment specific means as defined in Section \ref{sec:TSM}. 

The following estimators are evaluated. 
\begin{itemize}
    \item $\hat \psi_{\text{noadj}}$: the difference between the observed group means without covariate adjustment. 
    \item $\hat \psi_{\text{tmle}}$: TMLE adjusting for all covariates, $W$. For low-dimensional settings, the initial estimators of $G_0$ and $Q_0$ for $\hat \psi_{\text{tmle}}$ are main-term logistic and linear regressions. For high-dimensional settings with additional covariates $W_h$, we use regularized (generalized) linear regression to emulate potentially over-fitted initial estimators in $\hat \psi_{\text{tmle}}$. Specifically, 
the additional loss term is $\lambda_{1} ||\hat\beta_h||_1$ for regularization, where $\hat\beta_h$ is the estimated coefficients for high-dimensional covariates $W_h$; we then reduce $\lambda_{1}$ to 
$0.1\%$ of the cross-validated choice $\lambda_{\text{cv}}$ to emulate over-fitted initial estimators with increased total variations. 
    \item $\hat \psi_{\text{meta}}^{\text{init}}$: M-HAL-SL plug-in, where the learner library, $\hat{{\bf Q}}$, consists of four univariate identity maps for the clinical covariate, the two group mean estimations using the same initial learner for $Q_0$ in $\hat \psi_{\text{tmle}}$, and the initial learner for $G_0$ in $\hat \psi_{\text{tmle}}$. Therefore, the meta-level data is $J=7$-dimensional, compared with the up to over $800$-dimensional original input. The fast M-HAL-MLE variation bound selector (verified in Section \ref{new_section6}) is applied. 
    \item $\hat \psi_{\text{meta}}^{\text{us}}$ ($\hat G$): Undersmoothed M-HAL-SL plug-in. Note that $\hat \psi_{\text{meta}}^{\text{us}}$ is a function of a separate $G$ estimator that converges to $\tilde G_0$ in (\ref{r1}), which may lead to super-efficiency when $\tilde G_0 \not = G_0$. 
    For a direct comparison with regular full-data $\hat \psi_{\text{tmle}}$, we use the same $\hat G$ as the initial learner for $G_0$ in $\hat \psi_{\text{tmle}}$. 
\end{itemize}

    In low-dimensional settings, undersmoothed M-HAL-SL plug-in achieves asymptotic linearity with a reasonable finite-sample bias/SE ratio at a small cost in MSE compared to regular TMLE (Figure \ref{fig:lowD}, Table \ref{tab:lowD}). Note that undersmoothing reduces bias and improves bias-variance trade-off of the M-HAL-SL plug-in. 
    
    In high-dimensional settings with LASSO initial learners, only undersmoothed M-HAL-SL plug-in maintains reasonable bias/SE ratios with bias reduction, illustrating stable performance even with overfitted full-data initial learners that have increased variation (Figure \ref{fig:lasso}, Table \ref{tab:lasso}).

\begin{figure}
    \centering
    \includegraphics[width=10cm]{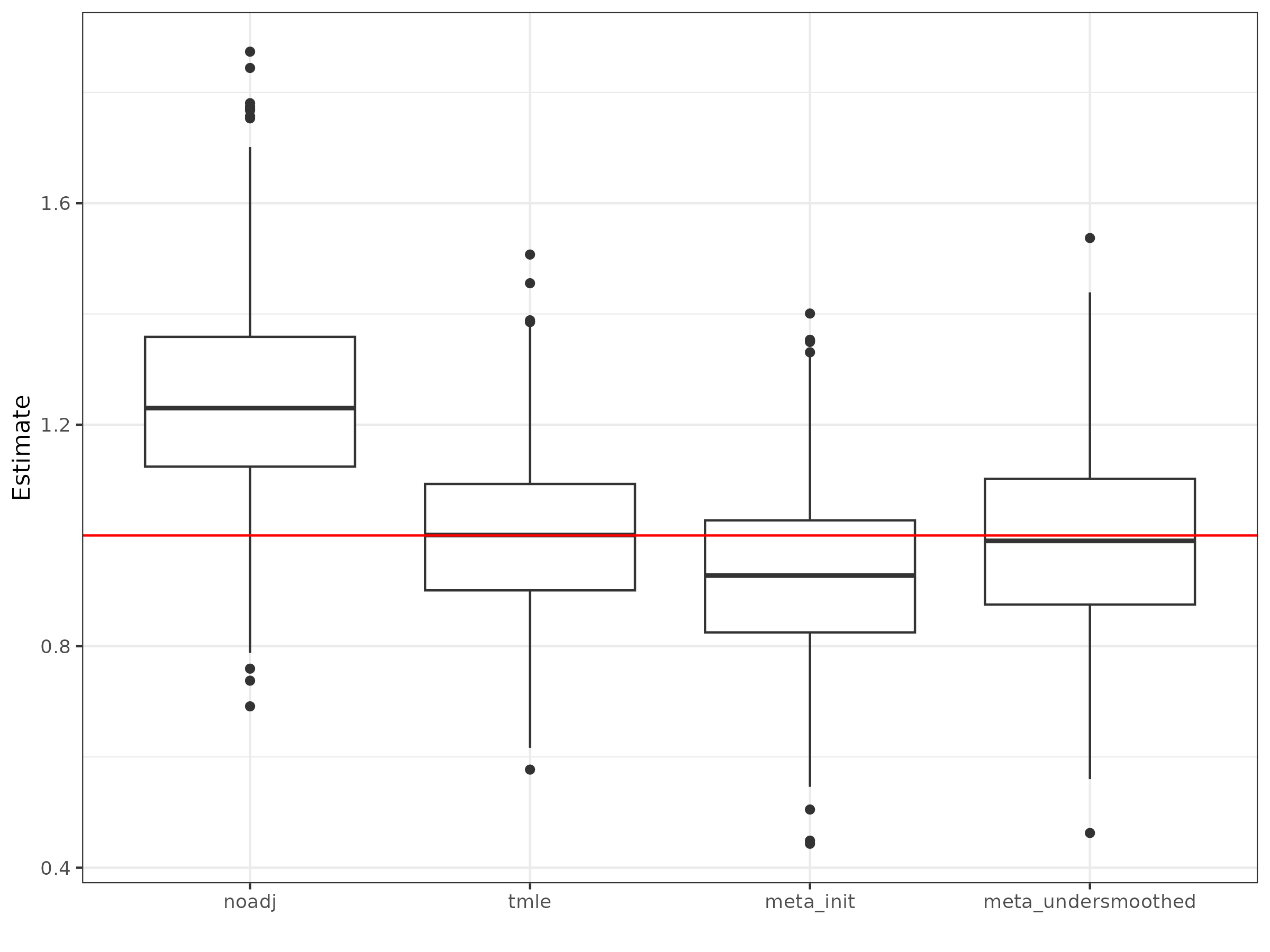}
    \caption{
        Performance of (undersmoothed) M-HAL-SL plug-in estimators compared with regular TMLE or no adjustment in Section \ref{sec:plugin}. Low-dimensional settings; $n = 200$ with $4$ clinical covariates and $4$ additional covariates. The horizontal red line indicates the true parameter value.
    }
    \label{fig:lowD}
\end{figure}

\begin{table}[ht]
\centering

\begin{tabular}{lrrrr}
\toprule
  & MSE & Bias & SD & Ratio\\
\midrule
noadj & 0.096 & 0.243 & 0.192 & 1.265\\
tmle & 0.022 & 0.003 & 0.149 & 0.017\\
meta\_init & 0.032 & -0.074 & 0.163 & -0.452\\
meta\_undersmoothed & 0.028 & -0.011 & 0.166 & -0.063\\
\bottomrule
\end{tabular}

 \caption{Performance of (undersmoothed) M-HAL-SL plug-in estimators compared with regular TMLE or no adjustment in Section \ref{sec:plugin}. Low-dimensional settings; $n = 200$ with $4$ clinical covariates and $4$ additional covariates. }
 \label{tab:lowD}
\end{table}

\begin{figure}
    \centering
    \includegraphics[width=10cm]{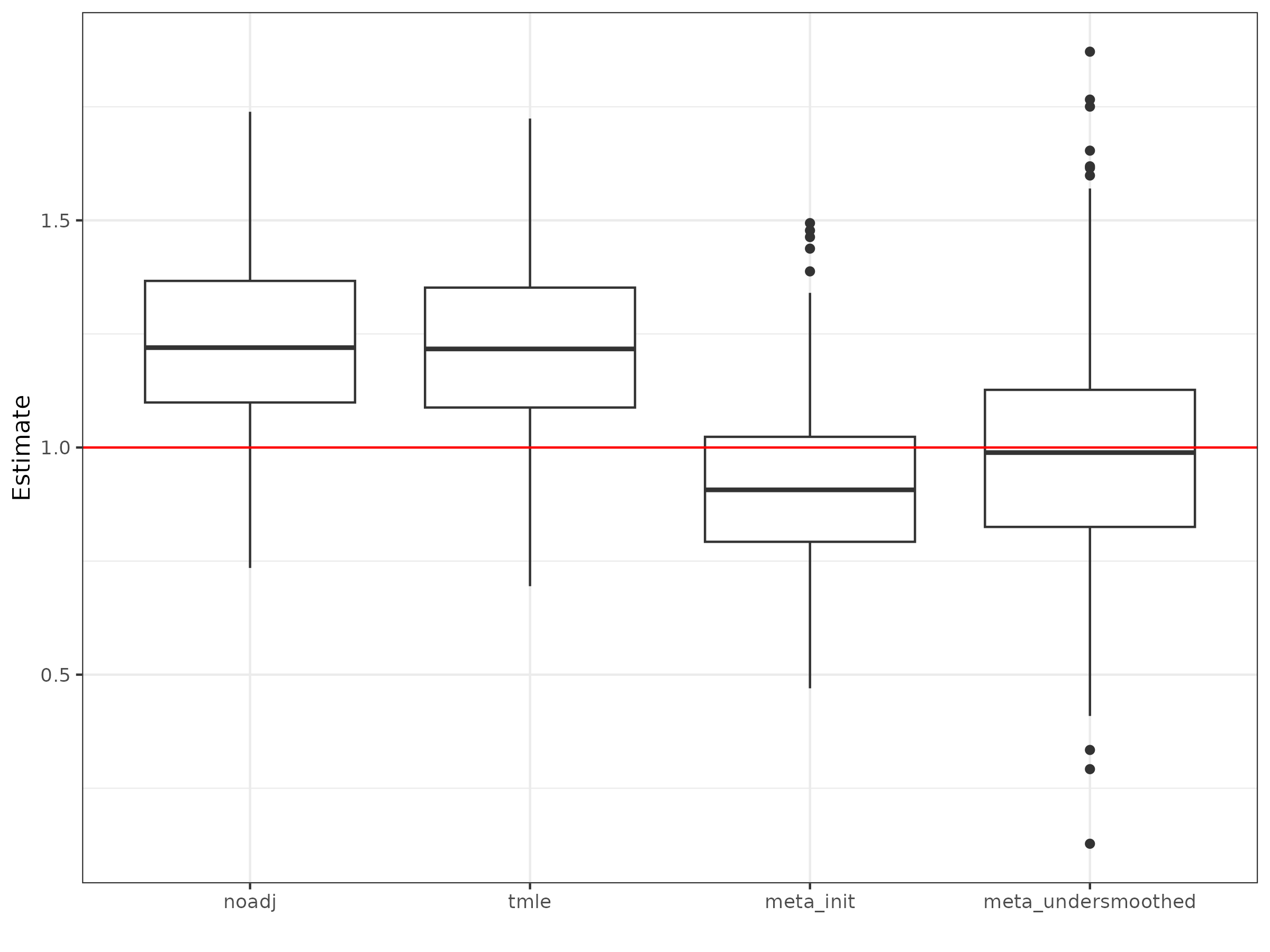}
    \caption{
                Performance of (undersmoothed) M-HAL-SL plug-in estimators compared with regular TMLE or no adjustment in Section \ref{sec:plugin}. High-dimensional settings where initial learners may be overfitted; $n = 200$ with $4$ clinical covariates and $800$ additional covariates; initial learners use the LASSO algorithm and $0.1\%$ of the cross-validated choice of $\lambda_1$.  The horizontal red line indicates the true parameter value.
    }
    \label{fig:lasso}
\end{figure}

\begin{table}[ht]
\centering
 \begin{tabular}{rrrrr}
  \hline
 & MSE & Bias & SD & Ratio \\
\hline
noadj & 0.091 & 0.233 & 0.192 & 1.217\\
  \hline
$\lambda_{\text{cv}}$ & MSE & Bias & SD & Ratio \\
\hline
tmle & 0.031 & 0.069 & 0.161 & 0.429\\
meta\_init & 0.036 & -0.082 & 0.170 & -0.483\\
meta\_undersmoothed & 0.064 & 0.015 & 0.253 & 0.058\\
  \hline
$0.1\% \lambda_{\text{cv}}$ & MSE & Bias & SD & Ratio \\
\hline
tmle & 0.088 & 0.227 & 0.192 & 1.185\\
meta\_init & 0.037 & -0.085 & 0.173 & -0.494\\
meta\_undersmoothed & 0.057 & -0.011 & 0.238 & -0.044\\
  \hline
\end{tabular}

 \caption{Performance of (undersmoothed) M-HAL-SL plug-in estimators compared with regular TMLE or no adjustment in Section \ref{sec:plugin}. High-dimensional settings where initial learners may be overfitted; $n = 200$ with $4$ clinical covariates and $800$ additional covariates; initial learners use the LASSO algorithm. $100\%$ or $0.1\%$ of the cross-validated choice of $\lambda_1$ is used. 
 }
 \label{tab:lasso}
\end{table}

\section{Data Application} \label{sec:data}

To demonstrate the utility of the proposed method, we applied it to a high-dimensional imaging-based mediator analysis in pain studies \citep{geuter2020multiple,nath2023machine}. The dataset consisted of 10,472 trials where thermal stimuli were applied and participants' subjective pain ratings were reported. Resting-state fMRI data was collected during the experiments, and preprocessed activation maps with $91\times 109\times 91$ voxels were analyzed as the high-dimensional mediator. The percentage of thermal stimulus's effect on pain rating, $Y$ mediated through activation maps, $Z$, 
\begin{align*}
    \frac{\mathbb{E}\{Y(1) - Y(1, Z(0))\}}{\mathbb{E}\{Y(1) - Y(0)\}}
\end{align*}
defined by the ratio of the natural indirect effect (NIE) and average treatment effect (ATE), was the target parameter. Under identification assumptions, the percentage mediated is a pathwise differentiable parameter. Due to the experiment design, a positive percentage mediated was expected. 

This estimation problem was challenging because of the estimation of conditional expectations given high-dimensional mediators. 
Therefore, we compared: 
\begin{itemize}
    \item Strategy 1: dimension reduction using pretrained ResNet3D models  (without the last classification layer, from $91\times 109\times 91$ to $512$), followed by targeted maximum likelihood estimation (TMLE) \citep{zheng2017longitudinal,wang2023targeted} using HAL-MLE as the initial estimators, and
    \item Strategy 2: plug-in estimation with M-HAL super-learners, where the additional meta-learning step (data-adaptive coordinate-transformation 
    on top of the $512$-dimensional summary input) further reduced the effective mediator dimension to $2$ (two estimated functions that identify NIE and predict the influence curves; see Appendix \ref{sec:appendix_med}).
\end{itemize} 
We generated a bootstrap sample of size $10,000$ to create bootstrap 95\% confidence intervals (CIs) and test the performance on a true effect sample where a positive percentage mediated is expected. A null effect sample was also generated by replacing the mediator, $Z$, with independent noise following standard normal distribution (so that $0\%$ mediated was expected), in order to test the performance of type-I error protection. 

Figure \ref{fig:tmle_metahal_imaging} shows that 
TMLE using the 512-dimensional transformed mediators is still unstable and potentially biased on the true effect samples 
(boostrap CI is above 0 but wider than the range $[0, 1]$ of the percentage mediated) and is subject to type-I error on the null effect sample  (bootstrap CI remains above 0 despite noninformative mediators). In comparison, M-HAL super-learner plug-in constructs a much narrower bootstrap CI above 0 from the true effect sample, conforming the positive effect; meanwhile, the bootstrap CI from the null effect sample is around the truth 0, protecting against type-I error. This verifies the reliable asymptotic properties of M-HAL super-learner plug-in, which is based on more realistic conditions (defined with respect to a much lower dimensional function space) that are more easily satisfied on the finite sample. It illustrates that the proposed method, with the additional meta-learning step and data-adaptive coordinate-transformation, is promising for effectively transferring existing model knowledge while providing reliable inference under traditionally challenging curse-of-dimensionality scenarios. 

The use case of meta-HAL in the context of multiple pretrained models is further investigated. Seven pretrained models, defined with different network depths, are available and may construct initial dimension reduction from $91\times 109\times 91$ to either $512$ or $2,048$, depending on the dimension of the layer prior to the last classification layer. Without further prior knowledge, the optimal model choice is unknown, and therefore it is preferable to use an ensemble that integrates all available information. However, combining all these available dimension reduction models would be challenging with Strategy 1, which already suffers from the curse of dimensionality using one of the models (ResNet3D\_10). Because Strategy 2 implements a 2-dimensional coordinate-transformation for each model, it is possible to construct an effective ensemble. Figure \ref{fig:ensemble} shows that meta-HAL plug-in utilizing a 14-dimensional coordinate-transformation out of all the seven networks successfully confirms the largest percentage mediated, compared with the meta-HAL plug-in estimation using each single network. 

\begin{figure}
    \centering
    \includegraphics[width=0.9\linewidth]{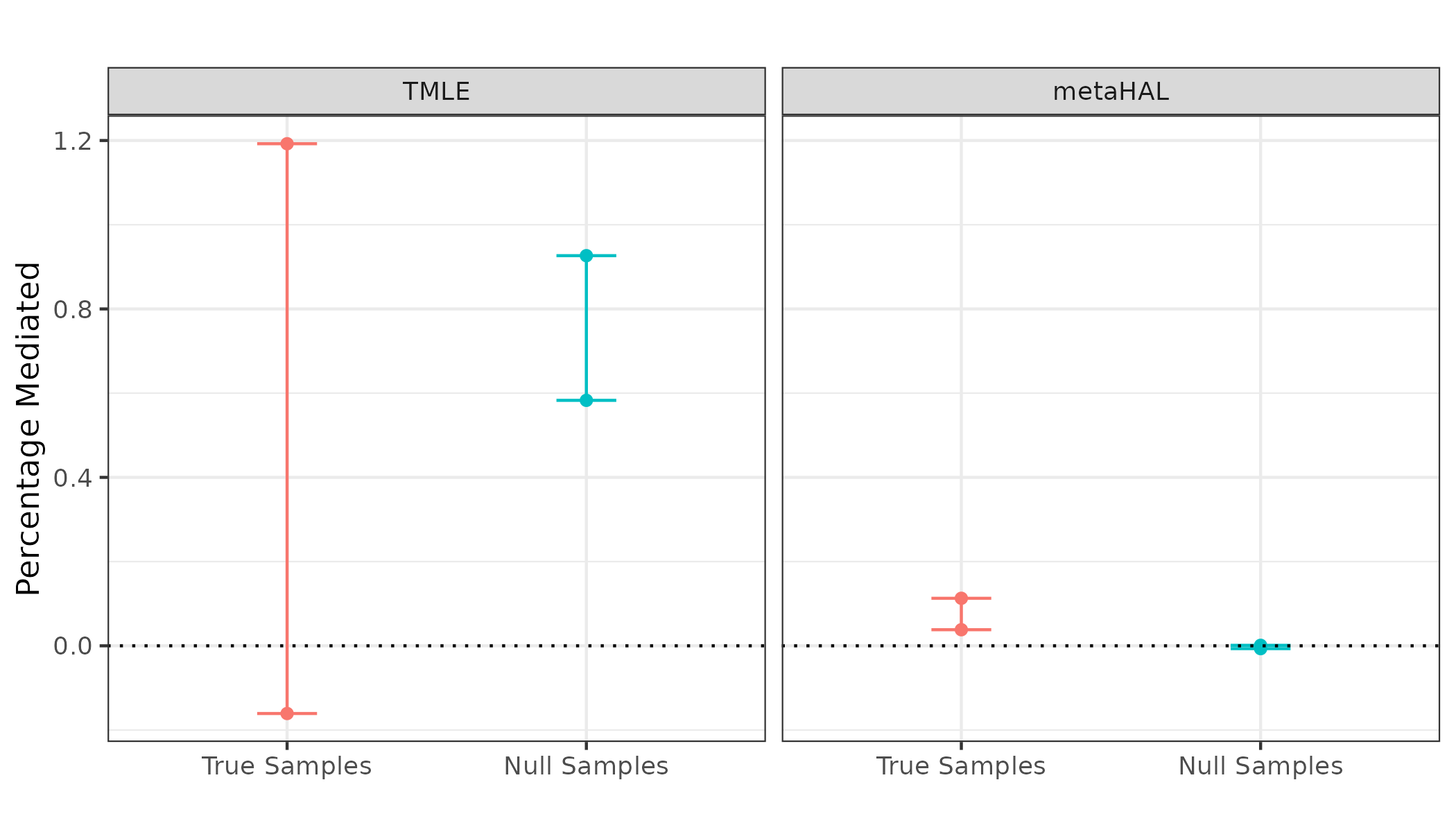}
    \caption{Bootstrap CIs using TMLE with standard mediator dimension reduction versus undersmoothed meta-HAL plug-in. Bootstrap sample size: 10,000. True effect samples: preprocessed fMRI mediators ($91\times 109\times 91$-dimensional) transformed through a \texttt{ResNet3D\_10} model ($512$-dimensional, stopped before the fully-connected classification layer)  with pretrained weights \citep{chen2019med3d}. Null effect samples: each dimension of the original mediators replaced by independent standard normal noise.}
    \label{fig:tmle_metahal_imaging}
\end{figure}

\begin{figure}
    \centering
    \includegraphics[width=0.9\linewidth]{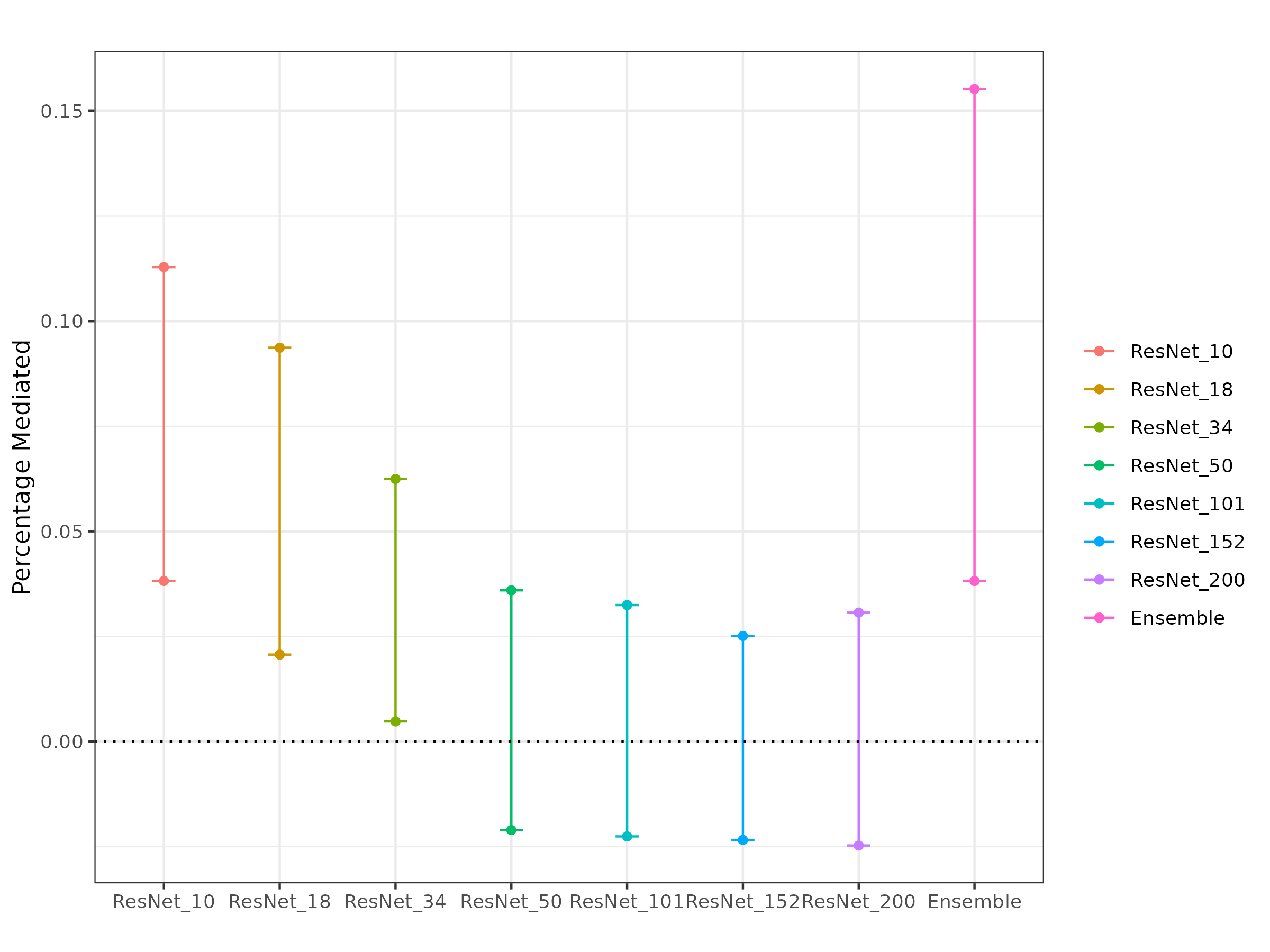}
    \caption{
Bootstrap CIs with undersmoothed meta-HAL plug-in using coordinate-transformation created by  each individual pretrained ResNet models (labeled by the depths of the networks) or an ensemble of all the seven models. Bootstrap sample size: 10,000. 
    }
    \label{fig:ensemble}
\end{figure}

 \section{Discussion}\label{section8} 
 We proposed a super-learner that uses HAL-MLE to select a best ensemble among all cadlag functions of the $J$ estimators in its library with a  bound on its sectional variation norm. 
 The oracle ensemble estimated by this HAL-MLE results in an oracle estimator that truly represents a powerful estimator. This is also reflected by the fact that if one selects $J$ equal to the dimension $k$ of the input of the true function, and the realized $J$ estimates   represent an invertible transformation of the input, then the realized oracle estimate actually equals the true function. Even when the library of estimators is very small, but includes one good estimator, then the oracle ensemble will improve upon this estimator by having enormous flexibility to correct its errors. 
 
 Therefore, in great generality, our results demonstrate that the performance of the M-HAL super-learner w.r.t. true function  is all about how well the M-HAL-MLE estimates the oracle ensemble, where the $J$ cross-fitted estimates are treated as fixed and represent the coordinates of the oracle ensemble. In particular, we show that the same applies for any pathwise differentiable target feature: the performance of the target feature of the  M-HAL super-learner w.r.t. target estimand is all about how well the corresponding target feature (i.e., w.r.t. $J$ coordinates) of the meta HAL-MLE estimates the target feature of the oracle ensemble, where the $J$ cross-fitted estimates are treated as fixed coordinates. 
 In this manner, we were able to show that 1) the M-HAL super-learner converges at a rate $n^{-2/3}(\log n)^{d^r}$ to the true function w.r.t. excess risk; and 2) when the sectional variation norm is chosen to satisfy a global undersmoothing criterion, then  target features of the  M-HAL super-learner will be asymptotically linear (efficient or super-efficient) estimators of the target features of the true function.  These two properties equally apply to a targeted M-HAL super-learner in which we use targeted HAL-MLE to estimate the oracle ensemble, thereby enforcing it to solve the efficient influence curve equation for a user supplied set of target features. 

In an upcoming tech report \citep{vanderLaan&Cai20} we also proposed a targeted-HAL-MLE (T-HAL-MLE) defined by enforcing in the definition of the HAL-MLE  an additional constraint, beyond the sectional variation norm bound, defined as the empirical mean of the efficient influence curve equation for the target parameter being equal to zero (or its Euclidean norm small enough). 
This targeting preserves the rate of convergence of the HAL-MLE
and
is now asymptotically  efficient for the target parameter without need for undersmoothing.
It
also remains
asymptotically efficient for other target parameters under the global undersmoothing condition. 
By using the analogue T-HAL-MLE of the true oracle cadlag function at the meta-level, the corresponding T-M-HAL super-learner of our true  function has the same above mentioned asymptotic properties as the T-HAL-MLE. Such meta-level targeted MLE will be important extension of this work, achieving similar asymptotic performance without undersmoothing. 

Of course, one could also decide to use the M-HAL super-learner as a highly powerful initial estimator of the functional parameter in the TMLE of a target parameter, in which case no undersmoothing will be needed
\citep{vanderLaan&Rubin06,vanderLaan08, vanderLaan&Rose11,vanderLaan&Rose18,van2016one}. 
When using as the initial estimator an undersmoothed M-HAL super-learner, we can also design this TMLE to preserve the score equations solved by the initial estimator. 
For example, one can choose to orthogonalize the efficient score with respect to the solved scores, or to jointly solve all the desired score equations with a one-step update along multivariate submodels \citep{van2016one}. 
Similar to an undersmoothed T-M-HAL super-learner, such score-preserving TMLE can also efficiently estimate other smooth features that it did not target. 
In other words, the targeting step of a score-preserving TMLE does not destroy the properties of the undersmoothed M-HAL super-learner for plug-in estimation of features that it did not target. 
Although not the topic of this article, we suggest that undersmoothing the M-HAL super-learner when using it as initial estimator in the TMLE (or undersmoothing the T-M-HAL super-learner)  might also benefit the target feature (even though the estimator is already targeted towards this feature). That is, this undersmoothing  will generally improve the second order behavior of the resulting plug-in estimator  of the target parameter, by solving additional score equations that shrink the size of its exact second order remainder \citep{van2021higher}.

A remaining question is then what we gained relative to using a regular HAL-MLE (or T-HAL-MLE).  However, the transformed coordinates implied by the $J$ cross-fitted functions  will typically allow the oracle ensemble to be a cadlag function with significantly smaller sectional variation norm than the true function as a function of the original coordinates. This is obvious if we select $J=1$ and choose a single super-learner as the ''library'' of estimators, in which case the oracle ensemble is a 1-dimensional function. However, even when we select $J$ larger, if there are good estimators in the library, then, even using the best estimator among the $J$ estimators  as intercept would already mean that  the sectional variation norm only needs to suffice to fit the residual bias. 
 In addition, even when none of the $J$ estimators are good, but they represent an effective coordinate-transformation, gains could be expected. 
 
 Another interesting feature of the M-HAL super-learner is that we did not have to make assumptions on the parameter space of the true function, beyond that the oracle estimator needs to do a good job in approximating the true function. Assuming the latter, this means that we allow the true function to be non-cadlag and/or have infinite sectional variation norm. So, in essence, the only smoothness condition we enforced on the parameter space of the true function is that an oracle ensemble of $J$ candidate estimators has an excess risk that converges at a good rate (e.g, $n^{-1/4}$) to the true function.

The M-HAL super-learner is highly user friendly by not requiring the user to choose a meta-learning step. The only tuning parameter for the meta-learning is the sectional variation norm (i.e, $L_1$-norm in HAL), and that one can be optimally selected with the cross-validation selector for the sake of the function as a whole, and we could target this selector to a collection of target features as well with relatively straightforward undersmoothing criteria.  

An interesting question is if an HAL-MLE using as coordinates the $J$ fitted functions is relatively interpretable if the $J$ estimators are themselves interpretable estimators \citep{valdes2021conditional,barragan2022towards}. The basis functions in HAL are just indicators, suggesting that these indicators evaluated at the cross-fitted functions represent an interpretable basis function. Since the HAL-MLE is just a linear regression model in these basis functions, this suggests that the M-HAL super-learner fit is represented by a linear combination of interpretable basis functions, making it interpretable itself. In this manner, M-HAL super-learner is able to map a set of interpretable algorithms into a very powerful algorithm that is still interpretable. Another feature of HAL-MLE and thereby the M-HAL super-learner is that the HAL-fits have at most $n-1$ non-zero coefficients, again simplifying its interpretation (and fast evaluation of the fitted function).

The application of this M-HAL super-learner goes beyond the treatment specific mean examples, suitable for causal estimation problems with longitudinal interventions \citep{vanderLaan10,vanderLaan10b,petersen2014targeted} and mediation problems with (static or stochastic) interventions across multiple variables \citep{zheng2017longitudinal,wang2023targeted}. We can design coordinate transformations of baseline and time-varying covariates such that the loss function of M-HAL-SL only depends on the reduced data as (\ref{eq:dimr_loss}) and the EIC of the reduced data problem links to the EIC of the original data problem as (\ref{linkDrDstar}). The transformed coordinates lead to ensembles with generally smaller sectional variation norms and thereby potentially more reliable asymptotic linearity, as well as reduced computational costs for other applicable meta-level estimators such as collaborative TMLE \citep{Gruber2010ctmle}, targeted HAL-MLE \citep{vanderLaan&Cai20}, and higher-order spline-HAL \citep{vanderLaan&Benkeser&Cai19}.

An alternative and potentially more flexible approach is to define submodels where the conditional data likelihoods depend on full data only through summary covariates. 
For example, one can data-adaptively define summary covariates with predictors of conditional densities based on conditional hazards with exponential link functions. 
The pre-determined submodels can be viewed as a particular class of model constraints through dimension reduction. 
The plug-in at the projection of the true data distribution onto this submodel can be estimated as a projection target parameter with adaptive TMLE \citep{van2023adaptive}, where the oracle bias of the projection parameter, similar to that of the plug-in at the oracle ensemble of M-HAL-MLE, can be reasonably controlled.
This approach relaxes the conditions that the EICs need to be strictly linked before and after the coordinate transformation, and constructs a richer class of asymptotically linear and possibly super-efficient estimators. 
Future work following this direction is applicable to the analysis of network and single time-series data \citep{ogburn2022causal,malenica2021adaptive} or dimension reduction of other high-dimensional data such as electronic health records (EHR),  imaging, and genomics.

\begin{acks}[Acknowledgments]
This research is funded by NIH-grant R01AI074345-10A1.
\end{acks}

\begin{acks}[Data Availability Statement]
    The simulation data and analysis code that support the findings of this study are openly available in the GitHub repository at \href{https://github.com/zy-wang1/metaHAL}{https://github.com/zy-wang1/metaHAL}, which includes code to generate synthetic data with a structure similar to the real neuroimaging data. The raw task-based fMRI data on thermal pain can be accessed through the following studies \cite{koban2019different,krishnan2016somatic,roy2014representation,wager2013fmri,woo2015distinct};
    derived data supporting the findings of this study are available from the corresponding author Z.W. on request. 
\end{acks}

\clearpage
   
 \begin{appendix}

\subsection*{Supplementary Materials}\label{Appendix} 

\section{Notation for Meta-HAL super-learner  of functional parameter and its target features}
\begin{description}\label{AppendixA1}
\item[$O$:]{Unit data structure/random variable}
\item[$P$:]{Possible probability distribution of $O$}
\item[$Pf=\int f(o)dP(o)$:]{Expectation operator w.r.t. $P$}
\item[$P_0$:]{True probability distribution of $O$}
\item[$O_1,\ldots,O_n$:]{$n$ i.i.d. copies of $O\sim P_0$}
\item[$P_n$:]{Empirical measure of $O_1,\ldots,O_n$}
\item[$P_nf=\int f(o)dP_n(o)=1/n\sum_{i=1}^nf(O_i)$:]{Empirical mean operator}
\item[${\cal M}$:]{Statistical model for $P_0$, set of possible probability distributions including $P_0$}
\item[$Q:{\cal M}\rightarrow {\cal Q}$:]{Functional parameter of interest, where $Q(P):\openr^k\rightarrow [0,1]$ is a $k$-variate $[0,1]$-valued function}
\item[${\cal Q}=\{Q(P):P\in {\cal M}\}$:]{Parameter space of $Q$  consisting of $k$-variate real valued functions}
\item[$\Psi:{\cal M}\rightarrow\openr^k$:]{Euclidean valued target parameter mapping $P$ into $\Psi(P)$, chosen so that $\Psi(P)$ only depends on $P$ through $Q(P)$}
\item[$\Psi(Q)$:]{Alternative notation for $\Psi(P)$, represents a target feature of $Q$}
\item[$D^*(P)$:]{Canonical gradient of pathwise derivative $\left . \frac{d}{d\epsilon}\Psi(P_{\epsilon})\right |_{\epsilon =0}=P D^*(P)S$ of $\Psi$ at $P$ w.r.t. class of paths  $\{P_{\epsilon}:\epsilon\in (-\delta,\delta)\}$ through $P$ with score $S$}
\item[$G:{\cal M}\rightarrow{\cal G}$:]{Functional nuisance parameter $G(P)$ so that $D^*(P)$ only depends on $P$ through $Q(P)$ and $G(P)$. If $D^*(P)$ only depends on $Q(P)$, then $G(P)$ is empty and can be ignored}
\item[$D^*(Q,G)$:]{Alternative notation for $D^*(P)$}
\item[${\cal G}$:]{Parameter space of $G$ defined as ${\cal G}=\{G(P):P\in {\cal M}\}$}
\item[$R_2(P,P_0)$:]{Notation for exact second order remainder $R_2(P,P_0)\equiv \Psi(P)-\Psi(P_0)+P_0D^*(P)$ for $\Psi(P)-\Psi(P_0)$}
\item[$R_{20}(Q,G,Q_0,G_0)$:]{Alternative notation for $R_2(P,P_0)$}
\item[$L(Q)(o)$:]{ Loss function for $Q$ so that $Q_0=Q(P_0)=\arg\min_{Q\in {\cal Q}}P_0L(Q)$}
\item[$d_0(Q,Q_0)=P_0L(Q)-P_0L(Q_0)$:]{Excess risk of $Q$, loss-based dissimilarity}
\item[$L_1(G)(o)$:]{ Loss function for $G$ so that $G_0=G(P_0)=\arg\min_{G\in {\cal G}}P_0L_1(G)$}
\item[$d_{01}(G,G_0)$:]{ Excess risk of $G$, loss-based dissimilarity} 
\item[${\cal M}_{np}$:]{Set of all possible empirical probability measures that can occur as a realization of $P_n$ for any sample size $n$}
\item[$\hat{Q}_j:{\cal M}_{np}\rightarrow{\cal Q}$:]{An estimator that maps an empirical measure $P_n$ (e.g., of training sample)  into an element of ${\cal Q}$, $j=1,\ldots,J$}
\item[$\hat{\bf Q}=(\hat{Q}_j:j=1,\ldots,J)$:]{Collection of $J$ estimators, in context of super-learner it is called the library of $J$ estimators of $Q_0$}
\item[$(v_i,O_i)$:]{Using $V$-fold sample splitting, each observation $O_i$ gets assigned a value $v_i\in \{1,\ldots,V\}$. For each $v$, it defines a $v$-th sample split in validation sample $\{O_i: v_i=v\}$ and training sample $\{O_i:v_i\not =v\}$}
\item[$P_{n,v}^1$:]{Empirical measure of validation sample $\{O_i: v_i=v\}$ (approximately $n/V$ observations) for the $v$-th sample split}
\item[$P_{n,v}$:]{Empirical measure of training sample $\{O_i:v_i\not =v\}$ (approximately $n-n/V$ observations)}
\item[${\bf Q}_{n,v}=\hat{\bf Q}(P_{n,v})$:]{Vector of $J$ estimates based on training sample $P_{n,v}$, $v=1,\ldots,V$}
\item[${\bf Q}_n(v,x)={\bf Q}_{n,v}(x)$:]{ representing the collection of $V$ $J$-dimensional vector of estimates based on training sample $P_{n,v}$, $v=1,\ldots,V$}
\item[$P^{\bar{V}}$:]{Probability measure of $(\bar{V},O)$ implied by $P$ defined by $\bar{V}\sim U\{1,\ldots,V\}$ and conditional probability measure of $O$, given $\bar{V}=v$, equals $P$}
\item[$P^{\bar{V}}f=\frac{1}{V}\sum_{v=1}^V \int f(v,o)dP(o)$]
\item[$P_n^{\bar{V}}$:]{Empirical measure of $(v_i,O_i)$, $i=1,\ldots,n$}
\item[$P_n^{\bar{V}}f=\frac{1}{V}\sum_{v=1}^V P_{n,v}^1 f(v,\cdot)$]
\item[${\cal Q}^r$:]{Collection of real valued cadlag functions $Q^r:[0,1]^J\rightarrow\openr$ with sectional variation norm $\pl Q^r\pl_v^*$ bounded by some $C^u<\infty$. We also refer to this as collection of candidate ensembles of $J$  estimators}
\item[$x_s=(x(j):j\in s)$:]{ subvector of $x\in [0,1]^J$ defined by subset $s\subset\{1,\ldots,J\}$}
\item[$x_{-s}=(x(j):j\not \in s)$]
\item[$y=(x_s,0_{-s})$:]{ vector defined by $y(j)=x(j)$ if $j\in s$ and $y(j)=0$ if $j\not \in s$}
\item[$Q^r_s(x_s)=Q^r(x_s,0_{-s})$:]{ $s$-specific section of $Q^r$ that sets coordinates in complement of $s$ equal to $0$ and is viewed as function on $\mid s\mid$ dimensional $s$-specific edge $E_s\equiv \{x\in [0,1]^J:x_{-s}=0\}$ of $[0,1]^J$. Note $[0,1]^J=\cup_{s\subset\{1,\ldots,J\}}E_s$}
\item[$\pl Q^r\pl_v^*=\mid Q^r(0)\mid+\sum_{s\subset\{1,\ldots,J\}}\int_{(0_s,1_s]}\mid dQ^r_s(u)\mid$:]{ sectional variation norm of function $Q^r:[0,1]^J\rightarrow\openr$}
\item[$Q^r(x)=\sum_{s\subset\{1,\ldots,J\}}\int \phi_{s,x_s}(u)dQ^r_s(u)$:]{ representation of cadlag function $Q^r$ as infinitesimal linear combination of  tensor products of  zero-spline basis functions, $x\rightarrow \phi_{s,x_s}(u)=I(u\leq x_s)$, with knot point $u$. By convention, this sum includes intercept $Q^r(0)$ (corresponding with empty set $s$)}
\item[$Q^r$ {\rm for discrete measures}:] {Note that if $Q^r_s$ generates discrete measure $dQ^r_s$, then $Q^r(x)=\sum_{(s,j)}\beta^r(s,j)\phi_{s,j}(x)$ where $\beta^r(s,j)=dQ_s^r(u_{s,j})$ at support point $u_{s,j}$ of $dQ^r_s$}
\item[$Q^r\circ {\bf Q}_n(v,x)=Q^r({\bf Q}_n(v,x))$:]{ composition of ensemble $Q^r$ with $J$-dimensional vector ${\bf Q}_{n,v}$ of estimated functions based on $P_{n,v}$}
\item[$Q_{0,n}^r=\arg\min_{Q^r\in {\cal Q}^r}P_0^{\bar{V}}L(Q^r\circ{\bf Q}_n)$:]{ oracle ensemble that minimizes the conditional risk $1/V\sum_v P_0 L(Q^r\circ{\bf Q}_{n,v})$}
\item[$Q_{0,n}(v,x)=Q_0^r\circ{\bf Q}_n(v,x)$:]{ $v$-specific oracle estimator defined by applying the oracle ensemble $Q_{0,n}^r$ to the $J$ estimates ${\bf Q}_{n,v}$,   $v=1,\ldots,V$}
\item[$\bar{Q}_{0,n}(x)=\frac{1}{V}\sum_{v=1}^V Q_{0,n}(v,x)$:]{ Single oracle estimator obtained from $Q_{0,n}$ by averaging the $v$-specific oracle estimates $Q_{0,n,v}$ across the $V$ sample splits}
\item[$Q_n^r=\arg\min_{Q^r\in {\cal Q}^r,\pl Q^r\pl_v^*<C_n}P_n^{\bar{V}}L(Q^r\circ{\bf Q}_n)$:]{ M-HAL-MLE of oracle ensemble $Q_{0,n}^r$ using bound $C_n$ for sectional variation norm, minimizing the cross-validated risk $1/V\sum_v P_{n,v}^1 L(Q^r\circ{\bf Q}_{n,v})$ over all  ensemble specific estimators $Q^r\circ \hat{\bf Q}:{\cal M}^{np}\rightarrow{\cal Q}$. It is thus the cross-validation selector for this class of ensemble specific estimators}
\item[$Q_n^{rC}$]{ $Q_n^r$ using $C$ as bound on sectional variation norm}
\item[$Q_n^r=\sum_{(s,j)}\beta_n^r(s,j)\phi_{s,j}$:]{ finite dimensional representation of $Q_n^r$ due to the unrestricted M-HAL-MLE $Q_n^r$ being discrete, or due to choosing it to be discrete on a user  rich set of knot-points}
\item[$C_n$:]{ bound on sectional variation norm enforced in $Q_n^r$. $C_n\geq C_{n,cv}$, where $C_{n,cv}=\arg\min_C 1/V\sum_{v=1}^V P_{n,v}^1 L(Q_n^{r,C})$ is the cross-validation selector}
\item[${\cal J}_n(C_n)$:]{ Set of coefficient-indices $(s,j)$ with $\beta_n^r(s,j)\not =0$}
\item[$C_0^v\equiv\pl Q_{0,n}^r\pl_v^*$:]{ sectional variation norm of oracle selector $Q_{0,n}^r$}
\item[{\rm M-HAL-MLE}:]{ Meta Highly Adaptive Lasso Minimum Loss Estimator $Q_n^r$}
\item[$Q_{n,v}(x)=Q_n(v,x)=Q_n^r\circ {\bf Q}_n(v,x)$:]{ $v$-specific M-HAL super-learner  (of oracle estimator $Q_{0,n,v}=Q_{0,n}(v,\cdot)$) defined by applying the M-HAL-MLE $Q_n^r$ to the $J$-dimensional vector ${\bf Q}_{n,v}$ of estimates, $v=1,\ldots,V$}
\item[$d_0^{\bar{V}}(Q_n,Q_{0,n})=P_0^{\bar{V}}L(Q_n)-P_0^{\bar{V}}L(Q_{0,n})$:]{ excess risk or M-HAL SL relative to oracle estimator $Q_{0,n}$, which is equal to $1/V\sum_v P_0\{L(Q_{n,v})-L(Q_{0,n,v})\}$}
\item[$\bar{Q}_n(x)=\frac{1}{V}\sum_{v=1}^V Q_n(v,x)$:]{ single M-HAL super-learner defined by averaging the $v$-specific M-HAL super-learners $Q_{n,v}=Q_n(v,\cdot)$ across the $V$ sample splits}
\item[ $d_0(\bar{Q}_n,Q_0)$:]{ excess risk of M-HAL SL}
\item[{\rm M-HAL SL:}]{ Meta Highly Adaptive Lasso Super Learner defined by $Q_n$ or by $\bar{Q}_n$}
\item[$P_n^{\bar{V}}D^*(Q_n,G_{0,n}^r)=o_P(n^{-1/2})$:] { the cross-validated efficient influence curve equation  $1/V\sum_v P_{n,v}^1 D^*(Q_{n,v},G_{0,n,v}^r)=o_P(n^{-1/2})$.
 If this equation holds for the M-HAL SL $Q_n$ by selecting $C_n$ large enough in definition of HAL-MLE $Q_n^r$, then a standard analysis shows that $\Psi^{\bar{V}}(Q_n)$ is asymptotically linear}
\end{description}

\section{Notation for equivalent estimation problem implied by treating ${\bf Q}_n$ as a fixed coordinate transformation}\label{AppendixA2}
\subsection{Explanation}
The most important task is the analysis of $Q_n=Q_n^r\circ{\bf Q}_n$ as an estimator of $Q_{0,n}=Q_{0,n}^r\circ{\bf Q}_n$ and its target features 
$\Psi^{\bar{V}}(Q_{0,n})=1/V\sum_{v=1}^V \Psi(Q_{0,n}^r\circ{\bf Q}_{n,v})$. The remaining bias term $d_0(Q_{0,n},Q_0)$ is typically of significantly smaller order (or even 0). 
If we treat ${\bf Q}_n$ as fixed, and view $Q_{n}(v,x)=Q_n^r\circ{\bf Q}_{n,v}(x)$ and $Q_{0,n}(v,x)=Q_{0,n}^r\circ{\bf Q}_{n,v}(x)$ as a function in the new $J$ coordinates ${\bf Q}_n(v,x)$ instead of $(v,x)$, then $Q_n$ becomes $Q_n^r(\cdot)$ and $Q_{0,n}$ becomes $Q_{0,n}^r(\cdot)$.
In particular, $\Psi^{\bar{V}}(Q_{0,n})=\frac{1}{V}\sum_{v=1}^V \Psi(Q_{0,n}^r\circ{\bf Q}_{n,v})$  is now only a target feature  of $Q_{0,n}^r$, and $\Psi(Q_{n})$ is now a target feature of $Q_n^r$.
As a consequence, $Q_n^r$  is  just a regular HAL-MLE of $Q_{0,n}^r$ with this new coordinate transformation $(v,x)\rightarrow {\bf Q}_n(v,x)$ based on data set $(v_i,O_i)$, $i=1,\ldots,n$. 
In addition, the data $(v_i,O_i)$ can be recoded in terms of the new coordinates ${\bf Q}_n$ resulting in a reduction $(v_i,O_i^r)$.   This then teaches us that if ${\bf Q}_n$ would truly be fixed,  all our previous results for standard HAL-MLE based on i.i.d, data, including efficient plug-in estimation of target features, can be applied to this $Q_n^r$ as estimator of $Q_{0,n}^r$.  The reason that the data dependence of ${\bf Q}_n$ does not cause issues is due to the loss at $(v_i,O_i)$ using the transformation ${\bf Q}_{n,v_i}$ based on the training sample excluding $O_i$, allowing conditioning on ${\bf Q}_{n,v}$ whenever dealing with an empirical process w.r.t. $P_{n,v}^1$. 
\subsection{Notation for equivalent estimation problem treating ${\bf Q}_n$ as fixed coordinate transformation}
\begin{description}
\item[$O^r_v=O^r(v,O)$:]{ reduction of $O$ chosen so that the loss $L(Q^r\circ{\bf Q}_{n,v})(o)$ of  candidate $Q^r\circ{\bf Q}_{n,v}\in {\cal Q}$ only  depends on $o$ through $o^r(v,o)$}
\item[$L^r(Q^r)(v,o^r)$:]{ reduced data loss defined by $L^r(Q^r)(v,O^r(v,o))=L(Q^r\circ{\bf Q}_{n,v})(o)$, $v=1,\ldots,V$}
\item[$O^r=O^r(\bar{V},O)$:]{ viewed as a random variable implied by distribution of $(\bar{V},O)\sim P_0^{\bar{V}}$ treating ${\bf Q}_n$ as fixed (i.e., non random fixed functions)}
\item[$P^r$:]{ probability distribution of $(\bar{V},O^r)$ implied by $P^{\bar{V}}$}
\item[$P_0^r$:]{ true probability distribution of $(\bar{V},O^r)$ implied by $P_0^{\bar{V}}$}
\item[$d_0^r(Q^r,Q_{0,n}^r)=P_0^rL^r(Q^r)-P_0^rL^r(Q_{0,n}^r)$]
\item[${\cal M}^r=\{P^r:P\in {\cal M}\}$:]{ statistical model for $(\bar{V},O^r)\sim  P_0^r$, again, treating ${\bf Q}_n$ as fixed}
\item[$P_n^r$:]{ empirical measure of $(v_i,O^r_i=O^r(v_i,O_i))$, $i=1,\ldots,n$}
\item[$Q^r:{\cal M}^r\rightarrow {\cal Q}^r$:]{ $Q^r(P^r)=\arg\min_{Q^r\in {\cal Q}^r}P^r L^r(Q^r)$}
\item[$Q^r_0=Q^r(P_0^r)$]
\item[$\Psi^r:{\cal M}^r\rightarrow\openr$:]{ defined by $\Psi^r(P^r)=\frac{1}{V}\sum_{v=1}^V \Psi(Q^r\circ{\bf Q}_{n,v})$}
\item[$\Psi^r(Q^r)$:]{ alternative notation for $\Psi^r(P^r)$ to emphasize it only depends on $P^r$ through $Q^r$}
\item[$D^r(P^r)(\bar{V},O^r)$:]{ canonical gradient of $\Psi^r$ at $P^r$}
\item[$G^r:{\cal M}^r\rightarrow {\cal G}^r$:]{ nuisance parameter so that $D^r(P^r)$ only depends on $P^r$ through $Q^r$ and $G^r$. $G_{0,n}^r=G^r(P_0^r)$ true nuisance parameter. It is chosen so that $G^r_v:{\cal M}^r_v\rightarrow {\cal G}$ maps distribution $P^r_v$ of $O^r_v$ into parameter space ${\cal G}$ of $G$}
\item[$D^r(Q^r,G^r)$:]{ alternative notation for canonical gradient $D^r(P^r)$}
\item[$R_{20}(Q^r,G^r,Q_{0,n}^r,G_{0,n}^r)\equiv \Psi^r(Q^r)-\Psi^r(Q_{0,n}^r)+P_0^rD^r(Q^r,G^r)$:]{ exact second order remainder for $\Psi^r$}
\item[$P_n^r D^r(Q_n^r,G^r_{0,n})=o_P(n^{-1/2})$:]{ efficient influence curve equation that would be solved by HAL-MLE $Q_n^r$ when selecting $C_n$ large enough}
\end{description}

\subsection{Equivalences}
We now have the following equivalences between our estimation problem and corresponding  fixed ${\bf Q}_n$-formulation of the estimation problem defined above in model ${\cal M}^r$:
\begin{eqnarray*}
Q^r_{0,n}&=&\arg\min_{Q^r\in {\cal Q}^r}P_0^r L^r(Q^r)=\arg\min_{Q^r\in {\cal Q}^r}P_0^{\bar{V}}L(Q^r\circ{\bf Q}_n) \\
Q^r_n&=&\arg\min_{Q^r\in {\cal Q}^r,\pl Q^r\pl_v^*<C_n}P_n^r L^r(Q^r)\\
&=&\arg\min_{Q^r\in {\cal Q}^r,\pl Q^r\pl_v^*<C_n}P_n^{\bar{V}}L(Q^r\circ{\bf Q}_{n,\bar{V}})\\
\Psi^{\bar{V}}(Q_{0,n})&=& \Psi^r(Q_{0,n}^r)\\
\Psi^{\bar{V}}(Q_n)&=&\Psi^r(Q_n^r)\\
d_0^{\bar{V}}(Q_n,Q_{0,n})&=&d_0^r(Q_n^r,Q_{0,n}^r)\\
D^r(Q^r,G^r)(\bar{V},O^r)&=& D^*(Q^r\circ{\bf Q}_{n,\bar{V}},G^r_{\bar{V}})(O)\mbox{ by assumption (\ref{r1})}\\
R_{20}^r(Q^r,G^r,Q_{0,n}^r,G_{0,n}^r)&=&\frac{1}{V}\sum_{v=1}^V R_{20}(Q^r\circ{\bf Q}_{n,v},G^r_v,Q^r_{0,n}\circ{\bf Q}_{n,v},G^r_{0,n,v})\\
P_n^r D^r(Q^r_n,G^r_{0,n})&=& P_n^{\bar{V}}D^*(Q_n,G^r_{0,n})
\end{eqnarray*}

\section{Proofs for Convergence Rate of M-HAL-SL}\label{sec:proofs_rate}

{\bf Proof of Lemma \ref{lemmarateQnr}:}
We have
\begin{eqnarray*}
0&\leq & d_0^{\bar{V}}(Q_n,Q_{0,n})\\
&=& \frac{1}{V}\sum_{v=1}^V P_0 \{L(Q_{n,v})-L(Q_{0, n, v}) \} \\
&=&-\frac{1}{V}\sum_{v=1}^V (P_{n,v}^1-P_0)  \{L(Q_{n,v})-L(Q_{0, n, v})\}\\
&&+\frac{1}{V}\sum_{v=1}^V P_{n,v}^1 \{L(Q_{n,v})-L(Q_{0, n, v}\}\\
&\leq&-\frac{1}{V}\sum_{v=1}^V (P_{n,v}^1-P_0) \{L(Q_{n,v})-L(Q_{0, n, v})\},
\end{eqnarray*}
where the last inequality is by definition of $Q_n^r$ and thus $Q_n$.
For a given $v$, conditional on the training sample so that ${\bf Q}_{n,v}$ is fixed, we have \[
\begin{array}{l}
\mid(P_{n,v}^1-P_0)\{L(Q_{n,v})-L(Q_{0, n, v})\}\mid \\
 \leq 
\sup_{Q^r, Q^r_1\in {\cal Q}^r}\mid (P_{n,v}^1-P_0)\{L(Q^r\circ{\bf Q}_{n,v})-L(Q^r_1\circ{\bf Q}_{n,v})\mid \\
= \sup_{Q^r, Q^r_1\in {\cal Q}^r} \mid (P_{n,v}^{r,1}-P_{0,v}^r)\{L^r(Q^r)-L^r(Q^r_1)\}\mid
\end{array}
\]
For a given $v$,  
${\cal F}^r_1\equiv \{L^r(Q^r) - L^r(Q^r_1): Q^r, Q^r_1\in {\cal Q}^r\}$ are cadlag functions of $O^r(v,O)$ with a universal bound on its sectional variation norm. Let $d^r$ be the dimension of $O^r$.
The typical HAL-MLE proof now proceeds with  the following ingredients: 1) 
${\cal F}^r_1$ is a Donsker class with bracketing entropy number 
$ \log N_{[]}(\epsilon,{\cal F}^r_1,L^2(P^r))\lesssim  \epsilon^{-1}(\log \epsilon)^{-d}$ 
(Proposition 2 in \citep{Bibaut&vanderLaan19});
2) for each $v$, $P_{0,v}^r\{L^r(Q^r)-L(Q_{1}^r)\}^2\leq M_2^r d_0^r(Q^r,Q_{1}^r)$;  3) the bracketing entropy integral is bounded as follows $J_{[]}(\delta,{\cal F}^r_1,L^2(P^r))\lesssim \delta^{1/2}(\log \delta)^{-d/2}$ \citep{Bibaut&vanderLaan19}; the modulus of discontinuity for the empirical process can be bounded accordingly as \[
\sup_{f,\pl f\pl<\delta}\mid G_n(f)\mid\lesssim
J_{[]}(\delta,{\cal F}^r_1,L^2(P^r))\left (1+\frac{J_{[]}(\delta,{\cal F}^r_1,L_2(P^r))}{\delta^2n^{1/2}} M\right )\]
 (\citet{vanderVaart&Wellner11} and Lemma  3.4.2  in  \citet{vanderVaart&Wellner96}). This allows us to apply the iterative HAL-MLE proof in Appendix of \citep{vanderLaan15}   or direct proof \citep{Bibaut&vanderLaan19} to establish that $d_0(Q_n,Q_{0,n}) =O_P(n^{-2/3}(\log n)^{d^r})$. For example, \citep{Bibaut&vanderLaan19} gives this result as a corollary. 

$\Box$

\section{Proofs for Asymptotic Linearity Theorem}\label{sec:proofs_AL}

Let $J(\delta, \mathcal{F}, L_2)$ denote the uniform entropy integral for a class $\mathcal{F}$.
Define $\mathbb{G}_{n}f = \sqrt{n} (P_{n} - P_0)f$ so that $\mathbb{G}_{\frac{n}{V}, v}f = \sqrt{n/V} (P_{n, v}^1 - P_0)f$. 
Assume that the base learner algorithms converge to some $\pmb Q^*$ such that $\norm{\pmb Q_{n, v} - \pmb Q^*}_{P_0}^2 \overset{P}{\to} 0$ for all $v$. 
Define
\begin{align*}
    Q_0^r = \arg\min_{Q^r} \frac{1}{V} \sum_{v = 1}^V P_0 L(Q^r \circ \pmb Q^*).
\end{align*}
Note that $G^r$ depends on $O$ and $\pmb Q_{n}(\bar V, X)$ through a transformation of $O^r(\bar V, O)$ so that $G^r \in \mathcal{G}^r$ and $\pmb Q_{n, v}(X)$ define $G^r_v \in \mathcal{G}$. Use notation $G^r(\pmb Q_{n, v}) \equiv G_v^r$ to highlight the definition of $G^r_v$ using $G^r$ and $\pmb Q_{n, v}(X)$. 
For each $G^r \in \mathcal{G}^r$, 
replace $\pmb Q_{n, v}(X)$ with $\pmb Q^*(X)$ in the definition of $G^r_v$ to define $G^r_*(\pmb Q^*) \in \mathcal{G}$. 
Also assume there exists $G_0^r \in \mathcal{G}^r$ such that 
$\norm{G_{0, n, v}^r - G_{0, *}^r(\pmb Q^*)}_{P_0}^2 \overset{P}{\to} 0$ for all $v$. 

\begin{lemma}\label{lemma:AL1}
Define
\begin{align*}
    f(Q^r, G^r, \pmb Q) = & D^*(Q^r \circ \pmb Q, G^r(\pmb Q)) - D^*(Q_0^r \circ \pmb Q, G_{0, *}^r(\pmb Q)) \\
    f_{n, v} = & f(Q_n^r, G_{0, n}^r, \pmb Q_{n, v}) = D^*(Q_n^r \circ \pmb Q_{n, v}, G^r_{0, n, v}) - D^*(Q_{0}^r \circ \pmb Q_{n, v}, G_{0, v}^r) \\
    \mathcal{F}_{n, v} = & \{f(Q^r, G^r, \pmb Q_{n, v}): Q^r \in \mathcal{Q}^r, G^r \in \mathcal{G}^r\} \\
    \mathcal{F} = & \cup \{\mathcal{F}_{n, v}: n = 1, 2, \dots; v = 1, \dots, V\}. 
\end{align*}
Assume the following regularity conditions.
\begin{enumerate}
    \item $Q_0^r = \arg\min_{Q^r} \frac{1}{V} \sum_{v = 1}^V P_0 L(Q^r \circ \pmb Q^*)$ for some limit $\pmb Q^*$ such that $\norm{\pmb Q_{n, v} - \pmb Q^*}_{P_0}^2 \overset{P}{\to} 0$ for all $v$. In addition, there exists $G_0^r \in \mathcal{G}^r$ such that $\norm{G_{0, n, v}^r - G_{0, *}^r(\pmb Q^*)}_{P_0}^2 \overset{P}{\to} 0$ for all $v$. Lastly, $P_0 f_{n, v}^2 \overset{P}{\to} 0$ uniformly for all $v$, \label{cond:conv}
    \item $\mathcal{F}$ is a $P_0$-measurable class with envelope function $F$ such that $|f| < F < c < \infty$ for some constant $c$ and for all $f \in \mathcal{F}$, \label{cond:env}
    \item there exists a positive sequence $1 > \delta_n \to 0$ at a slow rate such that $\frac{1}{\log(n) \delta_n^2 } \to 0$ and $\frac{P_0 \sup_{f\in \mathcal{F}_{n, v}} f^2}{\delta_n^2} \overset{P}{\to} 0$ (uniformly over $v$), and $\sum_{v = 1}^V J(\delta_n, \mathcal{F}_{n, v}, L_2) \overset{P}{\to} 0$. \label{cond:seq}
\end{enumerate}
Then,  we have that for fixed $V$, 
\begin{align*}
    &\frac{1}{V}\sum_{v=1}^V (P_{n,v}^1-P_0) \Big\{
    D^*(Q_n^r\circ{\bf Q}_{n,v},G_{0,n, v}^r)
    - D^*(Q_0^r\circ{\bf Q}_{n,v},G_{0, v}^r)
    \Big\} \\
    \equiv & \frac{1}{\sqrt{nV}}\sum_{v=1}^V \mathbb{G}_{\frac{n}{V}, v} f_{n, v}
    = o_P(n^{-1/2}). 
\end{align*}
\end{lemma}

\begin{proof}
Define $f_{n, v}^* = f_{n, v} \indicator{f_{n, v}^2 \leq \delta_n^2 P_0F^2}$. Then $F$ is also an envelope for $\mathcal{F}_{n, v}^* \equiv \{f\indicator{f^2 \leq \delta_n^2 P_0F^2}: f \in \mathcal{F}_{n, v}\}$, and $J(\delta, \mathcal{F}_{n, v}^*, L_2) \leq J(\delta, \mathcal{F}_{n, v}, L_2)$. 
Using conditions \ref{cond:env} and \ref{cond:seq}, by Theorem 2.1 of \cite{van2011local}, we have
\begin{align*}
    E^*_{P_0} \norm{\mathbb{G}_{\frac{n}{V}, v}}_{\mathcal{F}_{n, v}^*}
    & \lesssim J(\delta_n, \mathcal{F}_{n, v}, L_2) (1 + \frac{J(\delta_n(\log(1/\delta_n))^{1/p}, \mathcal{F}_{n, v}, L_2)}{\delta_n^2\sqrt{n/V} \norm{F}_{P_0} }) \norm{F}_{P_0} \\
    & = O_P(J(\delta_n, \mathcal{F}_{n, v}, L_2)) = o_P(1). 
\end{align*}
By symmetrization and Khintchine's inequality, we have $E^*_{P_0} \norm{\mathbb{G}_{\frac{n}{V}, v} }_{\mathcal{F}_{n, v}} \lesssim \norm{F}_{P_0}$, and then
\begin{align*}
    E^*_{P_0} \sup_{f_{n, v}\in\mathcal{F}_{n, v}} |\mathbb{G}_{\frac{n}{V}, v}f_{n, v}\indicator{f_{n, v}^2 > \delta_n^2P_0F^2}|
    & \lesssim P_0^*(\sup_{f_{n, v}\in\mathcal{F}_{n, v}} f_{n, v}^2 > \delta_n^2P_0F^2) 
    \lesssim \frac{P_0 \sup_{f\in \mathcal{F}_{n, v}} f^2}{\delta_n^2} = o_P(1).
\end{align*}
Therefore, $E^*_{P_0}\norm{\mathbb{G}_{\frac{n}{V}, v}}_{\mathcal F_{n, v}} = o_P(1)$.
Combining this with the conditions 1-3, we have 
$\mathbb{G}_{\frac{n}{V}, v} f_{n, v}$ weakly converges to $0$, which implies 
$\frac{1}{\sqrt{nV}} \sum_{v = 1}^V \mathbb{G}_{\frac{n}{V}, v} f_{n, v} = o_P(n^{-1/2})$.
\end{proof}

\begin{remark}
    Condition \ref{cond:seq} is a much weaker assumption than a Donsker class condition over $\{D^*(Q, G): Q\in\mathcal{Q}, \mathcal{G}\}$ for the original data problem. 
    For example, if for fixed $\pmb Q_{n, v}$, $\{f(Q^r, G^r, \pmb Q_{n, v}): Q^r \in \mathcal{Q}^r, G^r \in \mathcal{G}^r\}$ is a class of cadlag functions with a sectional variation norm bound not depending on $\pmb Q_{n, v}$ --- which is a reasonable assumption due to the bounded sectional variation norms of $\mathcal{Q}^r, \mathcal{G}^r$ --- then $J(\delta_n, \mathcal{F}, L_2) = O_P(\sqrt{\delta_n}) = o_P(1)$ uniformly across all possible $\pmb Q_{n, v}$. Such (uniform) Donsker class conditions defined on the meta level may be easier to achieve, and have an advantage when the original data problem is complex and all reasonable initial estimators are highly variable. 
\end{remark}

\begin{lemma}\label{lemma:AL2}
Recall the conditions of Lemma \ref{lemma:AL1}. Additionally, assume the following conditions. 
\begin{enumerate}[start=4]
    \item there exist uniform convergence limits $\tilde Q_0 \in \mathcal{Q}$ and $\tilde G_0 \in \mathcal{G}$: for all $\epsilon > 0$, there exists $N$ such that for all $n \geq N$, $v = 1, \dots, V$, and all realizations of $\pmb Q_{n, v}$, 
    \begin{align*}
   P_0 \{D^*(Q_{0, n}^r\circ{\bf Q}_{n,v},G_{0, n, v}^r ) 
    - D^*(\tilde Q_0, \tilde G_0)\}^2 < \epsilon; 
    \end{align*}
    denote $h(\pmb Q) = D^*(Q_0^r\circ{\bf Q},G_{0, *}^r(\pmb Q))
    - D^*(\tilde Q_0, \tilde G_0)$, 
    \item $\sup_{\pmb Q_{n, v}} \norm{h(\pmb Q_{n, v})}_{\infty} \leq M_n$ for some $M_n < \infty$, and $\lim_{n \to \infty} M_n < \infty$ or $M_n \to \infty$ at a rate slower than $\sqrt{n}$ so that $M_n/\sqrt{n} \to 0$.
\end{enumerate}
Then we have 
\begin{align*}
    \frac{1}{V}\sum_{v=1}^V (P_{n,v}^1-P_0) \Big\{
    D^*(Q_0^r\circ{\bf Q}_{n,v},G_{0, v}^r) 
    - D^*(\tilde Q_0, \tilde G_0)
    \Big\}
    = \frac{1}{\sqrt{nV}} \sum_{v = 1}^V \mathbb{G}_{\frac{n}{V}, v} h(\pmb Q_{n, v})
    = o_P(n^{-1/2}). 
\end{align*}
\end{lemma}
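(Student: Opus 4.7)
The plan is a direct second-moment (Chebyshev) argument exploiting the cross-fitting structure, handling one sample split at a time. The key point is that, unlike the previous lemma, the integrand $h(\pmb Q_{n,v})$ does not involve the data-dependent estimator $Q_n^r$ but only the oracle/limit objects $Q_0^r, G_{0,*}^r, \tilde Q_0, \tilde G_0$ composed with $\pmb Q_{n,v}$; hence no Donsker class or bracketing-entropy bound is needed, and a plain $L^2$ bound conditional on the training sample suffices.

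First I would perform the algebraic simplification
\begin{align*}
\frac{1}{\sqrt{nV}}\sum_{v=1}^{V}\mathbb{G}_{n/V,v}\,h(\pmb Q_{n,v})
 \;=\; \frac{1}{V}\sum_{v=1}^{V}(P_{n,v}^{1}-P_{0})\,h(\pmb Q_{n,v}),
\end{align*}
so the conclusion is equivalent to $\frac{1}{\sqrt{V}}\sum_{v=1}^{V}\mathbb{G}_{n/V,v}\,h(\pmb Q_{n,v})=o_P(1)$ for fixed $V$.

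Next, for each $v$ I would condition on the $v$-th training sample $P_{n,v}$ (and hence on $\pmb Q_{n,v}=\hat{\pmb Q}(P_{n,v})$). The $V$-fold split makes the validation observations $\{O_i:v_i=v\}$ disjoint from those defining $\pmb Q_{n,v}$, so conditionally on $\pmb Q_{n,v}$, $P_{n,v}^{1}$ is the empirical measure of $n/V$ i.i.d.\ draws from $P_0$, independent of the now-fixed function $h(\pmb Q_{n,v})$. This yields the conditional bounds
\begin{align*}
E\!\left[\mathbb{G}_{n/V,v}\,h(\pmb Q_{n,v})\,\middle|\,\pmb Q_{n,v}\right] &= 0, \\
E\!\left[\bigl(\mathbb{G}_{n/V,v}\,h(\pmb Q_{n,v})\bigr)^{2}\,\middle|\,\pmb Q_{n,v}\right] &\le P_{0}\,h(\pmb Q_{n,v})^{2}.
\end{align*}
Taking iterated expectations and invoking assumption 1 (uniform over realizations of $\pmb Q_{n,v}$), combined with the envelope bound $\|h\|_\infty\le M_n$ from assumption 2, I would conclude $E\,P_{0}\,h(\pmb Q_{n,v})^{2}\to 0$. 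Chebyshev then gives $\mathbb{G}_{n/V,v}\,h(\pmb Q_{n,v})=o_P(1)$ for each $v$, and since $V$ is fixed the triangle inequality yields $\frac{1}{\sqrt{V}}\sum_{v=1}^{V}\mathbb{G}_{n/V,v}\,h(\pmb Q_{n,v})=o_P(1)$, finishing the proof.

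\textbf{Main obstacle.} The only genuine subtlety is reconciling the $n$-indexed canonical gradient $D^{*}(Q_{0,n}^{r}\circ\pmb Q_{n,v},G_{0,n,v}^{r})$ appearing in assumption 1 with the limit-based gradient $D^{*}(Q_{0}^{r}\circ\pmb Q_{n,v},G_{0,*}^{r}(\pmb Q_{n,v}))$ inside $h$. One must show that, in $L^{2}(P_{0})$, the difference of these two gradients is $o(1)$ uniformly over realizations of $\pmb Q_{n,v}$; this should follow from $Q_{0,n}^{r}\to Q_{0}^{r}$ and $G_{0,n,v}^{r}\to G_{0,*}^{r}(\pmb Q^{*})$ (built into the definitions of $Q_0^r,G_0^r$) together with the uniform boundedness supplied by $M_n$ and the smoothness of $D^*$ in its arguments. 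Once $P_{0}\,h(\pmb Q_{n,v})^{2}\to 0$ uniformly is in hand, the conditioning-on-training-folds step is entirely standard.
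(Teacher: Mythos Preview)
Your approach is correct and is more elementary than the paper's. Both arguments condition on the training sample $P_{n,v}$ so that, conditionally, $P_{n,v}^{1}$ is the empirical measure of $n/V$ i.i.d.\ draws from $P_0$ applied to the now-fixed function $h(\pmb Q_{n,v})$. The paper then invokes Bernstein's inequality, using the conditional second moment (assumption~1) for the variance term and the envelope $M_n$ (assumption~2) to control the sub-exponential term $aM_n/\sqrt{n/V}$; taking expectations over the training sample and choosing $n$ large yields the probability bound. Your Chebyshev route is cleaner: once the uniform $L^2$ bound $P_0\,h(\pmb Q_{n,v})^2\to 0$ is in hand, Markov on the second moment gives $o_P(1)$ directly, and the rate condition $M_n/\sqrt{n}\to 0$ in assumption~2 plays no role. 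Bernstein buys an exponential tail, which the lemma does not require.

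The obstacle you flag---the mismatch between $(Q_{0,n}^r,\,G_{0,n,v}^r)$ in assumption~1 and $(Q_0^r,\,G_{0,*}^r)$ in the definition of $h$---is genuine, but the paper's own proof makes the same silent identification: it bounds $E_{P_0}[h(\pmb Q_{n,v})^2\mid\pmb Q_{n,v}]$ directly by appeal to assumption~1, so the bridging step is tacitly absorbed into the hypothesis rather than argued.
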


\begin{proof}
    By Bernstein's Inequality, for all $a > 0$, 
    \begin{align*}
        E_{P_0}(\indicator{|\mathbb{G}_{n/V, v}h(\pmb Q_{n, v})| > a} | \pmb Q_{n, v}) 
        & \leq 2 \exp \{-\frac{1}{4} \frac{a^2}{E_{P_0}[ h(\pmb Q_{n, v})^2 | \pmb Q_{n, v}] + a M/\sqrt{n/V}}\}
    \end{align*}
    Due to the uniform bound on $h(\pmb Q_{n, v})$ across all realizations of $\pmb Q_{n, v}$, by double expectations, for all small enough $\epsilon > 0$ and $\lambda = -\frac{a^2}{4\log(\epsilon/2)} > 0$, there exists $N$ such that for all $n \geq N$, $aM_n / \sqrt{n/V} \leq \lambda / 2$, and
    \begin{align*}
        P_0(|\mathbb{G}_{n/V, v}h(\pmb Q_{n, v})| > a) \leq E_{P_0} 2\exp\{-\frac{1}{4} \frac{a^2}{\frac{\lambda}{2} + \frac{\lambda}{2}}\} \leq \epsilon. 
    \end{align*}
\end{proof}

We note that, so far, we have not assumed any Donsker class assumption on ${\bf Q}_n$ (we only relied on the oracle selected ensemble of ${\bf Q}_n$ to be a good estimator of $Q_{0,n}$). This again demonstrates that the asymptotic properties of M-HAL-MLE features do not rely on a Donsker class assumption on $\mathcal{D}^* \equiv \{D^*(Q, G): Q \in \mathcal{Q}, G \in \mathcal{G}\}$, but on a Donsker class condition on the meta level with fixed $\pmb Q_{n, v}$ and only a uniform boundedness condition across realizations of $\pmb Q_{n, v}$. 
When assumption (\ref{r1}) holds and $\tilde Q_0 = Q_0$, the two lemmas above lead to the desired asymptotic linearity result in Theorem \ref{theoremaslinmhalmle}: 
\[
\Psi^r(Q_n^r)-\Psi^r(Q^r_{0,n})=P_n D^*(Q_0,\tilde{G}_0)+o_P(n^{-1/2}).\]

\section{Proofs for Difference of Target Feature of Oracle Estimator and Target Estimand}\label{sec:proofs_diff}

{\bf Proof of Theorem \ref{theoremoracleensemble1}:}
Recall that $D^r(P^r)=D^r(Q^r)$ is the canonical gradient of $\Psi^r:{\cal M}^r\rightarrow\openr$. Thus, $P_0^r D^r(Q_{0,n}^r)=0$. We also have that $D^r(Q_{0,n}^r)(v,O^r(v,o))=D^*(Q_{0,n}^r\circ{\bf Q}_{n,v})(o)$, so that $P_0^rD^r(Q_{0,n}^r)=0$ implies
$\frac{1}{V}\sum_{v=1}^V P_0 D^*(Q_{0,n}^r\circ{\bf Q}_{n,v})=0$.
Now, use that, by definition of $R_{20}$,  \[
\Psi(Q^r_{0,n}\circ{\bf Q}_{n,v})-\Psi(Q_0)=-P_0D^*(Q^r_{0,n}\circ{\bf Q}_{n,v})+R_{20}(Q^r_{0,n}\circ{\bf Q}_{n,v},Q_0).\] Recall notation $Q_{0,n,v}=Q_{0,n}^r\circ{\bf Q}_{n,v}$. Thus, this proves
\begin{eqnarray*}
\left\{ \frac{1}{V}\sum_v \Psi(Q^r_{0,n}\circ{\bf Q}_{n,v})-\Psi(Q_0)\right\}&=&
-\frac{1}{V}\sum_{v=1}^V P_0 D^*(Q_{0,n,v})\\
&&\hspace*{-3cm} +\frac{1}{V}\sum_{v=1}^V R_{20}(Q_{0,n,v},Q_0)= 
\frac{1}{V}\sum_{v=1}^V R_{20}(Q_{0,n,v},Q_0).
\end{eqnarray*}
By assumption (\ref{r1}), we have $\frac{1}{V}\sum_v R_{20}(Q_{0,n,v},Q_0)=O(d_0(Q_{0,n},Q_0))$.
$\Box$

In our treatment specific mean example we could have defined $G_{0,n,v}^{*}$ as the conditional mean of $A$, given ${\bf Q}_{n,v}(W),{\bf Q}_0(W)$, where ${\bf Q}_0$ represents the limit of ${\bf Q}_n$, while  $G_{0,n,v}^r=E_0(A\mid {\bf Q}_{n,v}(W))$. As in our example, we then note that $G_{0,n,v}^r$ and $G_{0,n,v}^{*}$ are conditional expectations in which $G_{0,n,v}^{*}$ conditions on an extra ${\bf Q}_0(W)$ coming from the limit of ${\bf Q}_{n,v}$. As in our example, we can then utilize that conditional expectations are projection operators to bound the  $L^2$-norm of the difference in terms  of  ${\bf Q}_{n,v}$ and ${\bf Q}_0$. We improved on this approach by selecting $G_{0,n,v}^*$ as the conditional mean of $A$, given $Q_{0,n,v}(W)$, $Q_0(W)$, while still following the same subsequent steps. This allowed us to bound the $L^2$-norm of the difference of $G_{0,n,v}^*-G_{0,n,v}^r$ in terms of the difference of the oracle ensemble $Q_{0,n}^r\circ{\bf Q}_{n,v}$ of ${\bf Q}_{n,v}$ and $Q_0$,  instead of a difference of the $J$-dimensional ${\bf Q}_{n,v}$ versus ${\bf Q}_0$. In this manner we obtained a natural bound  $d_0^{\bar{V},1/2}(Q_{0,n},Q_0)$ for the $L^2$-norm of $G_{0,n,v}^*-G_{0,n,v}^r$. This insight clearly suggests that one  should aim to define $G_{0,n,v}^*$ with minimal conditioning, even though conditioning on ${\bf Q}_{n,v}$ and ${\bf Q}_0$ would suffice.

{\bf Proof of Theorem \ref{theoremoracleensemble}:}
We have
\begin{eqnarray*}
\left\{ \frac{1}{V}\sum_v \Psi(Q_{0,n}^r\circ{\bf Q}_{n,v})-\Psi(Q_0)\right\}
&=&
-\frac{1}{V}\sum_v P_0D^*(Q_{0,n}^r\circ{\bf Q}_{n,v},G_0)\\
&&+\frac{1}{V}\sum_v R_{20}(Q_{0,n}^r\circ{\bf Q}_{n,v},G_0,Q_0,G_0).
\end{eqnarray*}
By assumption (\ref{r1}),  $\frac{1}{V}\sum_v R_{20}(Q_{0,n}^r\circ{\bf Q}_{n,v},G_0,Q_0,G_0)=O_P(d_0^{\bar{V}}(Q_{0,n},Q_0))$.
It remains to analyze $\frac{1}{V}\sum_v P_0D^*(Q_{0,n}^r\circ{\bf Q}_{n,v},G_0)$. 
By definition of ${G}_{0,n,v}^{*}$ we have
\[
\frac{1}{V}\sum_v P_0D^*(Q_{0,n}^r\circ{\bf Q}_{n,v},G_0)=\frac{1}{V}\sum_v P_0D^*(Q_{0,n}^r\circ{\bf Q}_{n,v},G_{0,n,v}^{*}).\]
So now it remains to analyze $1/V\sum_v P_0 D^*(Q_{0,n}^r\circ{\bf Q}_{n,v},G_{0,n,v}^{*})$. 

We have $0=P_0^rD^r(Q_{0,n}^r,G_{0,n}^r)$, which equals $1/V\sum_{v=1}^V P_0D^*(Q_{0,n}^r\circ{\bf Q}_{n,v},G^r_{0,n,v})$. Thus, it follows that $1/V\sum_v P_0 D^*(Q_{0,n}^r\circ{\bf Q}_{n,v},G^r_{0,n,v})=0$.
Subtracting this from our expression yields the following expression for $\Psi^r(Q_{0,n}^r)-\Psi(Q_0)$
\[
\begin{array}{l}
 \frac{1}{V}\sum_v \Psi(Q_{0,n}^r\circ {\bf Q}_{nv})-\Psi(Q_0)\\
=
\frac{1}{V}\sum_v P_0 \left\{ D^*(Q_{0,n}^r\circ {\bf Q}_{n,v},G_{0,n,v})-D^*(Q_{0,n}^r\circ{\bf Q}_{n,v},G_{0,n,v}^{*})\right\}+O_P(d_0^{\bar{V}}(Q_{0,n},Q_0))\end{array}
\]
Recall $Q_{0,n,v}=Q_{0,n}^r\circ{\bf Q}_{n,v}$. Using  $\Psi(Q_{0,n,v})-\Psi(Q_0)=-P_0D^*(Q_{0,n,v},G_{0,n,v})+R_2(Q_{0,n,v},G_{0,n,v},Q_0,G_0)$, and $\Psi(Q_{0,n,v})-\Psi(Q_0)=-P_0D^*(Q_{0,n,v},G_{0,n,v}^*)+R_2(Q_{0,n,v},G_{0,n,v}^*,Q_0,G_0)$,  it follows that the leading term on right-hand side above equals $r_n$.
$\Box$

\section{Undersmoothing Conditions} \label{sec:undersmoothing}

 Here we will prove a formal Theorem \ref{thscoreequationtarget} establishing sufficient conditions for $P_n^rD^r(Q_n^r,G_{0,n}^r)=o_P(n^{-1/2})$, involving an undersmoothing condition (\ref{assumptionatarget}) that will hold by selecting $C_n$ large enough, and a condition (\ref{a3}). The latter is shown to be about the linear span of the basis functions in $Q_n^r$ with non-zero coefficients approximating a function having to do with $G_{0,n}^r$. 
 
  Recall $Q_n^r=\arg\min_{Q^r\in {\cal Q}^r,\pl Q^r\pl_v^*<C_n}P_n^r L^r(Q^r)$.
  Consider a path $\{Q^{r,h}_{n,\epsilon}:\epsilon\}$,  indexed by a function $h$, defined by \begin{equation}\label{halmlepaths}
 Q^{r,h}_{n,\epsilon}=(1+\epsilon h(0))Q_n^r(0)+\sum_{s\subset\{1,\ldots,d\}}
 \int_{(0_s,x_s]}(1+\epsilon h(s,u_s)) dQ^r_{n,s}(u_s),\end{equation}
 where $h$ has to satisfy the  restriction $r(h,Q_n^r)=0$ defined by \[
 r(h,Q_n^r)\equiv  h(0)\mid Q_n^r(0)\mid +\sum_{s\subset\{1,\ldots,d\}}
 \int_{(0_s,\tau_s]}(1+\epsilon h(s,u_s)) \mid dQ^r_{n,s}(u_s)\mid .\] 
 Due to the constraint $r(h,Q_n^r)$,  it follows that for any uniformly bounded function $h$ with $r(h,Q_n^r)$, $\{Q^{r,h}_{n,\epsilon}:\epsilon\} \subset {\cal Q}^r(C_n)$. Specifically, the sectional variation norm  of $Q^{r,h}_{n,\epsilon}$ does not change as $\epsilon$ moves away from zero locally.
 Consider the score equation for $Q_n^r$ of the empirical risk it minimizes: \begin{equation}\label{scorefn}
 S_h(Q_n^r)\equiv  \frac{d}{d\epsilon}\left .  L^r(Q^{r,h}_{n,\epsilon}) \right |_{\epsilon =0}.\end{equation}
 By (\ref{r1}) (on loss $L^r$), it also follows that $\{S_h(Q^r): Q^r\in {\cal Q}^r\}\subset {\cal D}_{d^r,v}[0,\tau^r]$.

Since $Q_n^r$ minimizes this empirical risk over all $Q^r\in {\cal Q}^r(C_n)$,  we know that  $P_n^r S_h(Q_n^r)=0$ for all uniformly bounded $h$ with $r(h,Q_n^r)=0$.
Thus, we have $P_n^r S_h(Q_n^r)=0$ for all bounded $h$ with $r(h,Q_n^r)=0$.
Let ${\cal S}(Q_n^r)=\{S_h(Q_n^r): h\}$ be the linear span of all these score functions $S_h(Q_n^r)$ indexed by any bounded function $h$.

Analogue to the proof of the theorems in \citep{vanderLaan&Benkeser&Cai19}  we can now establish that for large enough $C_n$, the linear span of the score equations $P_n^r S_h(Q_n^r)$ with $r(h,Q_n^r)=0$ approximate  the efficient influence curve equation $P_n^r D^r(Q_n^r,G_{0,n}^r)$.
This works as follows. Let $D^r_n(Q_n^r,G_{0,n}^r)(v,o^r)=D^*_n(Q^r\circ{\bf Q}_{n,v},G^r_{0,n,v})(o)$ be an approximation of $D^r(Q_n^r,G_{0,n}^r)$ that is contained in ${\cal S}(Q_n^r)=\{S_h(Q_n^r):h\}$, without the restriction $r(h,Q_n^r)=0$.
Let $h^*(Q_n^r,G_{0,n}^r)$ be the corresponding index so that $D^r_n(Q_n^r,G_{0,n}^r)=S_{h^*(Q_n^r,G_{0,n}^r)}(Q_n^r)$. For notational convenience, in this proof let's denote it with $h^*$. Recall $Q_n^r=\sum_{(s,j)\in {\cal J}_n(C_n)}\beta_n^r(s,j)\phi_{s,j}$.
Let $\tilde{h}(s,j)=h^*(s,j)$ except at one $(s^*,j^*)\in {\cal J}_n(C_n)$ and defined such that
$r(\tilde{h},Q_n^r)=\sum_{(s,j)\in {\cal J}_n(C_n)}\tilde{h}(s,j)\mid \beta_n^r(s,j)\mid =0$. Thus, $\tilde{h}(s^*,j^*)=-\frac{\sum_{(s,j)\not =(s^*,j^*)} h^*(s,j)\mid \beta_n^r(s,j)\mid}{\mid \beta_n^r(s^*,j^*)\mid}$.
Then $P_n^rS_{\tilde{h}}(Q_n^r)=0$. We now want to choose $(s^*,j^*)$ such that $P_n^r(S_{h^*}(Q_n^r)-S_{\tilde{h}}(Q_n^r))$ minimal, so that subsequently setting it smaller than $o(n^{-1/2})$ yields the global undersmoothing criterion. 

Note that
\[
P_n^r(S_{\tilde{h}}(Q_n^r)-S_{h^*}(Q_n^r))=P_n^r\frac{d}{dQ_n^r}L^r(Q_n^r)\left(\sum_{(s,j)}(\tilde{h}-h^*)(s,j)\beta_n(s,j)\phi_{s,j}\right).
\]
We have $\sum_{(s,j)}(\tilde{h}-h^*)(s,j)\beta_n^r(s,j)\phi_{s,j}=
c_n(s^*,j^*)\phi_{s^*,j^*}$ with
\[
c_n(s^*,j^*)=-\beta_n^r(s^*,,j^*)\left\{\frac{\sum_{(s,j)\not =(s^*,j^*)}h^*(s,j)\mid \beta_n^r(s,j)\mid}{\mid \beta_n^r(s^*,j^*)\mid} + h^*(s^*,j^*)\right\}.\]
Note that $c_n(s^*,j^*)$ is bounded by $\sum_{(s,j)}\mid  h^*(s,j)\mid \mid \beta_n^r(s,j)\mid$ which is thus bounded by  $\pl h^*\pl_{\infty}C_n$ (using $\sum_{(s,j)}\mid \beta_n^r(s,j)\mid =C_n$). Thus, under this trivial assumption we have $c_n(s^*,j^*)=O_P(1)$. So we have obtained:
\begin{eqnarray*}
P_n^r(S_{\tilde{h}}(Q_n^r)-S_{h^*}(Q_n^r))&=&c_n(s^*,j^*)P_n^r \frac{d}{dQ_n^r}L^r(Q_n^r)(\phi_{s^*,j^*})\\
&=&O_P\left (C_n P_n^r\frac{d}{dQ_n^r}L^r(Q_n^r)(\phi_{s^*,j^*})\right ).
\end{eqnarray*}
Therefore, our undersmoothing condition can be chosen to be $C_n \min_{(s,j)\in {\cal J}_n(C_n)}\mid P_n^r \frac{d}{dQ_n^r}L^r(Q_n^r)(\phi_{s^*,j^*})=o(n^{-1/2})$, which then implies $P_n^rD^r_n(Q_n^r,G_{0,n}^r)=o(n^{-1/2})$. 
This completes the proof of the first part of the next theorem, while the remaining part provides the extra condition (\ref{a3}) that makes $P_n^r (D^r_n-D^r)(Q_n^r,G_{0,n}^r)=o_P(n^{-1/2})$, so that we also obtain $P_n^r D^r(Q_n^r,G_{0,n}^r)=o_P(n^{-1/2})$ as well.

\begin{theorem}\label{thscoreequationtarget}
\ \nl
{\bf Definitions:} \nl
 Consider HAL-MLE ensemble   $Q_n^r=\arg\min_{Q^r\in {\cal Q}^r(C_n), f\ll^*\mu_n}P_n^r L^r(Q^r)$  for some  selector $C_n$ with $C_{n,cv}\leq C_n\leq C^u<\infty$ with probability tending to 1, where $C_{n,cv}$ is the cross-validation selector of $C$.
  Recall that $L^r(Q^r)(v,o^r)=L(Q^r\circ{\bf Q}_{n,v})(o)$; $P_n^rL^r(Q^r)=\frac{1}{V}\sum_{v=1}^V P_{n,v}^1 L(Q^r\circ{\bf Q}_{n,v})$; $Q_n^r=\sum_{(s,j)\in {\cal J}_n(C_n)}\beta_n(s,j)\phi_{s,j}$, where ${\cal J}_n(C_n)$ provide the indices of the basis functions with non-zero coefficients and $\beta_n$ denotes the corresponding coefficients. Here we emphasize that ${\cal J}_n(C_n)$ is implied by the $L_1$-norm bound $C_n$ in definition of $Q_n^r$. We have that $Q_n^r$ solves the score equations $P_n^r S_h(Q_n^r)=0$ for all bounded $h$ with $r(h,Q_n^r)=0$, and $S_h(Q_n^r)$ defined by (\ref{scorefn}).
 
Given a realization of ${\bf Q}_n$ (i.e., treating it as fixed),  consider the target parameter  ${\Psi}^r:{\cal Q}^r\rightarrow\openr$ defined by $\Psi^r(Q^r)=\frac{1}{V}\sum_{v=1}^V \Psi(Q^r\circ{\bf Q}_{n,v})$. Assume its  canonical gradient $D^r(Q^r,G^r)(v,o^r)=D^*(Q^r\circ{\bf Q}_{n,v},G^r_v)(o)$ at $P^r\in {\cal M}^r$, and exact second order remainder ${R}_2^r(Q^r,G^r,Q^r_{0,n},G_{0,n}^r)=\Psi^r(Q^r)-\Psi^r(Q^r_{0,n})+P_0 D^r(Q^r,G^r)$ given by $\frac{1}{V}\sum_v R_{20}(Q^r\circ{\bf Q}_{n,v},G^r_v,Q^r_{0,n}\circ{\bf Q}_{n,v},G^r_{0,n,v}$.
Let $D^r_n(Q_n^r,G_{0,n}^r)(v,o^r)=D^*_n(Q^r\circ{\bf Q}_{n,v},G^r_{0,n,v})(o)$ be an approximation of $D^r(Q_n^r,G_{0,n}^r)$ that is contained in ${\cal S}(Q_n^r)=\{S_h(Q_n^r):h\}$, without the restriction $r(h,Q_n^r)=0$.
Let $h^*(Q_n^r,G_{0,n}^r)$ be the corresponding index so that $D^r_n(Q_n^r,G_{0,n}^r)=S_{h^*(Q_n^r,G_{0,n}^r)}(Q_n^r)$.

{\bf Assumptions:}
Assume $\pl \tilde{h}^*(Q_n^r,G_{0,n}^r)\pl_{\infty}=O_P(1)$,
and the global undersmoothing criterion
\begin{equation}
C_n \min_{(s,j)\in {\cal J}_n(C_n)}\pl P_n^r \frac{d}{dQ_n^r}L^r(Q_n^r)(\phi_{s,j})\pl =o(n^{-1/2}).
\label{assumptionatarget}
\end{equation} 

{\bf Conclusion:}
 Then, 
 \[
 P_n^r D^r_n(Q_n^r,G_{0,n}^r)=o_P(n^{-1.2}).\]
 
 We can also replace (\ref{assumptionatarget}) by \begin{equation}\label{suffassumptionatarget}
\min_{(s,j)\in {\cal J}_n(C_n)}\pl P_0^r\left\{\frac{d}{dQ_n^r}L^r(Q_n^r)(\phi_{s,j})-\frac{d}{dQ^r_{0,n}}L^r(Q^r_{0,n})(\phi_{s,j})\right\}\pl =o_P(n^{-1/2}),\end{equation} and, for the choice $(s^*,j^*)$ that minimizes the latter, we have $P_0^r \{\frac{d}{dQ_n^r}L^r(Q_n^r)(\phi_{s^*,j^*}) \}^2\rightarrow_p 0$.

If also
\begin{equation}\label{a3} P_0^r\{D^r(Q_n^r,G_{0,n}^r)-D^r_n(Q_n^r,G_{0,n}^r)\}=o_P(n^{-1/2}),
\end{equation} then we have
\[
P_n^r D^r(Q_n^r,G_{0,n}^r)=o_P(n^{-1/2}).\]
 \end{theorem}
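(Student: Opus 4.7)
The plan is to exploit the constrained MLE structure of $Q_n^r$. Because the paths (\ref{halmlepaths}) preserve membership in $\{Q^r\in{\cal Q}^r:\pl Q^r\pl_v^*<C_n\}$ whenever $r(h,Q_n^r)=0$, the estimator $Q_n^r$ satisfies $P_n^r S_h(Q_n^r)=0$ for every bounded $h$ obeying that linear restriction. The target approximation $D_n^r(Q_n^r,G_{0,n}^r)=S_{h^*}(Q_n^r)$ is built from an index $h^*=h^*(Q_n^r,G_{0,n}^r)$ that in general violates $r(h^*,Q_n^r)=0$; the idea is to project $h^*$ onto the admissible hyperplane by perturbing it at a single coordinate $(s^*,j^*)\in{\cal J}_n(C_n)$ and then quantify the cost of that projection using the undersmoothing hypothesis.

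Concretely, I would set $\tilde h(s,j)=h^*(s,j)$ for $(s,j)\neq(s^*,j^*)$ and choose $\tilde h(s^*,j^*)$ so that $\sum_{(s,j)\in{\cal J}_n(C_n)}\tilde h(s,j)\mid\beta_n^r(s,j)\mid=0$. Because the score is linear in the coefficient-weighted perturbation, the identity
\[
P_n^r S_{h^*}(Q_n^r)=P_n^r S_{\tilde h}(Q_n^r)+c_n(s^*,j^*)\, P_n^r\frac{d}{dQ_n^r}L^r(Q_n^r)(\phi_{s^*,j^*})
\]
holds, with $c_n(s^*,j^*)$ explicitly computable and bounded in absolute value by $\pl h^*\pl_\infty C_n=O_P(C_n)$ under the hypothesis $\pl h^*(Q_n^r,G_{0,n}^r)\pl_\infty=O_P(1)$. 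Since the first term vanishes by the score equation, choosing $(s^*,j^*)$ as the minimizer over ${\cal J}_n(C_n)$ of $\mid P_n^r\frac{d}{dQ_n^r}L^r(Q_n^r)(\phi_{s,j})\mid$ and invoking (\ref{assumptionatarget}) yields $P_n^r D_n^r(Q_n^r,G_{0,n}^r)=o_P(n^{-1/2})$.

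For the variant that uses (\ref{suffassumptionatarget}) instead, I would pass from the empirical to the population version of the score. Because $Q^r_{0,n}$ is the unconstrained population minimizer, $P_0^r\frac{d}{dQ^r_{0,n}}L^r(Q^r_{0,n})(\phi_{s,j})=0$ for each basis function, so the left-hand side of (\ref{suffassumptionatarget}) equals $\mid P_0^r\frac{d}{dQ_n^r}L^r(Q_n^r)(\phi_{s,j})\mid$. Combined with a uniform empirical process bound over the cadlag class of derivatives $\{\frac{d}{dQ_n^r}L^r(Q_n^r)(\phi_{s,j}):(s,j)\}$ and the assumption that $P_0^r\{\frac{d}{dQ_n^r}L^r(Q_n^r)(\phi_{s^*,j^*})\}^2\to_p 0$ at the selected index, the $(P_n^r-P_0^r)$ component of $P_n^r\frac{d}{dQ_n^r}L^r(Q_n^r)(\phi_{s^*,j^*})$ is $o_P(n^{-1/2})$, giving the required $C_n\cdot o_P(n^{-1/2})=o_P(n^{-1/2})$ since $C_n=O_P(1)$.

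Finally, to upgrade from $D_n^r$ to $D^r$ I decompose
\[
P_n^r D^r(Q_n^r,G_{0,n}^r)=P_n^r D_n^r(Q_n^r,G_{0,n}^r)+(P_n^r-P_0^r)(D^r-D_n^r)(Q_n^r,G_{0,n}^r)+P_0^r(D^r-D_n^r)(Q_n^r,G_{0,n}^r).
\]
The last term is $o_P(n^{-1/2})$ by (\ref{a3}); the middle empirical-process term is handled by the Donsker property of $\{(D^r-D_n^r)(Q^r,G_{0,n}^r):Q^r\in{\cal Q}^r\}$ (contained by assumption (\ref{r1}) in a class of cadlag functions with a uniform sectional variation bound) combined with the $L^2$-consistency of $Q_n^r$ for $Q^r_{0,n}$ from Lemma \ref{lemmarateQnr}. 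The principal obstacle, in my view, is the coupled bookkeeping in the middle step: the same index $(s^*,j^*)$ must simultaneously control the projection cost $c_n(s^*,j^*)\, P_n^r\frac{d}{dQ_n^r}L^r(Q_n^r)(\phi_{s^*,j^*})$ and match the $n^{-1/2}$ scale once multiplied by $C_n$, which is exactly what (\ref{assumptionatarget}) (or (\ref{suffassumptionatarget}) after an empirical-process swap) is calibrated to achieve; the other steps reduce to standard empirical process arguments under (\ref{r1}).
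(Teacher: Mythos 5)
Your proposal is correct and follows essentially the same route as the paper's proof: the single-coordinate modification of $h^*$ to an admissible $\tilde h$ with $r(\tilde h,Q_n^r)=0$, the bound $\mid c_n(s^*,j^*)\mid\leq \pl h^*\pl_{\infty}C_n$ combined with (\ref{assumptionatarget}) to get $P_n^rD^r_n(Q_n^r,G_{0,n}^r)=o_P(n^{-1/2})$, and then the decomposition through $(P_n^r-P_0^r)(D^r-D^r_n)$ plus (\ref{a3}) with the Donsker class property from (\ref{r1}) to conclude for $D^r$ (your sketch of how (\ref{suffassumptionatarget}) substitutes for (\ref{assumptionatarget}) via an empirical-process swap is also the intended reading). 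The only small discrepancy is in the equicontinuity step: what is needed there is $P_0^r\{D^r(Q_n^r,G_{0,n}^r)-D^r_n(Q_n^r,G_{0,n}^r)\}^2\rightarrow_p 0$ (which the paper ties to (\ref{a3})), not the $L^2$-consistency of $Q_n^r$ for $Q_{0,n}^r$ from Lemma \ref{lemmarateQnr} by itself, since the quality of the approximation $D^r_n$ depends on the richness of the retained basis $\{\phi_{s,j}:(s,j)\in {\cal J}_n(C_n)\}$ rather than on that consistency alone.
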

 \textbf{Remark regarding setting cut-off for undersmoothing condition}
 Our proof shows that $P_n^r D^r_n(Q_n^r,G_{0,n}^r)\approx \max_{(s,j)\in {\cal J}_n(C_n)}\mid h^*(s,j)\mid C_n\min_{(s,j)\in {\cal J}_n(C_n)}\pl P_n^r \frac{d}{dQ_n^r}L^r(Q_n^r)(\phi_{s,j})\pl $, where $h^*$ is so that $D_n^r(Q_n^r,G_{0,n}^r)=S_{h^*}(Q_n^r)$.
 A sensible bound for $P_n^rD^r_n(Q_n^r,G_{0,n}^r)$ is $\sigma_n/(n^{1/2}\log n)$.
 Thus, we would want to select $C_n>C_{n,cv}$ so that
 \[
 \max_{(s,j)\in {\cal J}_n(C_n)}\mid h^*(s,j)\mid C_n\min_{(s,j)\in {\cal J}_n(C_n)}\pl P_n^r \frac{d}{dQ_n^r}L^r(Q_n^r)(\phi_{s,j})\pl  \approx \sigma_n/(n^{1/2}\log n).\]
 If one knows the canonical gradient $D^r(Q_n^r,G_n^r)$ for a given estimator $G_n^r$, then one can determine the corresponding $h^*(Q_n^r,G_n^r)$  so that $D^r(Q_n^r,G_n^r)=S_{h^*(Q_n^r,G_n^r)}(Q_n^r)$, and use the $\max$-norm of $h^*(Q_n^r,G_n^r)$. Therefore, a recommended concrete criterion is given by
 \[ 
 \max_{(s,j)\in {\cal J}_n(C_n)}\mid h^*(s,j)\mid C_n\min_{(s,j)\in {\cal J}_n(C_n)}\pl P_n^r \frac{d}{dQ_n^r}L^r(Q_n^r)(\phi_{s,j})\pl  \approx \sigma_n/(n^{1/2}\log n).\] 
 However, in this case we are aiming to make the undersmoothing criterion tailored for the particular target parameter $\Psi$. Of course, one might as well select $C_n$ so that $P_n^r D^r_n(Q_n^r,G_n^r)\approx \sigma_n/(n^{-1/2}\log n)$, since one already used $G_n^r$ and even aimed to estimate the max-norm of $h^*$. However, we also see that if we set $\frac{\sigma_n}{\pl h^*\pl_{\infty}}$ to some constant $K$ (e.g, $K=1$), then we obtain a global undersmoothing criterion
 \[ \min_{(s,j)\in {\cal J}_n(C_n)}\pl P_n^r \frac{d}{dQ_n^r}L^r(Q_n^r)(\phi_{s,j})\pl  \approx K C_n^{-1}/(n^{1/2}\log n).\] 
 In many censored or causal inference problem, both this max-norm of $h^*$, which is aligned with the sup-norm  of $D^r(Q_n^r,G_{0,n}^r)$, and the standard error $\sigma_n$ of $D^r(Q_n^r,G_{0,n}^r)$ are driven by a positivity assumption and increase as the support for the target parameter decreases. This suggest that it might be quite reasonable to assume that across a large class of target parameters $K$ is uniformly bounded away from 0, so that the above global undersmoothing condition will work well across a large family of target parameters. 

{\bf Proof Theorem \ref{thscoreequationtarget}:}
 We already showed above that $P_n^rD^r_n(Q_n^r,G_{0,n}^r)=o_P(n^{-1/2})$ by the undersmoothing condition. Now we note that 
\begin{eqnarray*}
P_n^r D^r(Q_n^r,G_{0,n}^r)&=&P_n^r\{D^r(Q_n^r,G_{0,n}^r)-D^r_n(Q_n^r,G_{0,n}^r)\}+o_P(n^{-1/2})\\
&=&P_0^r\{{D}^r(Q_n^r,G_{0,n}^r)-{D}^r_n(Q_n^r,G_{0,n}^r)\}+o_P(n^{-1/2}),
\end{eqnarray*}
if, for each $v$, conditional on the training sample (and thus, fixed ${\bf Q}_{n,v}$)
\[
\{{D}^r(Q^r,G_{0,n}^r), D^r_n(Q^r,G_{0,n}^r): Q^r\in {\cal Q}^r(C^u)\}\mbox{ is a $P_0^r$-Donsker class},
\]
and
 \[
 P_0^r\{{D}^r(Q_n^r,G_{0,n}^r)-D^r_n(Q_n^r,G_{0,n}^r)\}^2\rightarrow_p 0.\]
  The Donsker assumption holds since, by (\ref{r1}) (and remark above showing it also applies to $D^r_n$), it consists of $d^r$-variate cadlag functions with universal bound on sectional variation norm. The consistency condition 
  is implied by (\ref{a3}).
 $\Box$
 
 \subsection{Understanding assumption (\ref{a3}).}
Here we  discuss the key condition (\ref{a3}), beyond the undersmoothing condition (\ref{assumptionatarget}). For this purpose, we reparametrize the paths $Q^{r,h}_{n,\epsilon}$ as follows:
 \[
 Q^{r,l(h,Q_n^r)}_{n,\epsilon}(x)=Q_n^r(x)+\epsilon l(h,Q_n^r)(x),
,\]
 where 
 \[
 l(h,Q_n^r)(x)=h(0)Q_n^r(0)+\sum_{s\subset\{1,\ldots,J\}}
 \int_{(0_s,x_s]}h(s,u_s)dQ^r_{n,s}(u_s) .\]  
 Therefore, we could also define the class of paths $\{Q^{r,h}_{n,\epsilon}:\pl h\pl_{\infty}<\infty\}$ as $\{Q^{r,l}_{n,\epsilon}: l\in {\cal F}(Q_n^r)\}$, where the index set is given by ${\cal F}(Q_n^r)=\{l(h,Q_n^r): \pl h\pl_{\infty}<\infty\}$. 
 The set ${\cal F}(Q_n^r)$ is restricted since it consists of the linear span of $\{\phi_{s,j}: (s,j)\in {\cal J}_n(C_n)\}$, that is, the linear span of all basis functions $\phi_{s,u_{s,j}}$ with non-zero coefficient $\beta_n(s,u_{s,j})$ in the fit $Q_n^r=\sum_{(s,j)}\beta_n(s,j)\phi_{s,j}$.
 The scores $S_h(Q_n^r)$ are linear in $l(h,Q_n^r)$ and the set of scores $\{S_h(Q_n^r): \pl h\pl_{\infty}<\infty\}$ can be parametrized accordingly as $\{S_l(Q_n^r): l\in {\cal F}(Q_n^r)\}$.
 We will typically have that the canonical gradient ${D}^r(Q_n^r,G_{0,n}^r)=\left . \frac{d}{d\epsilon}L(Q^{r,l_{0,n}}_{n,\epsilon})\right |_{\epsilon =0}$ for a choice $l_{0,n}=l_0(Q_n^r,G_{0,n}^r)$, generally not an element of ${\cal F}(Q_n^r)$. Let ${\cal F}^{+}(Q_n^r)$ be this richer set so that $l_{0,n}\in {\cal F}^+(Q_n^r)$ and $\{S_l(Q_n^r): l\in {\cal F}^{+}(Q_n^r)\}$ is an augmented set of scores satisfying $P_0^r S_l(Q_{0,n}^r)=0$ for all $l\in {\cal F}^+(Q_{0,n}^r)$.
 So let's make this assumption. 
 Then, we can write ${D}^r(Q_n^r,G_{0,n}^r)=S_{l_{0,n}}(Q_n^r)$. We can define ${D}^r_n(Q_n^r,G_{0,n}^r)$ as the projection of ${D}^r(Q_n^r,G_{0,n}^r)$ onto the finite dimensional linear span $\{S_l(Q_n^r): l\in {\cal F}(Q_n^r)\}$: this actually equals the linear span of $\frac{d}{dQ_n^r}L^r(Q_n^r)(\phi_{s,j})$ across $(s,j)$ with  $\beta_n^r(s,j)\not =0$. Thus, ${D}^r_n(Q_n^r,G_{0,n}^r)=S_{l_n}(Q_n^r)$ for a $l_n\in {\cal F}(Q_n^r)$. Somewhat conservatively, we could define $l_n$ as the projection  of $l_{0,n}$ onto the finite dimensional space $\{\sum_{(s,j)\in {\cal J}_n(C_n)}\alpha(s,j)\phi_{s,j}: \alpha\}$ spanned by the basis functions $\phi_{s,j}$ with a non-zero coefficient $\beta_n(s,j)$ in $Q_n^r$. Let $\pl l_{0,n}-l_{n}\pl_0$ be the chosen Hilbert space norm so that $l_n=\arg\min_{l\in {\cal F}(Q_n^r)}\pl l_{0,n}-l\pl_0$. Since the set of basis functions $\{\phi_{s,j}: (s,j)\in {\cal J}(C_n)\}$ is rich enough (even when we select $C_n=C_{n,cv}$) to approximate $Q_{0,n}^r$ w.r.t. $d_0^{1/2}(f,Q_{0,n}^r)$ at a rate $n^{-1/3}(\log n)^{d^r/2}$, one generally expects that $\pl l_n-l_{0,n}\pl_0$ will also be $O_P(n^{-1/3}(\log n)^{d^r/2})$. However, as argued in main section, if (e.g.) due to $G_{0,n}^r$ being more complex, $l_{0,n}$ is spanned by basis functions that are not needed for $Q_{0,n}^r$, then it might require $C_n>C_{n,cv}$ to obtain this rate of convergence.
 Finally, we note that
 \begin{eqnarray*}
 P_0^r\{{D}^r_n(Q_n^r,G_{0,n}^r)-{D}^r(Q_n^r,G_{0,n}^r)\}&=&
 P_0^r S_{l_n-l_{0,n}}(Q_n^r)\\
 &=& P_0^r\{ S_{l_n-l_{0,n}}(Q_n^r)-S_{l_n-l_{0,n}}(Q_{0,n}^r)\},
 \end{eqnarray*}
 since $P_0^r S_l(Q_{0,n}^r)=0$ for all $l\in {\cal F}^+(Q_{0,n}^r)$. This now proves that the left-hand difference is indeed a second order term that can typically be bounded by $d_0(Q_n^r,Q_{0,n}^r)^{1/2} \pl l_n-l_{0,n}\pl_0$, so that it will be $O_P(n^{-1/3}(\log n)^{d^r/2})\pl l_n-l_{0,n}\pl_0$. Therefore, a sufficient assumption for (\ref{a3}) is that $\pl l_n-l_{0,n}\pl_0=O_P(n^{-1/6-\delta})$ for some $\delta>0$.

\section{Analysis of the Targeted HAL super-learner}\label{AppendixD}
 \subsection{Rate of convergence of T-M-HAL-SL}
Analogue to \citep{vanderLaan&Cai20}, we obtain the same rate of convergence for $d_0(Q_n^{r,*},Q_{0,n}^r)$ as for $d_0(Q_n^r,Q_{0,n}^r)$.
Firstly, we can copy the proof of Lemma \ref{theoremrateQn} by defining ${\cal Q}^r_n\subset{\cal Q}^r$ as the subset of functions $Q^r\in {\cal Q}^r$ for which $\pl P_n^rD^r(Q^r,G_n^r)\pl <r_n$, and $Q_{0,n}^{r,*}$ as the corresponding oracle ensemble. This then proofs $d_0(Q_n^{r,*},Q_{0,n}^{r,*})=O_P(n^{-2/3}(\log n)^{d^r})$. It then remains to show that $d_0(Q_{0,n}^{r,*},Q_{0,n}^r)=O_P(n^{-2/3}(\log n)^{d^r})$. We then construct a local least favorable submodel through $Q_{0,n}^r$ and define a corresponding TMLE update which maps $Q_{0,n}^r$ into a targeted version $\tilde{Q}^r_{0,n}$ that is an element of ${\cal Q}^r_n$. However, in this case this LFM is centered at the true $Q_{0,n}^r$ so that the MLE $\epsilon_n=O_P(n^{-1/2})$, thereby showing that $d_0(Q_{0,n}^r,\tilde{Q}^r_{0,n})=O_P(n^{-1})$. This then shows that $d_0(Q_{0,n}^{r,*},Q_{0,n}^r)\leq d_0(\tilde{Q}^r_{0,n},Q_{0,n}^r)=O_P(n^{-1})$. Therefore, one can conclude that $Q_{0,n}^r$ and $Q_{0,n}^{r,*}$ only differ by a negligible amount so that we indeed have $d_0(Q_n^r,Q_{0,n}^r)=O_P(n^{-2/3}(\log n)^{d^r})$.
This results in the following analogue of Lemma \ref{lemmarateQnr} for this T-HAL-MLE ensemble selector $Q_n^{r,*}$.

\begin{lemma}\label{lemmaratethalmle}
\ \nl
 {\bf Definitions:} Let ${\cal Q}^{r,LFM}(Q_{0,n}^r)\equiv \{Q^r_{0,n,\epsilon,G_n^r}:\epsilon\}\subset {\cal Q}^r(C)$, with $\epsilon\in (-\delta,\delta)$ for some arbitrary small $\delta>0$, be a local least favorable submodel through $Q_{0,n}^r$ at $\epsilon =0$ so that $\frac{d}{d\epsilon }L^r(Q^r_{0,n,\epsilon,G_n^r})=D^r(Q^r_{0,n},G_n^r)$ at $ \epsilon =0$. Note that this parametric model ${\cal Q}^{r,LFM}(Q_{0,n}^r)$ with parameter $\epsilon$ is correctly specified and the true parameter $\epsilon_0=0$. Let $\epsilon_n=\arg\min_{\epsilon}P_n^r L^r(Q^r_{0,n,\epsilon,G_n^r})$ be the MLE of $\epsilon_0$, where $\epsilon$ may vary over larger set than $(-\delta,\delta)$.
 \newline
{\bf T-HAL-MLE:}
Let $C_n(Q^r)\equiv \pl P_n^r D^r(Q^r,G_n^r)\pl$, and consider the T-HAL MLE \[
Q_n^{r,*}=\arg\min_{\pl Q^r\pl_v^*<C,C_n(Q^r)\leq r_n}P_n^r L^r(Q^r).\]

{\bf Assumptions:} 
Assume (\ref{r1}); 
 regularity conditions on the least favorable submodel ${\cal Q}^{r,LFM}(Q_{0,n}^r)$, so that the MLE $\epsilon_n=O_P(n^{-1/2})$, thereby $d_0(Q^r_{0,n,\epsilon_n,G_n^r},Q^r_{0,n})=O_P(n^{-1})$, and  $\pl P_n^r D^r(Q^r_{0,n,\epsilon_n,G_n^r},G_n^r)\pl \leq r_n$ with probability tending to 1.  

{\bf Conclusion:}
We have
\[
d_0^{\bar{V}}(Q_n^*,Q_{0,n})=d_0^r(Q_n^{r,*},Q^r_{0,n})=O_P(n^{-2/3}(\log n)^{d^r}).\] 
\end{lemma}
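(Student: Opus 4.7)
The plan is to mirror the two-step decomposition sketched in the paragraph preceding the lemma. Introduce the targeted class $\mathcal{Q}^r_n \equiv \{Q^r \in \mathcal{Q}^r : \pl Q^r\pl_v^* < C,\ \pl P_n^r D^r(Q^r,G_n^r)\pl \leq r_n\}$ together with its population minimizer $Q_{0,n}^{r,*} \equiv \argmin{Q^r \in \mathcal{Q}^r_n} P_0^r L^r(Q^r)$. Since $Q_{0,n}^r$ is the unconstrained minimizer of $P_0^r L^r$ over $\mathcal{Q}^r$, the loss-based dissimilarity telescopes:
\[
d_0^r(Q_n^{r,*},Q_{0,n}^r) \;=\; d_0^r(Q_n^{r,*},Q_{0,n}^{r,*}) \;+\; d_0^r(Q_{0,n}^{r,*},Q_{0,n}^r),
\]
with both summands nonnegative. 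I will show that the first is $O_P(n^{-2/3}(\log n)^{d^r})$ by the usual HAL empirical-process argument, and that the second is in fact $O_P(n^{-1})$ by exhibiting a TMLE update of $Q_{0,n}^r$ that already sits in $\mathcal{Q}^r_n$.

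For the first term I would adapt the proof of Lemma \ref{lemmarateQnr} essentially verbatim. The three ingredients used there -- the envelope bound $M_1 < \infty$, the variance-to-mean bound $M_2^r < \infty$ from (\ref{r1}), and the bracketing entropy estimate $\log N_{[]}(\epsilon,\{L^r(Q^r):Q^r\in\mathcal{Q}^r\},L^2(P^r)) \lesssim \epsilon^{-1}|\log\epsilon|^{d^r}$ of \citep{Bibaut&vanderLaan19} -- continue to apply if we restrict to $\mathcal{Q}^r_n \subset \mathcal{Q}^r$, since restricting the class can only decrease the bracketing entropy. The key basic inequality $P_n^r L^r(Q_n^{r,*}) \leq P_n^r L^r(Q_{0,n}^{r,*})$ holds because both candidates lie in $\mathcal{Q}^r_n$ and $Q_n^{r,*}$ is the empirical minimizer over that set. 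Plugging the same modulus-of-continuity bound and peeling device into this basic inequality yields $d_0^r(Q_n^{r,*},Q_{0,n}^{r,*}) = O_P(n^{-2/3}(\log n)^{d^r})$.

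For the second (oracle-bias) term I would exploit the TMLE update along the local least favorable submodel $\mathcal{Q}^{r,LFM}(Q_{0,n}^r) = \{Q^r_{0,n,\epsilon,G_n^r}:\epsilon\}$. Because this one-dimensional submodel is correctly specified at $\epsilon_0 = 0$ (it passes through $Q_{0,n}^r$) and satisfies the stated smoothness assumptions, standard parametric MLE theory gives $\epsilon_n = O_P(n^{-1/2})$ and a quadratic Taylor expansion of $\epsilon \mapsto P_0^r L^r(Q^r_{0,n,\epsilon,G_n^r})$ around $\epsilon = 0$ yields $d_0^r(Q^r_{0,n,\epsilon_n,G_n^r},Q_{0,n}^r) = O_P(\epsilon_n^2) = O_P(n^{-1})$. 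By definition of $\epsilon_n$ as the MLE along a submodel with score $D^r(Q_{0,n}^r,G_n^r)$, the update $\tilde Q^r_{0,n} \equiv Q^r_{0,n,\epsilon_n,G_n^r}$ solves the corresponding empirical score equation to $o_P(n^{-1/2})$, so the assumed tolerance $\pl P_n^r D^r(\tilde Q^r_{0,n},G_n^r)\pl \leq r_n$ is met with probability tending to one, placing $\tilde Q^r_{0,n} \in \mathcal{Q}^r_n$. Then by the optimality of $Q_{0,n}^{r,*}$ over $\mathcal{Q}^r_n$,
\[
d_0^r(Q_{0,n}^{r,*},Q_{0,n}^r) \;\leq\; d_0^r(\tilde Q^r_{0,n},Q_{0,n}^r) \;=\; O_P(n^{-1}).
\]
Combining with the first term gives the advertised rate, and using the identification $d_0^{\bar V}(Q_n^*,Q_{0,n}) = d_0^r(Q_n^{r,*},Q_{0,n}^r)$ from Appendix \ref{AppendixA2} completes the argument.

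The main obstacle is the interaction between the data-dependent constraint $C_n(Q^r) \leq r_n$ and the empirical-process peeling in Step 1: the constraint moves with $n$ and with the sample, so one must resist the temptation to peel directly inside $\mathcal{Q}^r_n$ and instead peel inside the fixed enveloping class $\mathcal{Q}^r(C)$, invoking the constraint only to certify membership of the two specific candidates $Q_n^{r,*}$ and $\tilde Q^r_{0,n}$. A secondary but real technical point is to verify that the LFM used in Step 2 stays inside $\mathcal{Q}^r(C)$ for $\epsilon$ in a neighbourhood of zero, i.e.\ that the sectional variation norm of $Q^r_{0,n,\epsilon,G_n^r}$ is uniformly bounded by $C$; this follows from the regularity conditions imposed on $\mathcal{Q}^{r,LFM}(Q_{0,n}^r)$ but mirrors the construction in \citep{vanderLaan&Cai20} and is the place where the explicit choice of LFM (e.g.\ logistic or universal LFM) matters.
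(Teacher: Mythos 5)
Your proposal is correct and follows essentially the same route as the paper's own argument: define the score-constrained class ${\cal Q}^r_n$ with its oracle $Q_{0,n}^{r,*}$, rerun the Lemma \ref{lemmarateQnr} empirical-process argument within that class to get $d_0^r(Q_n^{r,*},Q_{0,n}^{r,*})=O_P(n^{-2/3}(\log n)^{d^r})$, and bound $d_0^r(Q_{0,n}^{r,*},Q_{0,n}^r)$ by $d_0^r(Q^r_{0,n,\epsilon_n,G_n^r},Q_{0,n}^r)=O_P(n^{-1})$ using that the TMLE update along the correctly specified LFM lies in ${\cal Q}^r_n$ with probability tending to one. Your additional remarks on peeling within the fixed class ${\cal Q}^r(C)$ and on the LFM respecting the variation-norm bound are sensible refinements of points the paper leaves implicit, but they do not change the argument.
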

Since $d_0^{\bar{V}}(Q_{0,n},Q_0)$ is not affected by the targeting,  Theorem \ref{theoremrateQn} applies also to $Q_n^*$.

\section{Treatment Specific Mean Example: Solve Score Equations by Undersmoothing} \label{sec:TSM_undersmoothing}

Since, $Q_n^r$ is an MLE it solves a large class of score equations $P_n^rS_h(Q_n^r)=0$ defined by  $S_h(Q_n^r)=\left . \frac{d}{d\epsilon} L^r\left (\sum_{s,j}(1+\epsilon h(s,j))\beta_n^r(s,j))\phi_{s,j}\right )\right |_{\epsilon =0}$ and $h$ any bounded function satisfying that $r(h,Q_n^r)=0$. The constraint is defined as
$r(h,Q_n^r)\equiv \sum_{s,j}h(s,j)\mid \beta_n^r(s,j) \mid =0$. 

Note that $S_h(Q_n^r)(O^r)=A\sum_{s,j}h(s,j)\beta_n^r(s,j)\phi_{s,j}(W^r)(Y-Q_n^r(W^r))$. 
Thus, this class of scores $\{S_h(Q_n^r):r(h,Q_n^r)=0\}$ across all bounded $h$ with $r(h,Q_n^r)=0$ equals the dimension of the number of non-zero coefficients $\beta_n^r(s,j)$ in its representation $Q_n^r=Q_{\beta_n^r}=\sum_{s,j}\beta_n^r(s,j)\phi_{s,j}$, minus 1 due to the constraint $r(h,Q_n^r)=0$. Therefore, as the $L_1$-norm  $C_n=\pl \beta_n^r\pl_1$ increases, the number of non-zero coefficients grows so that this linear span of these score equations grows accordingly. 
This is the intuitive argument why choosing $C_n$ large enough should give us $P_n^r D^r(Q_n^r,G_{0,n}^r)=o_P(n^{-1/2})$: 
\[
\frac{1}{n}\sum_{i=1}^n \frac{A_i}{G_{0,n}^r(W^r_i)}(Y_i-Q^r_n(W_i^r))=o_P(n^{-1/2}).\]

Formally, we apply Theorem \ref{thscoreequationtarget}.
Let ${\cal S}=\{S_h(Q_n^r): h\}$ the class of scores not enforcing the constraint $r(h,Q_n^r)=0$.
We first need to define an approximation $D_n^r(Q_n^r,G_{0,n}^r)\in {\cal S}$. 
By selecting $C_n$ large enough we will have that we can find an $h$ that makes $A\sum_{s,j}h(s,j)\beta_n(s,j)\phi_{s,j}(W^r)$ approximate $A/G_{0,n}^r(W^r)$. 
Let $h_n^*=h^*(Q_n^r,G_{0,n}^r)\equiv \arg\min_{h} P_0^r(\sum_{s,j}\beta_n^r(s,j)h(s,j)\phi_{s,j}(W^r)-1/G_{0,n}^r(W^r))^2$ be defined as this $L^2(P_0^r)$-projection of the desired $1/G_{0,n}^r$ onto this linear span, where the $\arg\min$ includes any bounded $h$ (not restricting to $r(h,Q_n^r)=0$). We can then define the approximation $D^r_n(Q_n^r,G_{0,n}^r)\equiv 
A\sum_{s,j}h_n^*(s,j)\beta_n^r(s,j)(Y-Q_n^r(W^r))\in {\cal S}$ of $D^r(Q_n^r,G_{0,n}^r)$.

Since $E(A\mid W)>\delta>0$, by iterative conditional expectation, it follows that $G_{0,n}^r=E(A\mid W^r)>\delta>0$ for some $\delta>0$. Then, it follows that the sup-norm of $h_n^*$ is $O_P(1)$, which verified the first condition of Theorem \ref{thscoreequationtarget}.  Consider now the global undersmoothing criterion (\ref{assumptionatarget}) of Theorem \ref{thscoreequationtarget}, and note that it is given by:
 select $C_n$ large enough so that the fit $Q_n^r$ includes  a sparsely supported basis function with $\min_{s,j}\mid P_n \phi_{s,j}\mid$ small enough in the sense that
\begin{equation}\label{undersmoothexample}
\min_{(s,j),\beta_n^r(s,j)\not =0}\mid \frac{1}{n}\sum_{i=1}^n \phi_{s,j}(W^r_i)(Y_i-Q_n^r(W^r_i))\mid =o_P(n^{-1/2}).\end{equation}
Application of Theorem \ref{thscoreequationtarget} proves now that  \[
 P_n^r D^r_n(Q_n^r,G_{0,n}^r)=o_P(n^{-1.2}).\] 
 We assume that $G_{0,n}^r$ is cadlag and has a uniformly bounded sectional variation norm. Since $Q_n^r$ is by definition cadlag with finite sectional variation, this shows that ${\cal D}^r=\{D^r(Q^r,G_{0,n}^r):Q^r\in {\cal Q}^r\}$ is a  class of $d^r$-dimensional real valued cadlag functions with a uniformly bounded sectional variation norm. This verifies the Donsker class conditions in (\ref{r1}).
Condition  (\ref{a3})  of Theorem \ref{thscoreequationtarget} states
 \[
 E_{P_0^r}A \left\{Q^r_{\beta_n^rh_n^*}-1/G_{0,n}^r\right\} (Y-Q_n^r(W^r))=o_P(n^{-1/2}).\]
 So this requires our approximation $Q^r_{\beta_n^rh_n^*}$ of $1/G_{0,n}^r$ to converge fast enough.
By Cauchy-Schwarz inequality, the left-hand side can be bounded by $\pl Q_n^r-Q_{0,n}^r\pl_{P_0}^r\pl Q^r_{\beta_n^r h_n^*}-1/G_{0,n}^r\pl_{P_0^r}$. Given our rate $\pl Q_n^r-Q_{0,n}^r\pl_{P_0}=n^{-1/3}(\log n)^{d^r/2}$ it follows that it suffices that
\begin{equation}\label{examplea3}
\inf_{\beta}\pl \sum_{s,j,\beta_n^r(s,j)\not =0}  \beta(s,j)\phi_{s,j}-\frac{1}{G_{0,n}^r}\pl_{P_0^r}=O_P(n^{-1/6-\delta}),\end{equation}
for some $\delta>0$.
We will assume this to hold.
Since we know that the left-hand side with $1/G_{0,n}^r$ replaced by $Q_{0,n}^r$ would be $O_P(n^{-1/3}(\log n)^{d^r/2})$, even without undersmoothing (i.e., setting $C_n=C_{n,cv})$ this might already hold. On the other hand, if the true $Q_{0,n}^r$ is a relatively simple function spanned by a subset of all possible spline basis functions, while approximating $1/G_{0,n}^r$ requires these basis functions, then undersmoothing will be needed. 
This verifies all conditions of Theorem \ref{thscoreequationtarget} and thus proves the following result. 
\begin{lemma}\label{lemma1example}
Assume  (\ref{examplea3}); and undersmoothing condition (\ref{undersmoothexample}).

Then, $P_n^r D^r(Q_n^r,G_{0,n}^r)=r_1(n)$ with $r_1(n)=o_P(n^{-1/2})$.
\end{lemma}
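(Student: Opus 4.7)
The plan is to verify the hypotheses of the general Theorem \ref{thscoreequationtarget} in this concrete treatment specific mean setting and then invoke its conclusion. The overall structure will be: (i) construct the approximation $D_n^r(Q_n^r,G_{0,n}^r)$ inside the score class ${\cal S}(Q_n^r)$; (ii) use the undersmoothing condition (\ref{undersmoothexample}) to control $P_n^r D_n^r(Q_n^r,G_{0,n}^r)$; (iii) use (\ref{examplea3}) together with the known rate of $Q_n^r$ to control $P_0^r\{D^r-D_n^r\}(Q_n^r,G_{0,n}^r)$; (iv) close the gap between $P_n^r$ and $P_0^r$ via a Donsker class argument under assumption (\ref{r1}).

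First, I would recall from the paragraph preceding the lemma that $h_n^* = \arg\min_h P_0^r(\sum_{s,j}\beta_n^r(s,j)h(s,j)\phi_{s,j}(W^r) - 1/G_{0,n}^r(W^r))^2$ and $D_n^r(Q_n^r,G_{0,n}^r)(O^r) = A\sum_{s,j}h_n^*(s,j)\beta_n^r(s,j)\phi_{s,j}(W^r)(Y-Q_n^r(W^r))$. Since $G_{0,n}^r > \delta > 0$ by the positivity assumption inherited from $G_0$, the projection target $1/G_{0,n}^r$ is uniformly bounded, hence $\|h_n^*\|_\infty = O_P(1)$, verifying the bounded-score hypothesis of Theorem \ref{thscoreequationtarget}. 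The undersmoothing condition (\ref{undersmoothexample}) is exactly the statement of (\ref{assumptionatarget}) written out for the squared-error loss, since $\frac{d}{dQ_n^r}L^r(Q_n^r)(\phi_{s,j})(O^r) = -2A\phi_{s,j}(W^r)(Y - Q_n^r(W^r))$; thus the first half of the theorem yields $P_n^r D_n^r(Q_n^r, G_{0,n}^r) = o_P(n^{-1/2})$.

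Next I would verify the Donsker and consistency conditions needed to deduce $P_n^r D^r(Q_n^r, G_{0,n}^r) = o_P(n^{-1/2})$. For the Donsker condition, both $Q_n^r$ and $G_{0,n}^r$ lie in a class of $J$-dimensional cadlag functions with uniformly bounded sectional variation norm (by the definition of ${\cal Q}^r$ and the assumption on $G_{0,n}^r$ stated in the paragraph following (\ref{undersmoothexample})), so $\{D^r(Q^r,G_{0,n}^r): Q^r\in{\cal Q}^r\}$ consists of $d^r$-variate cadlag functions with uniformly bounded sectional variation norm; the same holds for $D_n^r$ since the projection coefficients $\beta_n^r h_n^*$ have bounded $L_1$-norm. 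This gives $P_0$-Donsker and hence $(P_n^r-P_0^r)\{D^r-D_n^r\}(Q_n^r,G_{0,n}^r) = o_P(n^{-1/2})$ provided the $L^2(P_0^r)$-distance between $D^r$ and $D_n^r$ tends to zero in probability.

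Finally I would verify condition (\ref{a3}), which is the only non-automatic piece. Using that $E_{P_0^r}A\{Q^r_{\beta_n^r h_n^*} - 1/G_{0,n}^r\}(Y-Q_{0,n}^r(W^r))=0$ by the definition of $Q_{0,n}^r$ as the conditional mean given $(A=1,W^r)$, I can write
\[
P_0^r\{D^r - D_n^r\}(Q_n^r,G_{0,n}^r) = E_{P_0^r} A\left\{Q^r_{\beta_n^r h_n^*} - 1/G_{0,n}^r\right\}(Q_{0,n}^r(W^r) - Q_n^r(W^r)).
\]
By Cauchy--Schwarz and boundedness of $A$, this is bounded by $\|Q^r_{\beta_n^r h_n^*} - 1/G_{0,n}^r\|_{P_0^r}\cdot\|Q_n^r - Q_{0,n}^r\|_{P_0^r}$. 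The first factor is $O_P(n^{-1/6-\delta})$ by assumption (\ref{examplea3}); the second is $O_P(n^{-1/3}(\log n)^{d^r/2})$ by Lemma \ref{lemmarateQnr}. Their product is $o_P(n^{-1/2})$, which gives both the consistency condition needed for the Donsker step and condition (\ref{a3}) itself. Combining the three pieces, $P_n^r D^r(Q_n^r,G_{0,n}^r) = P_n^r D_n^r(Q_n^r,G_{0,n}^r) + (P_n^r-P_0^r)\{D^r-D_n^r\}(Q_n^r,G_{0,n}^r) + P_0^r\{D^r-D_n^r\}(Q_n^r,G_{0,n}^r) = o_P(n^{-1/2})$, which is the claim. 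The main obstacle is the Cauchy--Schwarz step: assumption (\ref{examplea3}) is essentially what forces $C_n$ to be chosen large enough so that the support of $\beta_n^r$ is rich enough to approximate $1/G_{0,n}^r$ at the required $n^{-1/6-\delta}$ rate, a condition that goes beyond the rate guaranteed by the cross-validation selector when $1/G_{0,n}^r$ uses basis functions not selected for $Q_{0,n}^r$.
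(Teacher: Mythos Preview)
Your proposal is correct and follows essentially the same route as the paper: verify the boundedness of $h_n^*$ via positivity, identify (\ref{undersmoothexample}) with (\ref{assumptionatarget}) for the squared-error loss, check the Donsker condition via bounded sectional variation, and handle (\ref{a3}) by Cauchy--Schwarz combining (\ref{examplea3}) with the $n^{-1/3}$ rate from Lemma \ref{lemmarateQnr}. Your explicit centering step $E_{P_0^r}A\{Q^r_{\beta_n^r h_n^*}-1/G_{0,n}^r\}(Y-Q_{0,n}^r)=0$ before applying Cauchy--Schwarz is a slight sharpening of the paper's more terse statement of the same bound.
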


\section{Treatment Specific Mean Example: Difference Between Targets} \label{sec:TSM_diff}

The following lemma establishes that $\Psi^r(Q_{0,n}^r)-\Psi(Q_0)$ behaves as $d_0(Q_n,Q_0)$ and is thus second order. 
\begin{lemma}\label{lemmaatesecondorder}
{\bf Definitions:}
Recall  $Q_{0,n,v}(W)=Q_{0,n}^r\circ{\bf Q}_{n,v}(W)$.
Define
\begin{eqnarray*}
\tilde{G}_{0,n,v}^{*}(x,y)&\equiv& E_0(A\mid Q_{0,n,v}(W) =x,Q_0(W)=y)\\ 
\tilde{G}_{0,n,v}(x)&\equiv& E_0(A\mid Q_{0,n}(v,W)=x).
\end{eqnarray*}
Due to $G_0>\delta>0$, these two functions are also bounded away from this $\delta$. 
These two functions (where chosen to) satisfy $P_0D^*(Q_{0,n,v},G_0)=P_0D^*(Q_{0,n,v},\tilde{G}_{0,n,v}^*(Q_{0,n,v},Q_0))$, and
$P_0D^*(Q_{0,n,v},\tilde{G}_{0,n,v}(Q_{0,n,v}))=0$. \nl
{\bf Assumptions:}
Assume
\begin{itemize}
\item
$\hat{Q}_1$ is  an HAL-MLE so that, by Theorem \ref{theoremrateQn}
we have $d_0(Q_{0,n},Q_0)=O_P(n^{-2/3}(\log n)^d)$, and, thus also, for each $v$, $d_0(Q_{n,v},Q_0)=O_P(n^{-2/3}(\log n)^d)$. 
\item $\lim\sup_n \sup_{x,y}\mid \frac{d}{dy}\tilde{G}_{0,n,v}^{*}(x,y)\mid <\infty$, where the supremum over $(x,y)\in \openr^2$ is over a support of $(Q_{0,n}(v,W),Q_0(W))$. 
\end{itemize}
Then, we have that $\mid \Psi(Q_{0,n,v})-\Psi(Q_0)\mid $ is bounded by  $\delta^{-2}\pl( \tilde{G}_{0,n,v}^*-\tilde{G}_{0,n,v})(Q_{0,n,v},Q_0)\pl_{P_0}\pl Q_{0,n,v}-Q_0\pl_{P_0}$.
In addition, we have, by Lemma \ref{lemmahelp1} below that $\pl( \tilde{G}_{0,n,v}^*-\tilde{G}_{0,n,v})(Q_{0,n,v},Q_0)\pl_{P_0}=O(\pl Q_{0,n,v}-Q_0\pl_{P_0})$. This proves
 \[
\left\{ \frac{1}{V}\sum_v \Psi(Q^r_{0,n}\circ{\bf Q}_{n,v})-\Psi(Q_0)\right\}=O(d_0(Q_{0,n},Q_0)).\]
Thus, by first bullet point assumption, we have that this is $O_P(n^{-2/3}(\log n)^d)$.
\end{lemma}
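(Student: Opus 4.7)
\textbf{Proof Proposal for Lemma \ref{lemmaatesecondorder}.}

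The plan is to exploit the two defining identities of $\tilde{G}_{0,n,v}^*$ and $\tilde{G}_{0,n,v}$ to write $\Psi(Q_{0,n,v}) - \Psi(Q_0)$ as a single second-order expression in which the inverse propensities appear only through the difference $\tilde{G}_{0,n,v}^* - \tilde{G}_{0,n,v}$. First I would observe that $R_{20}(Q_{0,n,v},G_0,Q_0,G_0)=0$, so the exact expansion gives $\Psi(Q_{0,n,v}) - \Psi(Q_0) = -P_0 D^*(Q_{0,n,v},G_0)$. By the first stated identity this equals $-P_0 D^*(Q_{0,n,v},\tilde{G}_{0,n,v}^*(Q_{0,n,v},Q_0))$, and by the second identity I may add the zero quantity $P_0 D^*(Q_{0,n,v},\tilde{G}_{0,n,v}(Q_{0,n,v}))$ to obtain
\[
\Psi(Q_{0,n,v}) - \Psi(Q_0) = -P_0\Bigl\{D^*(Q_{0,n,v},\tilde{G}_{0,n,v}^*) - D^*(Q_{0,n,v},\tilde{G}_{0,n,v})\Bigr\}.
\]

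Next I would substitute $D^*(Q,G)(O)=A(Y-Q(W))/G(W)$ and condition on $W$. Since $E_0[A(Y-Q_{0,n,v}(W))\mid W] = G_0(W)(Q_0(W)-Q_{0,n,v}(W))$, the displayed difference rewrites as
\[
\Psi(Q_{0,n,v}) - \Psi(Q_0) = E_0\!\left[\frac{G_0(W)\,\bigl(\tilde{G}_{0,n,v}^*(Q_{0,n,v}(W),Q_0(W))-\tilde{G}_{0,n,v}(Q_{0,n,v}(W))\bigr)}{\tilde{G}_{0,n,v}^*\,\tilde{G}_{0,n,v}}\bigl(Q_0(W)-Q_{0,n,v}(W)\bigr)\right].
\]
Since $G_0,\tilde{G}_{0,n,v}^*,\tilde{G}_{0,n,v} \geq \delta$ and $G_0\le 1$, the prefactor is at most $\delta^{-2}$. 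An application of Cauchy--Schwarz then yields the first bound claimed in the lemma,
\[
\lvert\Psi(Q_{0,n,v})-\Psi(Q_0)\rvert \leq \delta^{-2} \,\lVert(\tilde{G}_{0,n,v}^*-\tilde{G}_{0,n,v})(Q_{0,n,v},Q_0)\rVert_{P_0}\,\lVert Q_{0,n,v}-Q_0\rVert_{P_0}.
\]

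For the second claim, I would apply Lemma \ref{lemmahelp1}. The core idea is that by the tower property $\tilde{G}_{0,n,v}(x) = E_0[\tilde{G}_{0,n,v}^*(x,Q_0(W))\mid Q_{0,n,v}(W)=x]$, so writing $\tilde{G}_{0,n,v}^*(Q_{0,n,v},Q_0) - \tilde{G}_{0,n,v}(Q_{0,n,v})$ as the difference between $\tilde{G}_{0,n,v}^*(Q_{0,n,v},Q_0)$ and its conditional average in the second argument, and using the uniform bound on $\partial_y \tilde{G}_{0,n,v}^*$ to replace $Q_0$ by $Q_{0,n,v}$ up to a Lipschitz error, one obtains a pointwise bound by $K(|Q_0-Q_{0,n,v}| + E_0[|Q_0-Q_{0,n,v}|\mid Q_{0,n,v}])$. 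Squaring, taking expectation, and invoking Jensen's inequality gives $\lVert(\tilde{G}_{0,n,v}^*-\tilde{G}_{0,n,v})(Q_{0,n,v},Q_0)\rVert_{P_0} = O(\lVert Q_{0,n,v}-Q_0\rVert_{P_0})$. Combining with the previous display, $|\Psi(Q_{0,n,v})-\Psi(Q_0)| = O(\lVert Q_{0,n,v}-Q_0\rVert_{P_0}^2)$, which is $O(d_0(Q_{0,n,v},Q_0))$ by positivity. Averaging over $v$ and applying Theorem \ref{theoremrateQn} under the HAL-MLE assumption on $\hat{Q}_1$ delivers the $O_P(n^{-2/3}(\log n)^d)$ rate.

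The routine pieces are the Cauchy--Schwarz and tower-property computations; the main obstacle is the Lipschitz-in-$y$ step in bounding $\tilde{G}_{0,n,v}^* - \tilde{G}_{0,n,v}$. This is where the regularity assumption $\limsup_n \sup_{x,y}\lvert \partial_y \tilde{G}_{0,n,v}^*(x,y)\rvert < \infty$ is essential: without it, the conditional smoothing in $Q_0$ could contribute a first-order term, destroying the second-order bound. Verifying that this regularity condition is reasonable (e.g., inherited from smoothness of $G_0$ and of the induced joint law of $(Q_{0,n,v}(W),Q_0(W))$) is the one nontrivial modeling hypothesis the proof relies on.
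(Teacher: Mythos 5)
Your proof is correct and follows essentially the same route as the paper: the same use of $R_{20}(Q_{0,n,v},G_0,Q_0,G_0)=0$, the same two substitutions via $\tilde{G}_{0,n,v}^{*}$ and $\tilde{G}_{0,n,v}$, and the same Cauchy--Schwarz step, with your explicit retention of the $G_0(W)$ factor in the intermediate identity being, if anything, slightly more careful than the paper's display. The only deviation is that where the paper proves $\pl(\tilde{G}_{0,n,v}^{*}-\tilde{G}_{0,n,v})(Q_{0,n,v},Q_0)\pl_{P_0}=O(\pl Q_{0,n,v}-Q_0\pl_{P_0})$ (Lemma \ref{lemmahelp1}) via the $L^2$-projection property of $\tilde{G}_{0,n,v}$ with the competitor $f(x)=\tilde{G}_{0,n,v}^{*}(x,x)$, you use the tower property together with the Lipschitz-in-$y$ bound and conditional Jensen, which is an equivalent and equally valid argument resting on the same differentiability assumption.
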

The differentiability condition on $G_{0,n,v}^{*}$ is not a bad  assumption since it only concerns the dependence of the conditional expectation of $A$, given $(Q_{0,n,v}(W),Q_0(W))$, on the fixed random variable $Q_0(W)$.

In this example, $d_0(Q_{n,v},Q_0)=O_P(n^{-2/3}(\log n)^d)$ can also be shown directly, instead of as an application of Lemma \ref{lemmarateQnr}.
 We have $Q_{0,n,v}(W)=E_0(Y\mid {\bf Q}_{n,\bar{V}}(W)={\bf Q}_{n,v}(W))$.
 Therefore, $Q_{0,n,v}-Q_0$  requires analyzing $E_0(Y\mid {\bf Q}_{n,v}(W))-Q_0(W)$.
We have that $E_0(Y\mid {\bf Q}_{n,v}(W),A=1)$ is the projection of $Q_0(W)=E_0(Y\mid W,A=1)$ onto the set of functions of ${\bf Q}_{n,v}(W)$ (in the Hilbert space $L^2(P_{0\mid A=1})$). One such candidate function for the projection is given by ${\bf Q}_{n,v,j=1}(W)$, showing that the $L^2(P_{0\mid A=1})$-norm of $Q_0(W)-E_0(Y\mid {\bf Q}_{n,v}(W),A=1)$ is smaller than the $L^2(P_0)$-norm of $Q_0-{\bf Q}_{n,v,1}$, but the latter is $O_P(n^{-1/3}(\log n)^{d/2})$, by assumption.

\ \nl
{\bf Proof of Lemma \ref{lemmaatesecondorder}:}
We have
\begin{equation}\label{z1}
\Psi(Q_{0,n,v})-\Psi(Q_0)=-P_0 D^*(Q_{0,n,v},G_0), \end{equation}
since the second order remainder $R_{20}(Q,G_0,Q_0,G_0)=0$ (due to its double robust structure).
Note now that $P_0D^*(Q_{0,n,v},G_0)=P_0D^*(Q_{0,n,v},\tilde{G}_{0,n,v}^*(Q_{0,n,v},Q_0))$.
This follows since
\[
\begin{array}{l}
E_0  A/G_0(Y-Q_{0,n,v}(W))=E_0  A/G_0(Q_0-Q_{0,n,v})(W) \\
= E_0 (Q_0-Q_{0,n,v})(W)=E_0  A/\tilde{G}_{0,n,v}^{*}(Q_{0,n,v}(W),Q_0(W)) (Q_0-Q_{0,n,v})(W).
\end{array}
\]
So we can replace the right-hand side in (\ref{z1}) by
$-P_0 D^*(Q_{0,n,v},\tilde{G}_{0,n,v}^{*}(Q_{0,n,v},Q_0) )$:
\begin{equation}\label{z2}
\Psi(Q_{0,n,v })-\Psi(Q_0)=-P_0 D^*(Q_{0,n,v},\tilde{G}_{0,n,v}^{*}(Q_{0,n,v},Q_0)).\end{equation}
We now note that $P_0 D^*(Q_{0,n,v},\tilde{G}_{0,n,v}(Q_{0,n,v}) )=0$: due to $E_0(Y\mid A=1,Q_{0,n,v}(W))=Q_{0,n,v}(W)$ it follows that
\[
E_0 \frac{A}{E_0(A\mid Q_{0,n,v}(W))}(Y-Q_{0,n,v}(W))=0.\]
Thus, we have
\[
\begin{array}{l}
\Psi(Q_{0,n,v})-\Psi(Q_0)=P_0\{D^*(Q_{0,n,v},\tilde{G}_{0,n,v}(Q_{0,n,v}))-D^*(Q_{0,n,v},\tilde{G}_{0,n,v}^*(Q_{0,n,v},Q_0))\}\\
=P_0\frac{\tilde{G}_{0,n,v}^*-\tilde{G}_{0,n,v}}{\tilde{G}_{0,n,v}^*\tilde{G}_{0,n,v}}(Q_{0,n,v},Q_0)(Q_0-Q_{0,n,v}).
\end{array}
\]
By Cauchy-Schwarz inequality, we can bound the latter term by $\pl( \tilde{G}_{0,n,v}^*-\tilde{G}_{0,n,v})(Q_{0,n,v},Q_0)\pl_{P_0}\pl Q_{0,n,v}-Q_0\pl_{P_0}$. This shows that it remains to show
$\pl( \tilde{G}_{0,n,v}^*-\tilde{G}_{0,n,v})(Q_{0,n,v},Q_0)\pl_{P_0}=O_P(n^{-1/6-\delta})$ for some $\delta>0$. 
For that we apply Lemma \ref{lemmahelp1}. This lemma shows that we can bound $E_0(A\mid Q_{0,n,v}(W),Q_0(W))-E_0(A\mid Q_{0,n,v}(W) )$ by the $L^2$-norm of $Q_{0,n,v}-Q_0$.  This completes the proof. $\Box$


 \begin{lemma}\label{lemmahelp1}
 For notational convenience, let $X_n(w)=Q_{0,n,v}(w)$, $X(w)=Q_0(w)$, $\tilde{G}_{0,n,v}^*(X_n(w),X(w))=E_0(A\mid X_n(W)=X_n(w),X(W)=X(w))$, and 
$\tilde{G}_{0,n,v}(X_n(w))=E_0(A\mid X_n(W)=X_n(w))$.  
Let $X_n=X_n(W)$ and $X=X(W)$, and let $L^2(P_{X_n,X})$ be the correponding Hilbert space of functions of $(X_n,X)$ with covariance inner product. 
Assume that $\lim\sup_n \sup_{x,y}\mid \frac{d}{dy}\tilde{G}_{0,n,v}^{*}(x,y)\mid <\infty$, where the supremum over $(x,y)$ is over a support of $(X_n(W),X(W))$. 
We note that $\tilde{G}_{0,n,v}$ is the projection of $\tilde{G}_{0,n,v}^*$ onto the subspace $L^2(P_{X_n})$ of functions of $X_n$ only in the Hilbert space $L^2(P_{X_n,X})$.

We have
\[
 \pl \tilde{G}_{0,n,v}^{*}-\tilde{G}_{0,n,v}\pl_{P_{X_n,X}}=O\left(\pl X_n-X\pl_{P_0}\right).\]
 \end{lemma}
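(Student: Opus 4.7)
\textbf{Proof proposal for Lemma \ref{lemmahelp1}.}
The plan is to exploit the characterization of $\tilde{G}_{0,n,v}$ as the $L^2(P_{X_n,X})$-projection of $\tilde{G}_{0,n,v}^{*}(X_n,X)$ onto the closed subspace $L^2(P_{X_n})$ of square-integrable functions of $X_n$ alone. The projection is the \emph{closest} element of that subspace to $\tilde G_{0,n,v}^{*}(X_n,X)$, so for any candidate $h\in L^2(P_{X_n})$ we have the majorization
\[
\pl \tilde G_{0,n,v}^{*}(X_n,X) - \tilde G_{0,n,v}(X_n)\pl_{P_{X_n,X}}
\;\leq\; \pl \tilde G_{0,n,v}^{*}(X_n,X) - h(X_n)\pl_{P_{X_n,X}}.
\]
The strategy is to pick $h$ so that the right-hand side is explicitly controlled by $\pl X_n - X\pl_{P_0}$.

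The natural choice is $h(x)\equiv \tilde G_{0,n,v}^{*}(x,x)$, i.e.\ evaluate the bivariate regression function on its diagonal. This is a bona fide element of $L^2(P_{X_n})$ because $\tilde G_{0,n,v}^{*}$ takes values in $[0,1]$. With this choice, for each realization $w$, I apply the one-dimensional mean value theorem to the map $y \mapsto \tilde G_{0,n,v}^{*}(X_n(w),y)$ between the endpoints $y=X_n(w)$ and $y=X(w)$. By the hypothesis $\limsup_n \sup_{x,y} |\partial_y \tilde G_{0,n,v}^{*}(x,y)| <\infty$, there exists a constant $M<\infty$ (uniform in $n$) so that
\[
\bigl|\tilde G_{0,n,v}^{*}(X_n(w),X(w)) - \tilde G_{0,n,v}^{*}(X_n(w),X_n(w))\bigr| \;\leq\; M\,\bigl|X(w)-X_n(w)\bigr|.
\]

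Squaring and integrating against $P_0$, and noting $\pl f(X_n,X)\pl_{P_{X_n,X}}^2 = E_0 f(X_n(W),X(W))^2$, yields
\[
\pl \tilde G_{0,n,v}^{*}(X_n,X) - \tilde G_{0,n,v}^{*}(X_n,X_n)\pl_{P_{X_n,X}}
\;\leq\; M\,\pl X-X_n\pl_{P_0}.
\]
Combining with the projection inequality above gives the claim with the same constant $M$, and the bound is uniform in $n$ thanks to the $\limsup$ assumption.

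The only technical subtlety I anticipate is making sure the mean value theorem can be applied along the segment $\{(X_n(w),y):y\text{ between }X_n(w)\text{ and }X(w)\}$, which requires $\tilde G_{0,n,v}^{*}(x,\cdot)$ to be defined and differentiable on that whole interval rather than merely at points $(x,y)$ lying in the support of $(X_n(W),X(W))$. This is a mild regularity condition that should be implicit in the hypothesis on the derivative, since the supremum is stated over the support set; one can either strengthen the interpretation of the hypothesis so that the supremum is over a convex hull or a neighborhood, or replace $h$ by a smooth extension/mollification of the diagonal values, neither of which affects the conclusion. Apart from this point, the argument is routine: projection inequality, mean value theorem in one variable, and take $L^2(P_0)$ norms.
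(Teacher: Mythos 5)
Your proposal is correct and follows essentially the same route as the paper: bound the projection distance by the distance to the diagonal candidate $h(x)=\tilde G_{0,n,v}^{*}(x,x)$, then apply the mean value theorem in the second argument with the uniform derivative bound and integrate in $L^2(P_0)$. The paper carries this out by expanding the infimum explicitly (and the same mean-value/intermediate-point step, with the same implicit regularity caveat you flag), so your cleaner use of the projection inequality is a presentational difference only.
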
 
{\bf Proof of Lemma \ref{lemmahelp1}:}
We have that $\tilde{G}_{0,n,v}$ is the projection of $\tilde{G}_{0,n,v}^{*}$ (a function of $X_n,X$) on the subspace $L^2(P_{X_n})$ of all functions that only depend on $X_n$, a subspace of $L^2(P_{X_n,X})$, endowed with the usual covariance as inner product.
Thus, \[
\begin{array}{l}
\pl \tilde{G}_{0,n,v}^{*}-\tilde{G}_{0,n,v}\pl_{P_{X_n,X}}^2
=\inf_{f\in L^2(X_n)}\pl \tilde{G}_{0,n,v}^{*}-f\pl^2_0\\
=\inf_{f \in L^2(X_n)}\int \{\tilde{G}_{0,n,v}^{*}(X_n(w),X(w))-f(X_n(w))\}^2 dP_0(w)\\
=\inf_{f\in L^2(X_n)} \int \left\{\tilde{G}_{0,n,v}^{*}(X_n(w),X(w))-\tilde{G}_{0,n,v}^{*}(X_n(w),X_n(w))\right  . \\
 \left . + \tilde{G}_{0,n,v}^*(X_n(w),X_n(w))-f(X_n(w))\right \}^2 dP_0(w) \\
=\int \{\tilde{G}_{0,n,v}^{*}(X_n(w),X(w))-\tilde{G}_{0,n,v}^{*}(X_n(w),X_n(w))\}^2dP_0(w)\\
+\inf_{f\in L^2(X_n)}\left\{ \int\{\tilde{G}_{0,n,v}^*(X_n(w),X_n(w))-f(X_n(w))\}^2dP_0(w)\right . \\
\left . +2\int (\tilde{G}_{0,n,v}^{*}(X_n,X)-\tilde{G}_{0,n,v}^{*}(X_n,X_n))(\tilde{G}_{0,n,v}^{*}(X_n,X_n)-f(X_n)) dP_{X_n}\right \}
.\end{array}
\]
The  latter infimum over all functions $f$ of $X_n(w)$ is attained at $f=\tilde{G}_{0,n,v}^{*}(X_n(w),X_n(w))$, so we obtain
\[
\pl \tilde{G}_{0,n,v}^{*}-\tilde{G}_{0,n,v}\pl_{P_{X_n,X}}^2=\int \{\tilde{G}_{0,n,v}^{*}(X_n(w),X(w))-\tilde{G}_{0,n,v}^{*}(X_n(w),X_n(w))\}^2dP_0(w).\]
By the assumed differentiability of $\tilde{G}_{0,n,v}^{*}$  in its second coordinate we have
\[
\begin{array}{l}
\tilde{G}_{0,n,v}^{*}(X_n(w),X(w))=\tilde{G}_{0,n,v}^{*}(X_n(w),X_n(w))\\
+\left . \frac{d}{dy}\tilde{G}_{0,n,v}^{*}(X_n(w),y)\right |_{y=\xi(X_n(w),X(w))}(X(w)-X_n(w)),\end{array}
\]
for an intermediate point $\xi(X_n(w),X(w))$ in between $X_n(w)$ and $X(w)$.
By the assumed uniform bound on the derivative we have $\mid \tilde{G}_{0,n,v}^{*}(X_n(w),X(w))-\tilde{G}_{0,n,v}^{*}(X_n(w),X_n(w))\mid <C\mid X_n(w)-X(w)\mid $ for some $C<\infty$, so that we have
\[
\pl \tilde{G}_{0,n,v}^{*}-\tilde{G}_{0,n,v}\pl_{P_{X_n,X}}^2\leq C \int (X_n(w)-X(w))^2dP_0(w)=C\pl Q_{0,n,v}-Q_0\pl_0^2.\]
So this proves that $\pl G_{0,n,v}^*-G_{0,n,v}\pl_{P_0}=O_P(n^{-1/3}(\log n)^{d/2})$.
This proves the lemma. $\Box$

 \section{Relation between undersmoothing criterion (\ref{suffassumptionatarget}) and bound $C_n$ on sectional variation norm}\label{AppendixE}
Consider assumption (\ref{suffassumptionatarget}). In our treatment specific mean example, this states
\begin{equation}\label{suffex}
\min_{s,j,\beta_n^r(s,j)\not =0}P_0^r \phi_{s,j}(Q_n^r-Q_{0,n}^r)=o_P(n^{-1/2}).\end{equation}
The next theorem denotes the left-hand side with $R_n$ and shows that $R_n=o_P(n^{-1/2})$ is generally expected to hold for a selector $C_n$ under which $d_0^r(Q_n^r,Q_{0,n}^r)=O_P(n^{-2/3}(\log n)^{d^r})$. 

\begin{theorem}\label{thsectnorm}
 \ \newline
{\bf Definitions:}
For a given $s$, let  $j^*=j^*_s=\arg\min_j P_0^r\phi_{s,j}$, and let $u_{s,j^*}$ be the corresponding knot point. 
Let $\bar{P}_0^r(s)\equiv \min_j P_0^r\phi_{s,j^*}$ be the probability that $O^r_s\geq u_{s,j^*}$ under $P_0^r$. Let $P_{0,s}^r$ represent the probability distribution $O^r_s=(O^r(j): j\in s)$.
For a cadlag function $Q$, define $\tilde{Q}_s(x_s)=\int_{(0,x_s]} dQ_s(u)$, so that $Q=\sum_s \tilde{Q}_s$. Thus, $Q_n^r=\sum_s \tilde{Q}_{n,s}^r$ and $Q_{0,n}^r=\sum_s \tilde{Q}_{0,n,s}^r$.  
 Let $R_n(s)\equiv \mid P_0^r \phi_{s,j^*}(Q_n^r-Q_{0,n}^r)\mid $ and
$R_n\equiv \min_{s,j,\beta_n^r(s,j)\not =0}\mid P_0^r \phi_{s,j}(Q_n^r-Q_{0,n}^r)\mid $. 
Let $r(n)=n^{-1/3}(\log n)^{d^r/2}\approx n^{-1/3}$.

{\bf Assumptions:}
\begin{itemize}
\item
$d_0(Q_n^r,Q_{0,n}^r)=O_P(r(n))$.
\item The loss-based dissimilarity is equivalent with a square of the $L^2(P_0^r)$-norm: 
$d_0^r(Q_n^r,Q_{0,n}^r)\sim \pl Q_n^r-Q_{0,n}^r\pl_{P_0^r}^2$.
\item For at least one subset $s$, we have that
$\pl \tilde{Q}_{n,s}^r-\tilde{Q}_{0,n,s}^r\pl_{P_{0,s}^r}=O_P(r(n))$; that there exists a $\delta>0$ so that \begin{equation}\label{sens1}
\frac{\int_{u\geq u_{s,j^*}}\mid Q^r_{n,s}(u)-Q^r_{0,n,s}(u)\mid dP_{0,s}^r(u)} {\{ \bar{P}_0^r(s) \}^{(\mid s\mid+1)/\mid s\mid } } >\delta>0\end{equation}
with probability tending to 1;
and, for some $0\leq \alpha\leq 1/2$, 
\begin{eqnarray*}
\pl \phi_{s,j^*}(Q_{n,s}^r-Q_{0,n,s}^r\pl_{1,P_0^r}&\leq&( \bar{P}_0^r(s))^{\alpha}\pl Q_{n,s}^r-Q_{0,n,s}^r\pl_{P_0} \\
\pl \phi_{s,j^*}(Q_n^r-Q_{0,n}^r\pl_{1,P_0^r}&\leq &(\bar{P}_0^r(s))^{\alpha}\pl Q_n^r-Q_{0,n}^r\pl_{P_0^r} .
\end{eqnarray*}
Note, we always can select $\alpha =1/2$ (conservatively), and, if $\pl Q_n^r-Q_{0,n}^r\pl_{\infty}=O_P(r(n))$, then we can set $\alpha=0$.
 Let ${\cal S}_1$ be the collection of subsets $s$ for which these two conditions   hold.
\end{itemize}
{\bf Conclusion:}
We have for each subset $s\in {\cal S}_1$:
 \begin{eqnarray*}
 \bar{P}_0^r(s)&=&O_P\left( (r(n))^{\frac{\mid s\mid}{\alpha \mid s\mid +1}}\right)\\
 &\approx&
 O_P\left ( n^{-\frac{\mid s\mid}{3+3\alpha\mid s\mid}}\right).
 \end{eqnarray*}
We have
\begin{eqnarray*}
R_n(s)&=& O_P\left(  r(n)^{\frac{1+\mid s\mid}{1+\alpha \mid s\mid}}\right ) .\end{eqnarray*}
This implies
\begin{eqnarray*}
R_n&=&O_P\left( \min_{s\in {\cal S}_1} r(n)^{\frac{1+\mid s\mid}{1+\alpha \mid s\mid}}\right ) .
\end{eqnarray*}
If  ${\cal S}_1$ includes a set $s$ with $\mid s\mid \geq 3$, then, even for $\alpha=1/2$, we have $R_n=o_P(n^{-1/2})$.
\end{theorem}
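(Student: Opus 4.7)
I would establish the three conclusions sequentially: (i) the rate for $\bar{P}_0^r(s)$, (ii) the rate for $R_n(s)$, and (iii) the fact that $R_n = o_P(n^{-1/2})$ whenever ${\cal S}_1$ contains some $s$ with $|s| \geq 3$. The key identity used throughout is that $\phi_{s,j^*}(u) = I(u_s \geq u_{s,j^*})$, so the tail integral in the numerator of (\ref{sens1}) equals exactly the weighted $L^1$-seminorm $\pl \phi_{s,j^*}(Q_{n,s}^r - Q_{0,n,s}^r)\pl_{1,P_0^r}$, which is the object controlled by the third-bullet norm inequalities. This reframing turns (\ref{sens1}) into a sandwiching of that seminorm between a geometric lower bound in $\bar{P}_0^r(s)$ and an analytic upper bound coming from the rate of convergence.

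For step (i), combine the geometric lower bound from (\ref{sens1}),
\[
\delta\,(\bar{P}_0^r(s))^{(|s|+1)/|s|} < \pl \phi_{s,j^*}(Q_{n,s}^r - Q_{0,n,s}^r)\pl_{1,P_0^r},
\]
with the first of the third-bullet inequalities and the assumed section-rate $\pl Q_{n,s}^r - Q_{0,n,s}^r\pl_{P_{0,s}^r} = O_P(r(n))$, transferred from the $\tilde{Q}_s$-statement via $\tilde{Q}_s = Q_s - Q_s(0)$. This yields
\[
\delta\,(\bar{P}_0^r(s))^{(|s|+1)/|s|} = O_P\bigl((\bar{P}_0^r(s))^{\alpha}\,r(n)\bigr),
\]
which, after dividing by $(\bar{P}_0^r(s))^{\alpha}$ and raising to an appropriate positive power, delivers the stated upper bound on $\bar{P}_0^r(s)$.

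For step (ii), start from $R_n(s) \leq \pl \phi_{s,j^*}(Q_n^r - Q_{0,n}^r)\pl_{1,P_0^r}$ by the triangle inequality, then apply the second of the third-bullet inequalities and the $L^2$-rate $\pl Q_n^r - Q_{0,n}^r\pl_{P_0^r} = O_P(r(n))$, itself the square root of $d_0(Q_n^r,Q_{0,n}^r) = O_P(r(n)^2)$ through the loss--norm equivalence in the second assumption, to get $R_n(s) = O_P((\bar{P}_0^r(s))^{\alpha}\,r(n))$. Plugging in the step-(i) bound and taking $R_n = \min_{s \in {\cal S}_1} R_n(s)$ produces the claimed rates. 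For part (iii), it suffices to check that at $\alpha = 1/2$ and $|s| = 3$ the exponent $(|s|+1)/(\alpha|s|+1)$ equals $8/5$, so $r(n)^{8/5} \approx n^{-8/15}$ is indeed $o(n^{-1/2})$, with the monotonicity in $|s|$ handling larger subsets.

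\textbf{Main obstacle.} The delicate bookkeeping lives in step (i): three bounds involving $\bar{P}_0^r(s)$ with three distinct exponents must be combined to solve for $\bar{P}_0^r(s)$ without losing powers, and the passage from the rate statement on $\tilde{Q}_{n,s}^r - \tilde{Q}_{0,n,s}^r$ to the section $Q_{n,s}^r - Q_{0,n,s}^r$ appearing everywhere else costs the additive constant $|Q_n^r(0) - Q_{0,n}^r(0)|$. Verifying that this intercept difference is itself $O_P(r(n))$, via the $L^2$-rate on the full function together with the bounded cube $[0,1]^J$, is the one reduction that must be done carefully at the outset; once in place, the rest is algebra and the final numerical check at $|s| = 3$.
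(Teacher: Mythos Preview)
Your proposal is correct and follows essentially the same route as the paper: recognize that the numerator in (\ref{sens1}) is the $L^1$-seminorm $\pl \phi_{s,j^*}(Q_{n,s}^r-Q_{0,n,s}^r)\pl_{1,P_0^r}$, sandwich it between the geometric lower bound $\delta\,(\bar P_0^r(s))^{(|s|+1)/|s|}$ and the analytic upper bound $(\bar P_0^r(s))^{\alpha}\,O_P(r(n))$ to solve for $\bar P_0^r(s)$, then bound $R_n(s)$ by the second norm inequality and plug in. The one point you flag as the main obstacle---the passage from the assumed rate on $\tilde Q_{n,s}^r-\tilde Q_{0,n,s}^r$ to a rate on the sections $Q_{n,s}^r-Q_{0,n,s}^r$---is also glossed over in the paper's proof (and discussed only heuristically in the remark following the theorem); note, however, that your identification $\tilde Q_s = Q_s - Q_s(0)$ is only valid for $|s|=1$, so for the multi-index case you should simply treat the section rate as part of the third-bullet assumption rather than try to derive it from the intercept difference alone.
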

{\bf How does the bound on $R_n$ improve if we would have supnorm convergence:}
Suppose that $\pl Q_n^r-Q_{0,n}^r\pl_{\infty}=O_P(r(n))$ as well. Then we can select $\alpha=1$, so that for we obtain  $R_n(s)=O_P(r(n)^{1+\mid s\mid})\approx n^{-2/3}$ (even for $\mid s\mid =1$). 

{\bf Bounding assumption:}
In this theorem we assumed that for some $s$ $\pl \tilde{Q}_{n,s}^r-\tilde{Q}_{0,n,s}^r\pl_{P_0^r}$ can be bounded by $\pl Q_n^r-Q_{0,n}^r\pl_{P_0^r}$. 
$P_0^r$ describes a random variable on a cube $[0,\tau^r]\subset \openr^{d^r}$. In our example, this would be the distribution of $W^r$. In many applications, one might artificially truncate the covariate space from below and above so that is values are in a cube $[0,\tau^r]$.  In that case, the $s$-specific   edges $E_s=[0_s,\tau^r_s]\times \{0_{-s}\}$ of $[0,\tau^r]$ would have positive mass under $P_0^r$. Then, $\pl Q_n^r-Q_{0,n}^r\pl_{P_0^r}=
\sum_{s} \int_{E_s}(Q_n^r-Q_{0,n}^r)^2(u_s,0_{-s}) dP_{0}^r(u_s,0_{-s})$.
So, in that case, $\pl Q_n^r-Q_{0,n}^r\pl_{P_0^r}=O_P(r(n))$ would imply that the $L^2(P_0^r)$-norm of the difference of the $s$-specific sections, $Q_{n,s}^r-Q^r_{0,n,s}$, converge at same rate. This would naturally imply the same rate for the $s$-specific generalized differences $\tilde{Q}^r_{n,s}-\tilde{Q}^r_{0,n,s}$ of the sections, and thereby verify this bounding condition.
We suspect that this bounding assumption will apply to continuous $P_0^r$ as well (i.e., no mass on the edges $E_s$). Our reasoning is based on  $Q^r_n-Q^r_{0,n}=\sum_s(\tilde{Q}^r_{n,s}-\tilde{Q}^r_{0,n,s})$, and that for $s\not s_1$, the two sets of basis functions in $Q^r_{0,n,s_1}$ and $Q^r_{0,n,s_2}$, respectively,  are largely independent. That is, $Q^r_{0,n}$ represents an additive model (e..g, GAM), where each component $Q^r_{0,n,s}$ is identifiable from the total sum function ( by our definition of $\tilde{Q}^r_s$). It is true that, for example, basis functions $I(X_1>c_1)I(X_2>c_2)$ across knot points $(c_1,c_2)$ (i.e, $s_2=\{1,2\}$) can approximate $I(X_1>c_1)$ (i.e, $s_1=\{1\}$) by letting $c_2\approx 0$, but the $L^1$-norm (i.e., contribution to the  variation norm of $\tilde{Q}^r_{0,n,s}$)  of this small vector of coefficients represent a negligible proportion of the full $L^1$-norm (i.e, full  variation norm of $\tilde{Q}^r_{0,n,s}$) of the coefficients making up $Q^r_{0,n,s}$. 


{\bf Condition (\ref{sens1}):}
Consider one of the subsets $s\in {\cal S}_1$. Note that the numerator in (\ref{sens1}) is the  $L^1(P_0^r)$-norm 
$\pl \phi_{s,j^*}(Q_{n,s}^r-Q_{0,n,s}^r)\pl_{1,P_0^r}$. 
This follows since $\phi_{s,j^*}(u)=I(u\geq u_{s,j^*})$, and since $Q_{n,s}^r-Q_{0,n,s}^r$ is only a function of $W^r(s)$, the expectation w.r.t. $P_0^r$ becomes an expectation w.r.t. its marginal $P^r_{0,s}$.  
One expects that $\phi_{s,j^*}(u)=1$ for all $u\geq u_{s,j^*}$ for most of the basis functions with $\beta_n^r(s,j)\not =0$. So only a few basis functions will have some variation over $u>u_{s,j^*}$. For example, if $s$ is a singleton,  then, for all  $u\geq u_{s,j^*}$, we have $Q_{n,s}^r(u)=Q_{n,s}^r(u_{s,j^*})$ is constant in $u$, due to all basis functions in $Q_{n,s}^r$ being $1$ at such a $u$ (i.e., there are no basis functions $\phi_{s,j}$  in $Q_{n,s}^r$ with knot points larger than $u_{s,j^*}$). So in that case $(Q_{n,s}^r-Q_{0,n,s}^r)(u)=(Q_{n,s}^r(u_{s,j^*})-Q_{0,n,s}^r(u))$ for all $u\geq u_{s,j^*}$.  
Over a cube $A$ in an $\mid s\mid$-dimensional space, the variation in each of the $\mid s\mid$ coordinates is $A^{1/\mid s\mid}$, assuming that the sides of the cube are proportional to each other.
So, $\max_{k\in s}(\tau_s(k)-u_{s,j^*}(k))$  behaves as $(\bar{P}_0^r(s)^{1/\mid s\mid}$. Thus, the integral of $(Q_{0,n,s}^r(u_{s,j^*})-Q_{0,n,s}^r(u)$ behaves as 
$\bar{P}_0^r(s)^{1+1/\mid s\mid}=(\bar{P}_0^r(s))^{(\mid s\mid +1)/\mid s\mid}$.

{\bf Proof of Theorem \ref{thsectnorm}:}
Consider one of the sets $s\in {\cal S}_1$. 
By assumption we have $\pl \phi_{s,j^*}(Q_{n,s}^r-Q_{0,n,s}^r\pl_{1,P_0^r}\geq \{\bar{P}_0^r(s)\}^{\frac{\mid s\mid +1}{\mid s\mid}}$.
Now, use that $\pl \phi_{s,j^*} (Q_{n,s}^r-Q_{0,n,s}^r)\pl_{P_0^r}\leq \bar{P}_0^r(s)^{1-\alpha}\pl Q_{n,s}^r-Q_{0,n,s}^r\pl_{P_0^r}$.
This gives then
\[
\pl Q_{n,s}^r-Q_{0,n,s}^r\pl_{P_0^r}\geq   \bar{P}_0^r(s)^{\frac{\alpha\mid s\mid+1}{\mid s\mid}}
\]
Since the left-hand side is $O_P(r(n))$, this then shows that
\[
\bar{P}_0^r(s)=O_P\left ( r(n)^{\frac{\mid s\mid}{\alpha\mid s\mid+1} }\right).\]

Consider now the term $R_n(s)$ and note we can bound this by $\bar{P}_0^r(s)^{1-\alpha} \pl Q_n^r-Q_{0,n}^r\pl_{P_0^r}$. Combining the bound on $\bar{P}_0^r(s)$ above and $\pl Q_n^r-Q_{0,n}^r\pl_{P_0^r}=O_P(r(n))$, gives then
\[
R_n(s)=O_P \left ( r(n)^{-\frac{1+\mid s\mid}{1+\alpha\mid s\mid}}\right).\]
This implies the bound for $R_n$ by minimizing the latter over all $s\in {\cal S}_1$.
$\Box$

\section{Coordinate-Transformation for NIE} \label{sec:appendix_med}

In this section, we show that it is sufficient to have a two-dimensional coordinate-transformation for the purpose of estimation and conducting influence-curve based inference.

For $O = (W, A, Z, Y) \sim P_0$ and any $P \in \mathcal{M}$, we have 
\begin{align*}
    \frac{p_Z(Z | A = a', W)}{p_Z(Z | A = a, W)} = & \frac{p(A = a' | Z, W) p(Z, W)}{p(A = a' | W) p(W)} \cdot 
    \frac{p(A = a | W) p(W)}{p(A = a | Z, W) p(Z, W)} \\
    = & \frac{p(A = a' | Z, W)}{p(A = a' | W)} \cdot \frac{p(A = a | W)}{p(A = a | Z, W) } \\
    \frac{\indicator{A = a}}{p_A(A = a | W)} \cdot \frac{p_Z(Z | A = a', W)}{p_Z(Z | A = a, W)}
    = & \frac{\indicator{A = a} p(A = a' | Z, W)}{p(A = a' | W) p(A = a | Z, W)}
\end{align*}

Let 
\begin{align*}
    G(A | W) = & p(A | W) \\
    \gamma(A| Z, W) = & p(A | Z, W), \\
    Q_Y(P)(Z, W) = & \ex_P\newbracket{Y | Z, A = a, W}, \\
    Q_Z(P)(W) = & \ex_P\newbracket{ Q_Y(P)(Z, W) |A = a', W}, 
\end{align*}
then (see also \citep{zheng2017longitudinal} and \cite{wang2023targeted})
\begin{align*}
    D^*_Y(P) = & \frac{\indicator{A = a}}{G(A = a' | W)} \frac{\gamma(A = a' | Z, W)}{\gamma(A = a | Z, W)} \newbrace{Y - Q_Y(P)} \\
    D^*_Z(P) = & \frac{\indicator{A = a'}}{G(A = a' | W)} \newbrace{Q_Y(P) - \ex_P\newbrace{ Q_Y(P) | A = a', W}}. 
\end{align*}

Therefore, we can define a dimension-reduced dataset, which constructs a coordinate-transformation from the original data $O$ to 
$$O^r = (W^r, A, Z^r, Y), $$
where $W^r(W) = \newparethensis{P(A = a | W), Q_Z(P)(W)}$ and $Z^r(W, Z) = \newparethensis{P(A = a | Z, W), Q_Y(P)(Z, W)}$. Note that $O^r$ has the same data structure as $O$. For the data-adaptive version with $V$-fold cross-validation, we have
\begin{align*}
    D^r_Y(P^r)(v, O^r(v, O)) = & \frac{\indicator{A = a}}{G^r_v(a' | W^r)} \frac{\gamma^r_v(a' | Z^r(Z, W))}{\gamma^r_v(a | Z^r(Z, W))} \newbrace{Y - Q^r_Y(Z^r, A = a, W^r)} \\
    D^r_Z(P^r)(v, O^r(v, O)) = & \frac{\indicator{A = a'}}{P(A = a' | W)} \newbrace{Q^r_Y(Z^r, A = a, W^r) - Q^r_Z(A = a', W^r)}. 
\end{align*}
It can be verified that 
\begin{align*}
    D^{r}(G^r, \gamma^r, Q_Z^r, Q_Y^r)(v, O^r(v, O)) = & D^{*}(G^r_v, \gamma^r_v, Q^r_Z \circ W^r_v, Q^r_Y \circ Z^r_v). 
\end{align*}

In Section \ref{sec:data}, $W=W^r=\emptyset$, and therefore only a 2-dimensional coordinate-transformation $Z \mapsto Z^r = \newparethensis{P(A = a | Z, W), Q_Y(P)(Z, W)}$ is required for each transfer learning based model $P$. For example, $Q_Y(P)(Z, W)(v, O)$ is an estimated function of the conditional expectation of $Y$ given $Z$ and $A = a$ trained on the $v$-th training sample, which can be a transfer learning application \citep{raina2007self,zeiler2014visualizing,donahue2014decaf,chen2019med3d} of the pretrained model with only the last layer re-trained for predicting the new outcome $Y$.

 \end{appendix}

\bibliographystyle{unsrtnat} 

\bibliography{ref}

\end{document}